\newif\iflong
\newcommand*\squeezespaces[1]{
  \thickmuskip=\scalemuskip{\thickmuskip}{#1}%
  \medmuskip=\scalemuskip{\medmuskip}{#1}%
  \thinmuskip=\scalemuskip{\thinmuskip}{#1}%
  \nulldelimiterspace=#1\nulldelimiterspace
  \scriptspace=#1\scriptspace
}
\newcommand*\scalemuskip[2]{%
  \muexpr #1*\numexpr\dimexpr#2pt\relax\relax/65536\relax
} 
\newif\ifediting\editingtrue
\newif\ifcamera\camerafalse
\newcommand{\adverse}{b}
\newcommand{\mempool}{\mathsf{mempool}}
\newcommand{\negl}{\mathsf{negl}}
\newcommand{\ek}[1]{\mathrm{ek}_{#1}}
\newcommand{\vk}[1]{\mathrm{vk}_{#1}}
\newcommand{\alg}{\mathcal{A}}
\newcommand{\fl}{the full version}
\newcommand{\dif}{X}
\newcommand{\nce}{\mathsf{nce}}
\newcommand{\secp}{\lambda}
\newcommand{\orcl}{\mathcal{H}}
\newcommand{\func}{\mathcal{F}}
\newcommand{\eps}{\varepsilon}
\newcommand{\distas}[1]{\mathbin{\overset{#1}{\kern\z@\sim}}}%
\newcommand{\setup}{\mathtt{Setup}}
\newcommand{\eval}{\mathtt{Eval}}
\newcommand{\veri}{\mathtt{Verify}}
\newcommand{\sX}{\mathcal{X}}
\newcommand{\sY}{\mathcal{Y}}
\newcommand{\poly}{\text{poly}}
\newcommand{\perm}{\mathcal{P}}
\newcommand{\qq}[1]{{\nb{Quanquan: #1}}}
\newcommand{\qq}[1]{}
\newtheorem{theorem}{Theorem}
\newtheorem{lemma}[theorem]{Lemma}
\newtheorem{proposition}[theorem]{Proposition}
\theoremstyle{definition}
\newtheorem{definition}[theorem]{Definition}
\newtheorem{remark}[theorem]{Remark}
\newtheorem{assumption}[theorem]{Assumption}
\newtheorem{property}[theorem]{Property}
\newcommand{\reals}{{\mathbb{R}}}
\newcommand{\naturals}{{\mathbb{N}}}
\definecolor{ForestGreen}{rgb}{0, 0.5, 0}
\newcommand{\blue}{\color{blue}}
\newcommand{\nb}[1]{{\sf\blue[#1]}}
\newcommand{\nbo}[1]{{\sf\color{orange}[#1]}}
\newcommand{\ls}[1]{\nbo{LS: #1}}
\newcommand{\Expect}{\mathbb{E}}
\newcommand{\expect}[1]{\mathbb{E}\left[ #1 \right]}
\newcommand{\Prob}{\mathbb{P}}
\newcommand{\prob}[1]{ \mathbb{P}\left\{ #1 \right\} }
\newcommand{\Binom}{{\rm Binom}}
\newcommand{\iid}{i.i.d.\xspace}
\newcommand{\pth}[1]{\left( #1 \right)}
\newcommand{\sth}[1]{\left\{ #1 \right\}}
\newcommand{\abth}[1]{\left | #1 \right |}
\newcommand{\indc}[1]{{\mathbbm{1}{\left\{{#1}\right\}}}}
\newcommand{\calA}{{\mathcal{A}}}
\newcommand{\calC}{{\mathcal{C}}}
\newcommand{\calJ}{{\mathcal{J}}}
\newcommand{\calN}{{\mathcal{N}}}
\newcommand{\calW}{{\mathcal{W}}}
\newcommand{\calZ}{{\mathcal{Z}}}
\newcommand{\nn}[1]{\nb{NN: #1}}
\newcommand{\myparagraph}[1]{\noindent {\bf #1.}}
\title{The Power of Random Symmetry-Breaking \\in Nakamoto Consensus}
\author{
Lili Su \thanks{Correspondence author. Email: \url{l.su@northeastern.edu}.  }\\
ECE \\
Northeastern University  
\and 
Quanquan C. Liu\\
CSAIL\\
Massachusetts Institute of Technology
\and 
Neha Narula \\
MIT Media Lab \\ 
Massachusetts Institute of Technology 
}
\author{Lili Su}{Northeastern University}{l.su@northeastern.edu}{}{} 
\author{Quanquan C. Liu}{Massachusetts Institute of Technology}{quanquan@mit.edu}{}{} \author{Neha Narula}{Massachusetts Institute of Technology}{narula@media.mit.edu}{}{}
\authorrunning{L. Su, Q. C. Liu, N. Narula} 
\keywords{Nakamoto consensus, Byzantine consensus, blockchain, symmetry-breaking, coalescing random walks}
\begin{document}
\sloppy
\maketitle

\begin{abstract}
Nakamoto consensus underlies the security of many of the world's largest cryptocurrencies, such as Bitcoin and Ethereum.
Common lore is that Nakamoto consensus only achieves consistency and liveness under a regime where the difficulty of its underlying mining puzzle is very high, negatively impacting overall throughput and latency. In this work, we study Nakamoto consensus under a wide range of puzzle difficulties, including very easy puzzles. 
We first analyze an adversary-free setting 
and show that, surprisingly, the common prefix of the blockchain grows quickly even with easy puzzles. 
In a setting with adversaries, we provide a small backwards-compatible change to Nakamoto consensus to achieve consistency and liveness with easy puzzles. 
Our insight relies on a careful choice of \emph{symmetry-breaking strategy}, which was significantly underestimated in prior work. 
We introduce a new method---\emph{coalescing random walks}---to analyzing the correctness of Nakamoto consensus under the uniformly-at-random symmetry-breaking strategy. 
This method is more powerful than existing analysis methods that focus on bounding the number of {\it convergence opportunities}. 
\end{abstract}

\section{Introduction}

Nakamoto consensus~\cite{bitcoin}, the elegant blockchain protocol that underpins many
cryptocurrencies, achieves consensus in a setting where nodes can join and leave the system without getting permission from a centralized authority.
Instead of depending on the identity of nodes, it achieves consensus by incorporating computational puzzles called 
proof-of-work~\cite{dwork} (also known as \emph{mining}) and using a simple longest-chain protocol.\footnote{We use "longest chain" to mean the one with the most proof-of-work given difficulty adjustments, not necessarily the one with the most blocks, though without considering difficulty adjustments they are the same.}
Nodes in a network maintain a
local copy of an append-only ledger and gossip messages to add to the ledger,
collecting many into a block. A block consists of the set of records
to add, a pointer to the previous block in the node's local copy of the ledger, and a
nonce, which is evidence the node has done proof-of-work, or solved a computational puzzle of sufficient difficulty, dependent on the block. 
The node then broadcasts its local chain to the network. Honest
nodes choose a chain they see with the most proof-of-work to
continue building upon.

Previous work defined correctness and liveness in proof-of-work
protocols (also referred to as the \emph{Bitcoin backbone}) using
three properties: \emph{common-prefix}, \emph{chain-quality}, and
\emph{chain-growth}~\cite{garay15,GKL17,PSS17}. Informally, common-prefix indicates that any two
honest nodes share a common prefix of blocks, chain-growth is the rate
at which the common prefix grows over time, and chain-quality represents the
fraction of blocks created by honest nodes in a chain. 
In previous work, achieving these properties critically relied on the
setting of the difficulty factor in the computational puzzles. We
express this as $p$, the probability that any node will solve the
puzzle in a given round. Previous work analyzing Nakamoto consensus
has shown that for consistency and liveness $p$ should be very small
in relation to the expected network delay and the number of nodes~\cite{garay15,PSS17}.
%
For example, mining difficulty in Bitcoin is set so that the network
is only expected to find a puzzle solution roughly once every ten
minutes.


Requiring a small $p$ increases block time, removing a parameter for improving transaction throughput. One way to compensate is by increasing block size, which could result in burstier network traffic and longer transaction confirmation times for users.
%
%
Newer chains which do not use proof-of-work seem to favor short block times, probably because users value a fast first block confirmation: in EOS, blocks are proposed every 500 milliseconds~\cite{eos-block-time} and Algorand aims to achieve block finality in 2.5 seconds~\cite{algorand-finalize}, whereas in Bitcoin blocks only come out every ten minutes.

Common belief is that larger $p$ fundamentally 
constrains chain growth (i.e., the growth of the common prefix), even in the absence of an adversary,
due to the potential of increased \emph{forking}: nodes will find puzzle solutions (and thus blocks) at the same time; because of the delay in hearing about other nodes' chains nodes will build on different chains, delaying agreement.
Another common conjecture, explicitly mentioned in \cite{garay15}, is that the choice of 
\emph{symmetry-breaking strategies}, or ways honest nodes choose among multiple longest chains, is not relevant to correctness.

In this paper, we show that these common beliefs are incorrect. In particular, 
we show that when $p$ is beyond the well-studied region even the simple strategy of choosing among chains of equal length randomly
fosters chain growth, especially in the absence of adversaries. 
\smallskip

\noindent\textbf{Contributions.}
In this work, we formally analyze Nakamoto consensus under a wide range of $p$ including large $p$. 
%
We confirm previous (informal) analysis that Nakamoto consensus requires
small $p$ in the presence of adversaries, but show that surprisingly,
it does not in a setting without adversaries, even if $p=1$ (all nodes
mine blocks every round) with a minor change in nodes' symmetry-breaking strategy. Previous work assumed the requirement of \emph{convergence opportunities}, a period when only one honest node mines a block, in order to achieve consistency~\cite{PSS17,KRS18}; we show that in fact convergence opportunities are \emph{not} required for common-prefix and chain growth. 
With an additional backwards-compatible modification to Nakamoto consensus, we can derive a bound on the chain growth for a wider range of $p$ (including large $p$)
in a setting with adversaries. 
Our key idea in this modification is to
introduce a \emph{verifiable delay function}~\cite{BBBF18} to prevent the adversaries from extending a chain by multiple blocks in a round. 
%
Our analysis is based on a new application of a well-known technique,
coalescing random walks. To our knowledge this is the first
application of coalescing random walks to analyze the common-prefix
and chain quality of Bitcoin and other proof-of-work protocols.
We thoroughly analyze Nakamoto consensus with the \emph{uniformly-at-random} symmetry-breaking strategy and discuss different symmetry-breaking strategies including \emph{first-seen}, \emph{lexicographically-first}, and \emph{global-random-coin}.

In summary, our contributions are as follows:

\begin{itemize}

\item A new approach for analyzing the confirmation time of the Bitcoin protocol under the uniformly-at-random
symmetry-breaking strategy in the adversarial-free setting via \emph{coalescing random walks}. Our analysis works for a new region of $p$, and shows that previous works' requirement for \emph{convergence opportunities} was unneeded.

\item New notions of \emph{adversarial advantages} and {\em coalescing opportunities} to provide a more general analysis of common-prefix and chain growth in Nakamoto consensus in the presence of adversaries. 

\end{itemize}

\noindent\textbf{Related Work.}
Proofs-of-work were first put forth by Dwork and Naor~\cite{dwork}. Garay, Kiayias, and Leonardas~\cite{garay15} provided the first thorough analysis of Nakamoto's protocol in a synchronous static setting, introducing the ideas of \emph{common-prefix}, \emph{chain quality} and \emph{chain growth}. Later work~\cite{GKL17} extended the analysis to a variable difficulty function. Pass, Seeman, and shelat~\cite{PSS17} extended the idea of common-prefix to \emph{future self-consistency}, and provided an analysis of Nakamoto consensus in the semi-synchronous setting with an adaptive adversary. Several additional papers used this notion of future self-consistency~\cite{KRS18,ZTLWLX20}. \cite{PSS17,KRS18} relied on \emph{convergence opportunities}, or rounds where only one node mines a block, to analyze chain growth. In this work we show that convergence opportunities are \emph{not} required for chain growth, and relying on them underestimates chain growth with high $p$; in the adversary-free setting we show chain growth even with $p=1$ (no convergence opportunities; all nodes mine a block every round).
Other work considered the tradeoffs between chain growth and chain quality~\cite{PS17,PSS17,KP15,GKL17,ZP19}; however, to the best of our knowledge, none of these works considered different symmetry breaking strategies to enable faster chain growth while maintaining chain quality. In our paper, we thoroughly explore this domain.
Another line of work~\cite{eyal2014majority, sapirshtein2016optimal} considers how the uniformly-at-random symmetry breaking strategy affects incentive-compatible selfish mining attacks; our analysis applies to general attacks. 

Random walks have been used 
to analyze the probability of
consistency violations in proofs-of-stake protocols~\cite{BKMQR20}; 
ours is the first work that uses coalescing random walks to analyze the common-prefix and chain quality of Bitcoin and other proof-of-work protocols.

\section{Model and Definitions}
\label{sec: model and def}
In this section, we present the specific model we use 
and briefly describe the Bitcoin cryptosystem. 
We follow 
the formalization 
presented in~\cite{GKL17,KRS18,PSS17}. 

\medskip

\myparagraph{Network and Computation Model}
Following previous work~\cite{garay15,GKL17,GKL20,PSS17,Ren19,ZTLWLX20},
we consider a synchronous network 
where nodes send messages in synchronous rounds, i.e., $\Delta=1$; equivalently, there is a global clock and the time is slotted into equal duration rounds.  
Each node has identical computing power. 
Notably, the synchronous rounds assumption is significantly more relaxed than assuming $\Delta=0$.\footnote{In fact, the analysis based on Poisson race \cite{nakamoto2008bitcoin,bagaria2019prism} essentially assumes all mined blocks can be ordered in a globally consistent way, i.e., $\Delta=0$, which does not hold in our synchronous network model.} 
Our model operates
in the \emph{permissionless setting}. 
This means that any miner can join (or leave) the 
protocol execution without getting permission from a centralized or distributed authority. 
For ease of exposition, we assume the number of participants remains $n$.  
Our results can be easily generalized to handle perturbation in the population size 
by a stochastic dominance argument 
as long as the population size does not deviate too far from $n$, and the proportion of Byzantine participants does not increase due to the perturbation. 

\medskip

\myparagraph{Adversary Model}\label{sec:adversary}
Throughout this paper, we assume that all Byzantine nodes are controlled by a \emph{probabilistic polynomial time (PPT) adversary} $\calA$ that can coordinate the behavior of all such nodes. $\calA$ operates in PPT which means they have access to random coins but can only use polynomial time to perform computations. At any time during the run of the protocol, $\calA$ can corrupt up to $\adverse$ nodes at any point in time where $\adverse$ 
is a parameter that 
is an input to the protocol.
The corrupted nodes remain corrupted for the remainder of the protocol.
Finally, $\calA$ cannot modify or delete the messages 
sent by honest nodes, but 
can read all messages sent over the network and arbitrarily order the messages received by any honest nodes.

\subsection{Bitcoin Cryptosystem}
A \emph{blockchain protocol} is a stateful algorithm wherein each node  
maintains a local version of the blockchain $\calC$. 
Each honest node runs its own homogeneous version of the blockchain protocol. 
Nodes receive messages from the \emph{environment} $\calZ(1^\secp)$, where $\secp$ is the security parameter chosen based on the population size $n$. 
The environment is responsible for all the external factors related
to a protocol's execution. For example, it provides the value of $\adverse$ to the nodes. 
Detailed description of the environment can be found in~\cite{PSS17}. 

The protocol begins by having the environment $\calZ$ initialize $n$
nodes. The protocol proceeds
in synchronous rounds; at each round $r$, each node 
receives a message from $\calZ$. 
In each round, an honest node attempts
to mine a block containing its message to add to its local chain.
We provide formal definitions of the Bitcoin
cryptosystem below.

\vskip 0.5\baselineskip
\noindent{\bf Blocks and Blockchains}\\
A blockchain $\calC~\triangleq~B_0B_1B_2\cdots B_{\ell}$ for some $\ell\in \naturals$ is a chain of blocks. Here $B_0$ is a predetermined \emph{genesis block} that all chains
must build from. 
A {\em block} $B_{\ell}$, for $\ell\ge 1$, is a triple $B_{\ell} = \langle s, x, \nce\rangle,$ 
where $s, x, \, \nce \, \in \sth{0,1}^{\ast}$ are three binary strings of arbitrary length. 
Specifically, $s$ is used to indicate this block's 
predecessor, $x$ is the text of the block containing the 
message (e.g.\ transactions) and other metadata, and $\nce$ is a \emph{nonce} chosen by a node. 

\vskip 0.5\baselineskip 
\noindent{\bf Proofs-of-Work}\\
The Bitcoin cryptosystem crucially uses nonces
as \emph{proofs-of-work} for determining whether a block can
be legally added to a chain.\footnote{Note that in practice, the nonce is effectively concatenated with a miner's public key (included in the \emph{coinbase} transaction) to ensure unique queries.  
The public key does \emph{not} need to be verified.
Importantly, this means that the miner can just generate a 
$(pk, sk)$ pair on their local computer without the need to 
verify that identity with a third-party authority.} 
Proof-of-work (PoW) is rigorously defined in previous work~\cite{garay15,GKL17,GKL20,PSS17,Ren19,ZTLWLX20} based on the use of the \emph{random oracle model}. 

\begin{definition}[Random Oracle Model]\label{def:rom}
A random oracle $\orcl: \{0, 1\}^* \rightarrow 
\{0, 1\}^\secp$ on input $x \in \{0, 1\}^*$ outputs a value selected
uniformly at random from $\{0, 1\}^\secp$ if $x$ has never been
queried before. Otherwise, it returns the previous value returned when $x$ was queried last. 
\end{definition}
%
%
%
\begin{definition}[Bitcoin PoW]\label{def:pow}
All nodes access a common random oracle $\orcl: \{0, 1\}^* \rightarrow \{0, 1\}^\secp$.
We say a node successfully performs a PoW with \emph{proof} $x\in \{0, 1\}^*$ if $\orcl(x) \leq D$.
\end{definition}
%


\begin{definition}[Valid Chain]\label{def:validity}
A blockchain $\calC=B_0B_1\cdots B_{\ell} = B_0\langle s_1, x_1, \nce_1\rangle \cdots \langle s_\ell, x_\ell, \nce_\ell\rangle$ is \emph{valid} with respect to a given puzzle difficulty level $D\in \{1, \cdots, 2^{\lambda}\}$ if the following hold: 
(1) $\orcl(B_0) = s_1$ and $\orcl\pth{B_{\ell^{\prime}}} = s_{\ell^{\prime}+1}$ for $\ell^{\prime} = 1, \cdots, \ell-1$; and 
(2) $\orcl\pth{B_{\ell^{\prime}}} \le D$ for $\ell^{\prime} = 0, \cdots, \ell$. 
\end{definition}

\noindent{\bf Longest Chain Rule}\\
The length of a valid chain $C$ is the number of blocks it contains.  We refer to the local version of the blockchain kept by node $i$ as the local chain at node $i$, denoted by $\calC_i$. 
In each round $r$, node $i$ tries to mine a block via solving a PoW puzzle with the specified difficulty $D$. If a block is successfully mined, then node $i$ extends its local chain with this block and broadcasts its updated local chain to all other nodes in the network, which will be delivered at each node at the beginning of the next round. At the beginning of the next round, before working on PoW, node $i$ updates its local chain to be the longest chain it has seen. If there are many longest chains, node $i$ chooses one of them uniformly at random. 

For ease of exposition, henceforth, $C_i$ is referred to the local chain at the end of a round; $C_i(t)$ is the local chain of node $i$ at the end of round $t$. 
%
%
Equivalent to using the difficulty parameter $D$, one can instead
consider $p\triangleq D/2^{\lambda}$. 
The notion of $p$ used in lieu of $D$ has
been considered in~\cite{garay15,GKL17,GKL20,KRS18,PSS17,Ren19} to simplify notation. Henceforth, we will quantify the algorithm performance in terms of $p$ rather than $D$ and $\lambda$.   

We use the phrase \emph{with overwhelming probability} throughout this paper. \emph{With
overwhelming probability} is defined as with probability 
at least $1 - \frac{1}{\poly(\secp)^c}$ for any constant $c \geq 1$.
We use the phrase \emph{with all but negligible 
probability in $\secp$} to mean that the probability
is upper bounded by some negligible function $\nu(\secp)$ 
on $\secp$ (defined in Definition~\ref{def:negl-prob}).

\begin{definition}[Negligible Probability]\label{def:negl-prob}
A function $\nu$ is \emph{negligible} if for every polynomial $p(\cdot)$, there exists an $N$ such that for all integers $n > N$, it holds that $\nu(n) < \frac{1}{p(n)}$. We denote such a function by $\negl$. An event that occurs
with \emph{negligible probability} occurs with probability $\negl(n)$.
\end{definition}

\subsubsection{Properties of the Protocol}

In this paper, we will analyze the Nakamoto consensus in terms of two characteristics (generalized
from definitions
in~\cite{garay15,KRS18,ZTLWLX20}). 
The \emph{common prefix} is defined as a sub-chain that is a common prefix of  
the local chains of all honest nodes at the end of a round.
The two properties 
\emph{maximal common prefix} and \emph{maximal 
inconsistency} 
are defined
intuitively as: the maximal prefix that 
is the same across all honest chains and 
the maximal number of blocks in any honest chain that is not
shared by all other honest chains, respectively. 
\begin{property}[Maximal common-prefix and maximal inconsistency]
\label{def: inconsitent chain tips}
Given a collection of chains $\calC=\sth{\tilde{C}_1, \cdots, \tilde{C}_m}$ that are kept by honest nodes, the maximal common-prefix of chain set $\calC$, denoted by $P_{\calC}$, is defined as the longest common-prefix of chains $\tilde{C}_1, \cdots, \tilde{C}_m$. 
The maximal inconsistency  of $\calC$, denoted by $I_{\calC}$, is defined as 
\begin{align}
\label{def: inconsistency}
\max_{i: 1\le i \le m} \abth{\tilde{C}_i - P_{\calC}},
\end{align}
where $\tilde{C}_i - P_{\calC}$ is the sub-chain of $\tilde{C}_i$ after removing the prefix $P_{\calC}$ and $\abth{\cdot}$ denotes the length of the chain, i.e., the number of blocks in the chain.  
\end{property}

\section{Fundamental Limitations of Existing Approaches} 
\label{sec: fundamental limitation}
To the best of our knowledge, existing work assumes extremely small $p$. 
In fact, the seemingly mild {\em honest majority assumption} in \cite{garay2015bitcoin,pass2017analysis} also implicitly assumes small $p$. 

\begin{proposition}
\label{prop: honest majory necessity}
If the {\em honest majority assumption} in \cite{garay2015bitcoin} holds, then 
$p \le \frac{n-2b}{2(n-b)^2}$. 
\end{proposition}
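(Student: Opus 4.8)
The plan is to unpack the honest-majority assumption of \cite{garay2015bitcoin}, instantiated in the flat single-query synchronous model of this paper ($q=1$ random-oracle query per party per round, $n$ parties, $b$ Byzantine, messages delivered by the start of the next round), and to observe that it is at bottom a quadratic inequality in $p$ that rearranges to the claimed bound. Write $\alpha := (n-b)p$ for the standard linear upper bound on the probability that some honest party mines in a round, and $\beta := bp$ for the expected number of adversarial proofs-of-work per round. The honest-majority assumption of \cite{garay2015bitcoin} compares the honest mining rate, discounted for within-delay-window collisions, against the adversarial rate, and in this instantiation it implies
\[
\beta \;\le\; \alpha\,(1 - 2\alpha)
\]
(the constant $2$ is the collision discount of \cite{garay2015bitcoin}; any additional multiplicative margin $1+\delta$ in front of $\beta$ only strengthens the hypothesis, so it suffices to work from the displayed inequality).

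Substituting $\alpha=(n-b)p$ and $\beta=bp$ yields $bp \le (n-b)p\bigl(1-2(n-b)p\bigr)$. Since $p>0$, dividing by $p$ gives $b \le (n-b) - 2(n-b)^2 p$, hence $2(n-b)^2 p \le (n-b)-b = n-2b$, which is precisely $p \le \dfrac{n-2b}{2(n-b)^2}$. As a byproduct, the hypothesis forces $n>2b$ (so the bound is positive) and $\alpha<1/2$ (so that the right-hand side of the displayed inequality is nonnegative), both of which are consistent with what "honest majority" should entail.

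There is essentially no analytic content here --- the derivation is a two-line manipulation --- so the only step requiring care is the first one: faithfully reducing the honest-majority assumption of \cite{garay2015bitcoin}, which is stated for a general query budget $q$, general network delay, and a general security margin $\delta$, to the single-query synchronous form used above, and checking that replacing $f = 1-(1-p)^{n-b}$ by its linear proxy $\alpha = (n-b)p$ loses nothing in the direction we need (it does not, since the assumption of \cite{garay2015bitcoin} is already phrased in terms of $\alpha$, and $f\le\alpha$). Once that translation is pinned down, the bound follows immediately.
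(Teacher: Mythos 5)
Your closing algebra is fine: from $bp \le (n-b)p\bigl(1-2(n-b)p\bigr)$ one does indeed get $p \le \frac{n-2b}{2(n-b)^2}$ in two lines. The problem is that the displayed inequality $\beta \le \alpha(1-2\alpha)$ is not the honest majority assumption of \cite{garay2015bitcoin}; it is a \emph{consequence} of it, and deriving that consequence is the entire content of the proposition. As stated in the paper (Assumption in Appendix B), the hypothesis is the pair of conditions $3f+3\epsilon < \delta \le 1$ and $b \le (1-\delta)(n-b)$, where $f$ is the probability that at least one honest party mines in a round and $\delta$ is the honest advantage. The second bullet gives only $\beta \le (1-\delta)\alpha$; to reach your inequality you still need $\delta \ge 2\alpha$, and that is exactly the step the paper's proof supplies: from $3f<\delta$ and $f \ge 1-(1-p)^{n-b}$ one gets $(n-b)\log\frac{1}{1-p} < \log\frac{1}{1-\delta/3}$, and then $(n-b)p < \frac{\delta/3}{1-\delta/3} \le \frac{\delta}{2}$ using $\delta\le 1$. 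The factor $2$ in the final denominator comes from this Taylor-expansion/logarithm estimate, not from a ``collision discount'' constant in \cite{garay2015bitcoin}.

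Your justification for skipping this step is also pointed the wrong way. You argue the linearization is harmless because ``$f\le\alpha$'' and the assumption ``is already phrased in terms of $\alpha$.'' It is phrased in terms of $f$, and the assumption \emph{upper}-bounds $f$ (via $3f<\delta$); to convert an upper bound on $f$ into an upper bound on $\alpha=(n-b)p$ you need a \emph{lower} bound on $f$ in terms of $\alpha$, which is the direction $f\le\alpha$ does not give you. So the claim that ``there is essentially no analytic content here'' is exactly backwards: the analytic content is the conversion $3f<\delta \Rightarrow (n-b)p<\delta/2$, after which combining with $\delta \le \frac{n-2b}{n-b}$ (from the second bullet) yields the stated bound. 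Your write-up assumes this conversion has already been done.
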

A formal statement of the honest majority assumption and the proof of Proposition \ref{prop: honest majory necessity}  
can be found in 
\iflong
Appendix \ref{sec: honest maj}. 
\else
\fl.
\fi
Note that the upper bound in this proposition is only a necessary condition. Having $p$ satisfy this condition does not guarantee protocol correctness.   
%
\begin{remark}\label{rmk: existing slow}
Proposition \ref{prop: honest majory necessity} implies that in the vanilla Nakamoto consensus protocol, 
unless $\frac{b}{n}$ is {\em non-trivially} bounded above from $\frac{1}{2}$, $p$ needs to be extremely low -- even much lower than the commonly believed $\Theta(\frac{1}{n})$. See 
\iflong
Appendix \ref{sec: honest maj}
\else
\fl.
\fi
for detailed arguments. 
\end{remark}
%
To the best of our knowledge, most of the existing analyses focus on bounding the number of ``convergence opportunities'', which for $\Delta=1$ is defined as the number of rounds in which {\em exactly } one honest node mines a block, and for general $\Delta$, it is defined as the global block mining pattern that consists of (i) a period of $\Delta$ rounds where no honest node mines a block, (ii) followed by a round where a single honest player mines a block, (iii) and, finally, another $\Delta$ rounds of silence from the honest nodes \cite{PSS17,KRS18}.  Obviously, guaranteeing sufficiently many convergence opportunities necessarily requires $p$ to be small; in the extreme case when $p=1$ there will be no convergence opportunities at all. 
%
An important insight from our results is that \emph{convergence opportunities are not necessary for common-prefix growth}. 
This is illustrated Fig.~\ref{fig:AD} which depicts the chain growth when there are 4 honest nodes and $p=1$. Each node mines a block every round and each is associated with a color. 
\begin{figure}
    \centering
    \includegraphics[width=10cm]{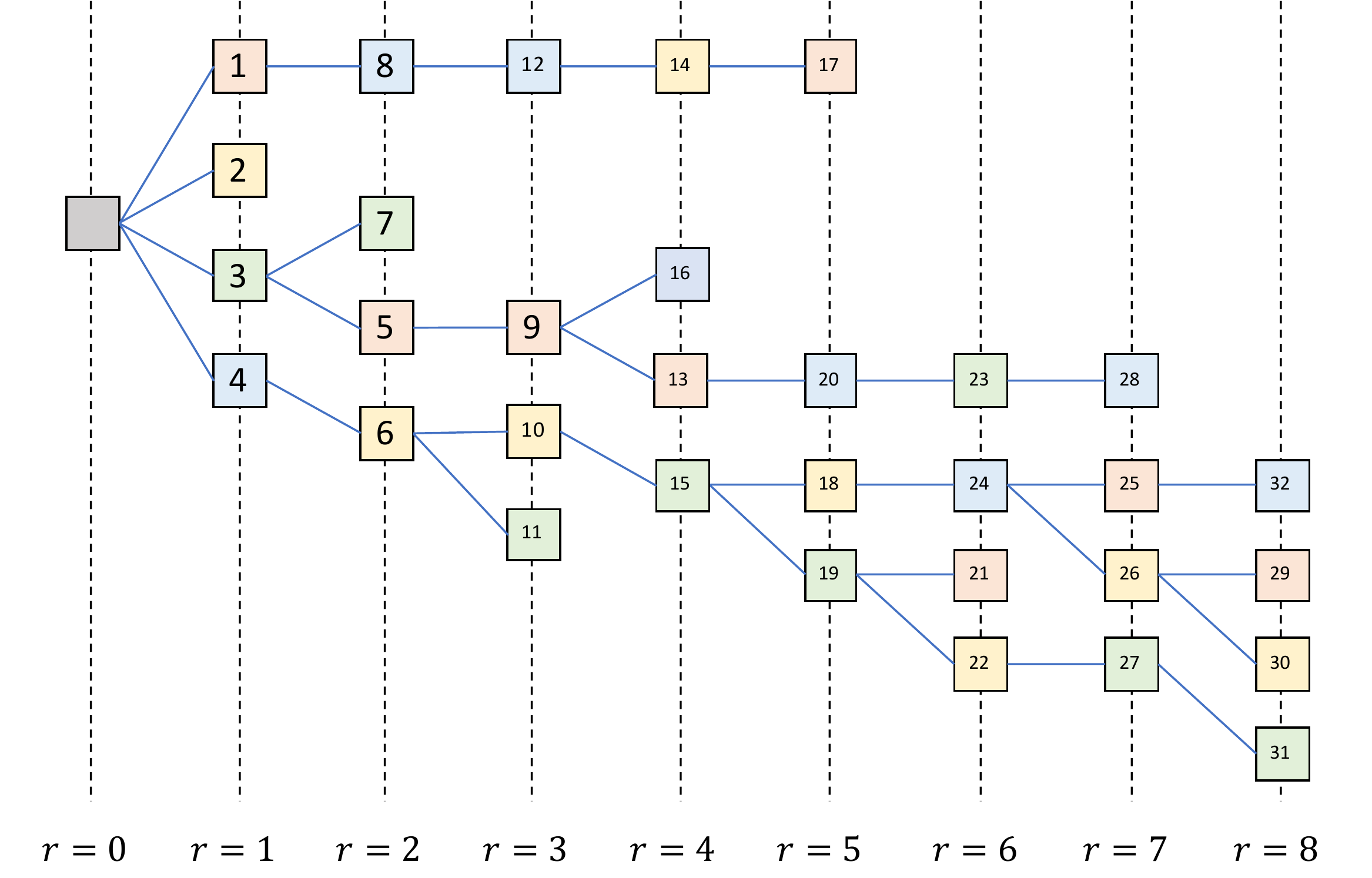}
    \caption{Example growth of a set of chains starting with
    the genesis block at round $r = 0$. Here, in this example
    $p = 1$, $n=4$, and $b=0$.}
    \label{fig:AD}
\end{figure}
In particular, blocks $1, 5, 9, 13, 17, 21, 25, 29$ are mined by the pink node, blocks $4, 8, 12, 16, 20, 24, 28, 32$ are mined by the blue node, etc. In each round, each node chooses one of the existing longest chains uniformly at random to extend. 
As shown in Fig.~\ref{fig:AD}, there are no convergence opportunities in any of these 8 rounds and the four nodes never choose the same chain to extend. However, instead of the trivial common prefix (the genesis block) the longest chains at the end of round 8 (the four chains ending with blocks 32, 29, 30, and 31, respectively) share the common prefix 
$\text{\em genesis} \to 4 \to 6 \to 10 \to 15. $
In general, as we show in Section \ref{sec: main results}, even for the extreme case when $p=1$, the common prefix of the longest chains still grows as time goes by. 

\section{Uniformly-at-Random Symmetry-Breaking Strategy} \label{sec: main results}
Bitcoin uses the \emph{first-seen} symmetry-breaking strategy; nodes will only switch to a new chain with more proof-of-work than their current longest chain. 
%
In this section, we investigate the power of the uniformly-at-random symmetry-breaking strategy, in which each honest node chooses one of its received longest chains uniformly at random to extend upon -- independently of other nodes and independently across rounds. 
We choose to start with the uniformly-at-random strategy because (1) it is easy to implement, especially in a distributed fashion, and (2) despite its simplicity, it is very powerful in fostering chain growth. 
%

For ease of exposition, we first present our results in the adversary-free setting (Sections \ref{sec: warmup} and \ref{subsec: general p and Adversary-free}) and then in the adversary-prone setting (Section \ref{subsec: general p and Adversary-prone}). 
%

\subsection{Warmup: $p = 1$ and Adversary-Free} \label{sec: warmup}
Even the adversary-free setting (i.e., $b=0$) is surprisingly non-trivial to analyze. Hence we build insights by first considering the simpler setting where $p=1$ as a warmup.

\begin{theorem}
\label{lm: p=1 inde random}
Suppose that $p=1$ and $b=0$. 
Then for any given round index $t\ge 1$, in expectation, the local chains at the honest nodes share a common prefix of length $t+1-O(n)$. 
\end{theorem}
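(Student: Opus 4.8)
The plan is to recast the statement as a bound on the \emph{coalescence time} of a coupon-collector-type Markov chain read off the block tree. Since $p=1$ and $b=0$, every honest node mines exactly one block per round, so by a trivial induction, at the end of round $h$ there are exactly $n$ longest chains, each of height $h$, one ending in the block that each node mined in round $h$. Crucially, at the start of round $h$ every node has received the same $n$ height-$(h-1)$ chains, and under the uniformly-at-random rule each node picks one of them independently and uniformly; hence the parent of each of the $n$ height-$h$ blocks is an independent uniform sample from the $n$ height-$(h-1)$ blocks. For $0\le h\le t$ let $A_h$ denote the set of height-$h$ blocks that are ancestors of at least one of the $n$ tips present at the end of round $t$, so $|A_t|=n$ and $|A_0|=1$; the parent map carries $A_h$ \emph{onto} $A_{h-1}$, so $|A_h|$ is nonincreasing as $h$ decreases. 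The common prefix of the honest chains is exactly the path from the genesis block to the deepest block at height $h^\star$ with $|A_{h^\star}|=1$, so its length is $h^\star+1 = t+1-J$ where $J := \min\{j\ge 0 : |A_{t-j}|=1\}$. It therefore suffices to show $\mathbb{E}[J]=O(n)$.

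Next I would identify the driving Markov chain. By the structural observation above, conditioned on $|A_h|=k$, the size $|A_{h-1}|$ is distributed as the number of distinct values among $k$ independent uniform samples from $\{1,\dots,n\}$; thus $Y_j := |A_{t-j}|$ is a Markov chain on $\{1,\dots,n\}$ with $Y_0=n$, absorbed at state $1$, and $J$ is its absorption time. (This is precisely coalescing random walk on the complete graph $K_n$: view the height-$h$ blocks as $n$ tokens, each of which jumps to a uniformly random site one level down, tokens landing on the same site merging.) To bound $\mathbb{E}[J]$ I would use the Lyapunov function $\Phi(k) = n\bigl(1-\tfrac1k\bigr)$, which satisfies $0\le\Phi\le n$, and prove the uniform drift bound
\[
  \mathbb{E}\bigl[\Phi(Y_{j+1}) \mid Y_j=k\bigr] \;\le\; \Phi(k) - \tfrac16 \qquad \text{for all } k\ge 2 .
\]
This follows by combining the deterministic inequality $\Phi(k)-\Phi(k') = n(k-k')/(kk') \ge n(k-k')/k^2$ with the elementary estimate $\mathbb{E}[k-Y_{j+1}\mid Y_j=k] = k - n\bigl(1-(1-\tfrac1n)^k\bigr) \ge \tfrac13\binom{k}{2}/n$, where the last step comes from expanding $(1-\tfrac1n)^k$, observing that $g(k):=k-n(1-(1-\tfrac1n)^k)=\sum_{i=2}^{k}(-1)^i\binom{k}{i}n^{1-i}$ is an alternating series with decreasing terms (using $k\le n$), and truncating after the first negative term; together these give $\mathbb{E}[\Phi(k)-\Phi(Y_{j+1})\mid Y_j=k] \ge (k-1)/(3k) \ge \tfrac16$.

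With the drift bound in hand, $M_j := \Phi(Y_{j\wedge J}) + \tfrac16(j\wedge J)$ is a nonnegative supermartingale; since $Y_t=|A_0|=1$ forces $J\le t$ almost surely, stopping at time $t$ gives $\tfrac16\mathbb{E}[J] = \mathbb{E}[M_t] \le M_0 = \Phi(n) \le n$, hence $\mathbb{E}[J]\le 6n$. Consequently the expected length of the common prefix is $t+1-\mathbb{E}[J] \ge t+1-6n = t+1-O(n)$, as claimed (when $t=O(n)$ the bound is vacuous, so nothing is lost).

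The conceptual heart of the argument is the structural observation, which is exactly what makes $p=1$ solvable in closed form: the backward ancestor sizes form a genuine Markov chain governed by uniform sampling, i.e.\ coalescing random walk on $K_n$. The main technical obstacle is the drift estimate, because the chain behaves in two very different ways that must be covered by a single potential: while $Y_j=\Theta(n)$ it collapses geometrically (a single step already loses a constant fraction of the tokens), whereas once $Y_j$ is small it coalesces only pairwise and needs $\Theta(n)$ further steps (from state $2$ the jump to $1$ has probability $1/n$, so the expected time spent there is $\Theta(n)$). The choice $\Phi(k)=n(1-1/k)$ works precisely because its increments $\Phi(k)-\Phi(k-1)\approx n/k^2$ track the typical per-step decrease $\approx k^2/n$, keeping the drift bounded below by a constant across both regimes; checking the alternating-series error terms and the boundary behavior near $k=n$ is routine. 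Finally, I would note that this clean Markov reduction is special to $p=1$: for $p<1$ the number of blocks per height fluctuates and the backward process is no longer Markov, which is why the later sections require more elaborate machinery.
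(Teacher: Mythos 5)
Your proposal is correct and its core reduction is the same as the paper's: with $p=1$ and $b=0$ every node sees the same $n$ longest chains each round and extends a uniformly random one, so the backward ancestor structure of the $n$ tips is exactly a system of coalescing random walks on the complete graph $K_n$, and the maximal inconsistency equals the coalescence time. Where you diverge is in how that coalescence time is bounded. The paper treats the $O(n)$ bound as a black box, citing the known result for complete graphs (its Theorem on coalescing random walks, from Aldous--Fill and Cooper et al.), and spends its effort on the correspondence itself via an explicit auxiliary process and a bijection of sample paths. You instead track only the lineage-counting chain $Y_j=|A_{t-j}|$ (number of surviving ancestral blocks), observe that its one-step transition is the occupancy distribution of $k$ balls in $n$ bins, and prove the $O(n)$ absorption time from scratch with the Lyapunov function $\Phi(k)=n(1-1/k)$ and optional stopping. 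This buys a fully self-contained, quantitative proof ($\mathbb{E}[J]\le cn$ with an explicit constant) at the cost of some drift computations; the paper's route is shorter but imports the hard estimate. Your potential $\Phi$ is essentially the standard one for this chain, and the two-regime discussion (geometric collapse for $k=\Theta(n)$ versus pairwise coalescence for small $k$) is exactly the right intuition.

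One small arithmetic slip: combining $\Phi(k)-\Phi(k')\ge n(k-k')/k^2$ with $\mathbb{E}[k-Y_{j+1}\mid Y_j=k]\ge\tfrac13\binom{k}{2}/n$ gives
\[
\mathbb{E}\bigl[\Phi(k)-\Phi(Y_{j+1})\mid Y_j=k\bigr]\;\ge\;\frac{n}{k^2}\cdot\frac{k(k-1)}{6n}\;=\;\frac{k-1}{6k}\;\ge\;\frac{1}{12},
\]
not $\tfrac{k-1}{3k}\ge\tfrac16$ (you dropped the $\tfrac12$ in $\binom{k}{2}$). The drift constant becomes $\tfrac1{12}$ and the final bound $\mathbb{E}[J]\le 12n$, which is still $O(n)$, so the theorem is unaffected.
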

%
%
\begin{remark}
In Theorem \ref{lm: p=1 inde random}, the expectation is taken w.\,r.\,t.\,the randomness in the symmetry breaking strategy. 
Theorem \ref{lm: p=1 inde random} says that 
large $p$ indeed boosts the growth of the common prefix among the local chains kept by the honest nodes, and that, though temporal forking exists among local chains kept by the honest nodes, such forking can be quickly resolved by repetitive symmetry-breaking across rounds. 
\end{remark}

The following definition and theorem are useful to see the intuitions of Theorem \ref{lm: p=1 inde random}. 
\begin{definition}[Coalescing Random Walks \cite{aldous2002reversible}\footnote{The original definition given in \cite{aldous2002reversible} assumes no self-loops, but its analysis applies to the graphs with self-loops.}]
\label{def:coalescing-random-walks}
In a coalescing random walk, a set of particles make independent random walks on a undirected graph $G = (V, E)$ with self-loops. 
Whenever one or more particles meet at a vertex, they unite to form a single particle, which then 
continues the random walk through the graph. 
We define the \emph{coalescence time}, denoted by $C_{G}$,   
to be the number of steps
required before all particles merge into one particle. 
\end{definition}
\begin{theorem}[\cite{aldous2002reversible} \cite{cooper2010multiple}]
\label{thm: coalsecing rw complete}
If $G = (V, E)$ is complete, then $\expect{C_{G}} = O(n)$. 
\end{theorem}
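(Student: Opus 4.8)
The plan is to collapse the coalescing random walk on $K_n$ to a one‑dimensional Markov chain that tracks only the number of surviving particles, prove a \emph{quadratic} (rather than linear) drift estimate for it, and conclude via a Lyapunov/optional‑stopping argument.

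\emph{Step 1 (reduction to a scalar chain).} On the complete graph with self‑loops, one step of a random walk sends a particle to a uniformly random vertex among all $n$ vertices, \emph{independently of its current location}. Hence if at time $t$ the surviving particles (coalesced clusters) occupy $N_t=k$ distinct vertices, then in the next step each of the $k$ clusters jumps independently to a uniform vertex, and two clusters merge precisely when they land on the same vertex. Therefore $N_{t+1}$ conditioned on $N_t=k$ is distributed as the number of nonempty bins when $k$ balls are thrown uniformly and independently into $n$ bins — a law depending only on $k$, not on the configuration. Thus $(N_t)_{t\ge 0}$ is a Markov chain on $\{1,\dots,n\}$; it is non‑increasing, state $1$ is absorbing, and $C_G=\inf\{t\ge 0: N_t=1\}$. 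Since any starting configuration has $N_0\le n$, it suffices to bound $\expect{C_G}$ from $N_0=n$.

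\emph{Step 2 (quadratic drift: ``coming down from infinity'').} By linearity of expectation $\expect{N_{t+1}\mid N_t=k}=n\bigl(1-(1-1/n)^k\bigr)$. Truncating the alternating binomial expansion of $(1-1/n)^k$ after the cubic term and using $k\le n$ gives $(1-1/n)^k\ge 1-\frac{k}{n}+\frac{2}{3}\binom{k}{2}\frac{1}{n^2}$, hence
\[
\expect{N_{t+1}\mid N_t=k}\ \le\ k-\frac{k(k-1)}{3n},\qquad 1\le k\le n .
\]
The point is that while many particles are alive the population shrinks at rate $\Theta(k^2/n)$, not $\Theta(k/n)$; this super‑linear drift is exactly what makes the total time $O(n)$ rather than $O(n\log n)$.

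\emph{Step 3 (Lyapunov function and optional stopping).} Take $g(k)=6n(1-1/k)$, which is nonnegative, increasing, concave on $[1,\infty)$, and satisfies $g(1)=0$. Using Jensen's inequality (concavity of $g$), monotonicity of $g$, the mean value theorem, and the bound of Step 2, I would verify, for every $k\ge 2$,
\[
\expect{g(N_{t+1})\mid N_t=k}\ \le\ g\!\left(k-\frac{k(k-1)}{3n}\right)\ \le\ g(k)-g'(k)\cdot\frac{k(k-1)}{3n}\ =\ g(k)-\frac{2(k-1)}{k}\ \le\ g(k)-1 .
\]
Consequently, with $\tau=\inf\{t:N_t=1\}$, the process $g(N_{t\wedge\tau})+(t\wedge\tau)$ is a supermartingale, so $\expect{t\wedge\tau}\le g(N_0)=g(n)<6n$ for all $t$; letting $t\to\infty$ (monotone convergence) yields $\expect{C_G}=\expect{\tau}\le 6n=O(n)$.

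\emph{Main obstacle.} The delicate part is Steps 2–3: a black‑box inequality of the form ``coalescence time $=O(\text{meeting time}\cdot\log n)$'' only gives $O(n\log n)$ here, so one must exploit the quadratic drift that holds while $k$ is large and pick a Lyapunov function whose curvature converts a $k$‑drift of order $k^2/n$ into a $g$‑drift of order $1$ \emph{uniformly} over $2\le k\le n$. (If one insists on the self‑loop‑free convention for $K_n$, the one‑step transition is balls‑into‑bins in which each ball avoids its cluster's previous bin; the same estimates hold up to constants.)
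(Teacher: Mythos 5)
Your proposal is correct, but it proves the theorem by a genuinely different route than the one the paper relies on: the paper simply cites \cite{aldous2002reversible,cooper2010multiple}, and the technique behind those results (which the paper reproduces when proving its lazy analogue, Lemma~\ref{lm: lazy-coalescence-time}) goes through the $k$-fold product chain $Q_k$, the stationary mass of the diagonal set $S_k$, hitting-time identities, and a sum of expected meeting times over stages, together with a mixing-time argument. You instead exploit the fact that on $K_n$ with self-loops a step is a uniform, position-independent jump, so the number of surviving clusters is itself a Markov chain whose one-step law is balls-into-bins; your Bonferroni truncation correctly yields $\expect{N_{t+1}\mid N_t=k}\le k-\frac{k(k-1)}{3n}$ for $k\le n$ (the comparison $\binom{k}{3}/n^3\le\frac13\binom{k}{2}/n^2$ is exactly $k-2\le n$), and the Lyapunov step with $g(k)=6n(1-1/k)$ is sound: Jensen applies since $g$ is concave on the range of $N_{t+1}$, the tangent-line bound is the right direction for a concave function, $g'(k)\cdot\frac{k(k-1)}{3n}=\frac{2(k-1)}{k}\ge 1$ for $k\ge 2$, and optional stopping plus monotone convergence gives $\expect{C_G}\le 6n$ from any initial configuration (so the reduction to $N_0=n$ is not even needed). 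The trade-off: the Aldous--Fill/Cooper et al. machinery works on general graphs and is the template the paper actually needs for the $u$-lazy walks, whereas your argument is elementary, self-contained, and gives an explicit constant, and it would also adapt immediately to the lazy setting of Definition~\ref{def: lazy-coalescing-random-walk} (the drift is scaled by $u$, giving $O(n/u)$ as in Lemma~\ref{lm: lazy-coalescence-time}); but it leans on the memoryless, vertex-transitive structure of the complete graph (or, without self-loops, on the symmetry argument you sketch) and does not extend beyond that setting.
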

In the proof of Theorem \ref{lm: p=1 inde random}, we build up the connection between the longest chains and the backwards coalescing random walks on complete graphs, 
and show that the maximal inconsistency among $n$ longest chains turns out to be the same as the number of steps it takes $n$ random walks on the $n$-complete graph to coalesce into one. 
Finally, we use the existing results on coalescing random walks to conclude. 

\vskip 0.5\baselineskip
\noindent{\bf Main proof ideas of Theorem \ref{lm: p=1 inde random}.}
We cast our proof insights via an example presented in Fig.\,\ref{fig:AD}. 
In this figure, there are four miners. For ease of exposition, we use the colors {\em pink}, {\em yellow}, {\em green}, and {\em blue} 
to represent each of the miners, respectively. 
%
As shown in Fig. \ref{fig:AD}, there are 4 longest chains at the end of round 8 and these chains share a maximal common prefix ending at block 15. 
The maximal inconsistency of these 4 longest chains is 4; that is, these 4 longest chains are NOT inconsistent with each other until the most recent 4 blocks of each chain. 
For expository convenience below, instead of using numbers to represent each of the blocks, we use the tuple $\pth{\text{color}, r}$ to represent a block that is mined by a certain miner at round $r$. 
The maximal inconsistency of the longest chains can be characterized by the coalescing time on complete graphs. 
To see this, let's consider the four longest chains held
by honest miners during
round 8 
backwards.  

\vskip 0.2\baselineskip
\noindent{Backwards-Chain $\#1$:} 
 $\pth{{\color{blue}\bf blue}, 8} \to \pth{{\color{pink}\bf pink}, 7} \to \pth{{\color{blue}\bf blue}, 6} \to \pth{{\color{yellow}\bf yellow}, 5} \to \pth{{\color{green}\bf green}, 4} \to \pth{{\color{yellow}\bf yellow}, 3} \to \pth{{\color{yellow}\bf yellow}, 2} \to \pth{{\color{blue}\bf blue}, 1} \to \pth{{\color{gray}\bf gray}, 0}$, 
%
which can be read as ``block $\pth{{\color{blue}\bf blue}, 8}$ is attached to block $\pth{{\color{pink}\bf pink}, 7}$ which is further attached to block $\pth{{\color{blue}\bf blue}, 6}$ ... attached to the genesis block $\pth{{\color{gray}\bf gray}, 0}$. ''

\vskip 0.2\baselineskip
\noindent{Backwards-Chain $\#2$}:  $\pth{{\color{pink}\bf pink}, 8}  
\to \pth{{\color{yellow}\bf yellow}, 7}
\to \pth{{\color{pink}\bf pink}, 6}
\to \pth{{\color{green}\bf green}, 5}
\to \pth{{\color{green}\bf green}, 4} 
\to \pth{{\color{yellow}\bf yellow}, 3} \to \pth{{\color{yellow}\bf yellow}, 2} 
\to \pth{{\color{blue}\bf blue}, 1}
\to \pth{{\color{gray}\bf gray}, 0}.$ \\
%
\noindent{Backwards-Chain $\#3$}: $\pth{{\color{yellow}\bf yellow}, 8}  
\to \pth{{\color{yellow}\bf yellow}, 7}
\to \pth{{\color{pink}\bf pink}, 6}
\to \pth{{\color{green}\bf green}, 5}
\to \pth{{\color{green}\bf green}, 4} 
\to \pth{{\color{yellow}\bf yellow}, 3} 
\to \pth{{\color{yellow}\bf yellow}, 2} 
\to \pth{{\color{blue}\bf blue}, 1}
\to \pth{{\color{gray}\bf gray}, 0}$. 

\noindent{Backwards-Chain $\#4$}:  $\pth{{\color{green}\bf green}, 8}  
\to \pth{{\color{green}\bf green}, 7}
\to \pth{{\color{yellow}\bf yellow}, 6}
\to \pth{{\color{green}\bf green}, 5}
\to \pth{{\color{green}\bf green}, 4} 
\to \pth{{\color{yellow}\bf yellow}, 3}
\to \pth{{\color{yellow}\bf yellow}, 2} 
\to \pth{{\color{blue}\bf blue}, 1}
\to \pth{{\color{gray}\bf gray}, 0}. $
%

\noindent Since $p=1$ and there is no adversary, the number of longest chains received by each honest node at each round is $n$. 
Under our symmetry-breaking rule, in each round $t$, each miner chooses which of the longest chains received at the beginning of round $t$ to extend on uniformly-at-random. Thus, neither the previous history up to round $t$ nor the future block attachment choices after round $t$ affects the choice of the chain extension in round $t$. 
Reasoning heuristically\footnote{Formally shown in the proof of Theorem \ref{lm: p=1 inde random} via introducing an auxiliary process.},  we can view each of the backwards-chain as a random walk on a $4$-complete graph with vertex set $\sth{pink, yellow, green, blue}$. In particular, Backwards-Chain $\#1$ can be viewed as a sample path of a random walk starting at the blue vertex, then moves to the pink vertex, then back to the blue vertex etc., and finally to the blue vertex. Similarly, Backwards-Chains $\#2, \#3$, and $\#4$ can be viewed as the sample paths of three random walks starting at the pink vertex, yellow vertex, and green vertex, respectively. 
These four random walks (starting at four different vertices) are not completely independent. For any pair of random walks, before they meet, they move on the graph independently of each other; whenever they meet, they move together henceforth. Concretely, backwards-chains 2 and 3 meet at $\pth{{\color{yellow}\bf yellow}, 7}$ and these chains are identical starting from block $\pth{{\color{yellow}\bf yellow}, 7}$; this holds similarly for other pairs of backwards chains. Finally, these four backward chains all meet at the block $\pth{{\color{green}\bf green}, 4}$ and move together henceforth. Notably, this block is exactly the last block in the maximal common prefix of the four longest chains of round 8. Thus, the maximal inconsistency among the longest chains of round 8 is identical to the number of backwards steps it takes for all these four random walks to coalesce into one. This relation is not a coincidence. It can be shown (detailed in the proof of Theorem \ref{lm: p=1 inde random}) that this identity holds for general $n$. 
Formal proof of Theorem \ref{lm: p=1 inde random} can be found in Appendix \ref{app: proof of p =1}.

\subsection{General p: Adversary-Free}
\label{subsec: general p and Adversary-free}
The analysis for general $p$ is significantly more challenging than that of $p=1$ in two ways: (1) we need to repeatedly apply coupling arguments; and (2) we need to characterize the coalescence time of a new notion of coalescing random walks (the lazy coalescing random walks), the latter of which could be of independent interest for a broader audience.  
%

\begin{theorem}
\label{lm: general p inde random}
Suppose that $np = \Omega(1)$. 
If $p<\frac{4\ln 2}{n}$, in expectation, at the end of round $t$, the local chains at the nodes share a common prefix of length $\pth{1+\pth{1-\pth{1-p}^n}t}-O(\frac{1}{npe^{-np}})$. 
If $p\ge \frac{4\ln 2}{n}$, in expectation, at the end of round $t$, the local chains at the nodes share a common prefix of length $\pth{1+\pth{1-\pth{1-p}^n}t}-O\pth{\frac{2np}{\pth{1 - 2 \exp\pth{-\frac{1}{3}np}}}}$. 
\end{theorem}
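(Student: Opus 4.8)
The plan is to write the common-prefix length at the end of round $t$ as (maximal chain length) $-$ (maximal inconsistency), bound the first by a short expectation computation, and bound the second by a backward coalescing-random-walk argument generalizing the one sketched for Theorem~\ref{lm: p=1 inde random}. Call a round \emph{productive} if at least one honest node mines in it, which happens independently with probability $1-(1-p)^n$. Since every honest node adopts a longest chain at the start of each round, the maximal chain length at the end of round $t$ is deterministically $M_t = 1 + N_t$, where $N_t$ is the number of productive rounds among $1,\dots,t$, so $\mathbb{E}[M_t] = 1 + (1-(1-p)^n)\,t$. The common-prefix length at the end of round $t$ is at least $M_t - I_t - O(1)$, where $I_t$ is the maximal inconsistency among the honest chains and the $O(1)$ absorbs the one-block length gap between mining and non-mining nodes at round $t$ (equivalently, using $I_{t-1}$ in place of $I_t$, which changes the bound by $O(1)$). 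It therefore suffices to show $\mathbb{E}[I_t] = O\!\big(\frac{1}{np\,e^{-np}}\big)$ when $p < \frac{4\ln 2}{n}$, and $\mathbb{E}[I_t] = O\!\big(\frac{2np}{1 - 2\exp(-np/3)}\big)$ when $p \ge \frac{4\ln 2}{n}$.

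\textbf{The backward lazy coalescing walk.} Condition on the mining pattern, let $r_1 < \dots < r_{N_t}$ be the productive rounds, and let $k_j \ge 1$ be the number of nodes mining in round $r_j$. At the end of round $r_{N_t}$ there are exactly $k_{N_t}$ distinct longest chains; tracing their tips backward one productive round at a time, the uniformly-at-random rule makes the parent of each current tip an independent uniform choice among the $k_{j-1}$ tips present after round $r_{j-1}$. Hence the set $U_j$ of distinct ancestral tips evolves backward exactly as a time-inhomogeneous coalescing random walk: step $j \to j-1$ throws $|U_j|$ i.i.d.\ uniform balls into $k_{j-1}$ bins and retains the occupied bins, i.e., one step of a coalescing random walk on the complete graph $K_{k_{j-1}}$. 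As in the $p=1$ case of Theorem~\ref{lm: p=1 inde random}, $I_t$ equals the number of backward productive-round steps until this walk coalesces to a single particle; the novelty is that the underlying complete graph changes size each step, so we must bound the coalescence time of such a \emph{lazy} coalescing random walk.

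\textbf{Bounding the lazy coalescence time.} This is carried out by a Chernoff bound on the $k_j$'s together with coupling, split by regime. When $p < \frac{4\ln 2}{n}$, the per-productive-round winner count is small: a productive round with a single winner instantly collapses all ancestral tips to one, and more generally a round with $k_{j-1}$ winners strongly merges the tips when $k_{j-1}$ is comparable to the current tip count. Bounding the expected number of backward productive rounds to coalescence — essentially the expected time to drain the top level, whose size is of order the typical winner count, governed by the single-winner rate $np(1-p)^{n-1} \asymp np\,e^{-np}$ — yields $O\!\big(\frac{1}{np\,e^{-np}}\big)$. When $p \ge \frac{4\ln 2}{n}$, a Chernoff bound gives $k_j \le 2np$ with probability at least $1 - 2\exp(-np/3)$ in each round; on the event that this holds for all relevant rounds we couple the backward walk with an honest coalescing random walk on the fixed complete graph $K_{\lceil 2np\rceil}$ (enlarging the graph only slows coalescence, the direction we need) and apply Theorem~\ref{thm: coalsecing rw complete}, whose $O(m)$ bound on $K_m$ gives $O(np)$ here; the factor $1/(1 - 2\exp(-np/3))$ absorbs the rounds on which the Chernoff event fails.

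\textbf{Main obstacle.} The crux is the last step: the classical coalescing-random-walk estimates apply to a \emph{fixed} complete graph, whereas here the graph size $k_j$ is random and varies round to round, and the small-$p$ mechanism (fast collapse through low-winner rounds) is qualitatively different from the large-$p$ one (slow but bounded coalescence on a large complete graph). Making the coupling precise — peeling it off in stages, first revealing the winner sets, then dominating by a monotone fixed-size walk on the high-probability event and separately controlling the low-probability event — and proving a clean coalescence-time bound for the lazy coalescing random walk in both regimes is where essentially all the work lies; the remaining steps are bookkeeping.
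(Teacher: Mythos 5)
Your proposal follows essentially the same route as the paper's proof: the same decomposition into expected longest-chain length minus maximal inconsistency, the same backward coalescing-walk identification with per-round bin counts $k_j$, the same small-$p$ bound via the single-winner rate $np\,e^{-np}$, and the same large-$p$ treatment via a Chernoff bound $k_j \le 2np$, stochastic domination by a coalescing walk on a fixed complete graph of size $O(np)$ (the paper's balls-and-bins dominance lemma), and a lazy-walk coalescence bound of the form $O(n_g/u)$ with $u = 1-2\exp(-np/3)$ absorbing the bad rounds (the paper's Lemma on $u$-lazy coalescing walks). The "main obstacle" you identify is precisely the lemma the paper isolates and proves separately, so the plan is sound and matches the paper's argument.
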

\begin{remark}
The expression of the common prefix length in Theorem \ref{lm: general p inde random} contains two terms 
with the first term (i.e., $\pth{1+\pth{1-\pth{1-p}^n}t}$) being the only term that involves $t$. Intuitively, from this term, we can read out the common prefix length growth rate w.r.t. $t$.  
The second term (which is expression in terms of Big-O notation) can be interpreted as a quantification of the maximal inconsistency of the honest chains.

Now we further interpret these two terms 
via simplifying the expression using the inequalities $\pth{1 - np} \le (1-p)^n \le \exp\pth{-np}$. 

\noindent (1) When $np = o(1)$, it is true that $(1-p)^n \approx (1-np)$ for large $n$, which implies that $\pth{1-\pth{1-p}^n}t \approx np t = o(t)$, i.e., the common prefix grows at a speed $o(t)$. The maximal inconsistency bound $O(\frac{1}{npe^{-np}})$ is not tight. 
Nevertheless, via a straightforward calculation, we know that the maximal inconsistency is $O(1)$. \\
(3) When $np=\omega(1)$,  we have $0\le (1-p)^n \le \exp\pth{-np} \to 0$ as $np\to \infty$. Thus  the common-prefix grows at the speed 
$\pth{1-\pth{1-p}^n}t \approx t = \Omega(t)$ with maximal inconsistency $O(np)$ for sufficiently large $np$. \\
(4) When $np = c\in (0, 1)$, it is true that $(1-p)^n = \pth{1-c/n}^n \to \exp\pth{-c}$ as $n\to \infty$. The common-prefix grows at the speed of $\Theta(t)$ for sufficiently large $n$ and the maximal inconsistency is $O(1)$. 

Overall, when $np$ gets larger, the common-prefix growth increases and the maximal inconsistency grows at a much slower rate. 
\end{remark}

The following definition and lemma are used in proving  Theorem \ref{lm: general p inde random}. This lemma could be of independent interest to a broader audience and its proof can be found in the appendix. 

\begin{definition}[Lazy coalescing random walk]
\label{def: lazy-coalescing-random-walk}
For any fixed $u\in (0,1)$, we say $n$ \emph{particles} are $u$-lazy coalescing random walks if 
for each step: 
with probability $(1-u)$, each 
particle stays at its current location; 
with probability $u$, each particle moves to 
an adjacent vertex picked uniformly at random. 
If two or more particles meet at a location, they unite into a single particle and continue the procedure. 
The \emph{coalescence time} is the same as that in Definition \ref{def:coalescing-random-walks}.
\end{definition}

\begin{lemma}
\label{lm: lazy-coalescence-time}
Suppose that $G$ is a complete graph of size $|V|=n_g$ (where $n_g\ge 2$) with self-loops. For any $u\in (0,1)$, the coalescence time of the $u$-lazy coalescing random walks is $C_{G}(n_g) = O(n_g/u)$.  
\end{lemma}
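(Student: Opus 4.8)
The plan is to reduce the $u$-lazy coalescing random walk on $G$ to an \emph{ordinary} coalescing random walk on the same graph via a random time change, and then invoke Theorem~\ref{thm: coalsecing rw complete}. I would realize the $u$-lazy process from two independent ingredients: (i) an i.i.d.\ sequence $\xi_1,\xi_2,\dots$ of $\mathrm{Bernoulli}(u)$ ``clock ticks'', where $\xi_T=1$ means that at step $T$ every (possibly already merged) particle performs one move of the ordinary random walk on $G$ and $\xi_T=0$ means every particle stays put; and (ii) an independent ordinary coalescing random walk $\pth{Y_k}_{k\ge 0}$ on $G$ started from the same $n_g$ sources, in which particles that meet coalesce and thereafter move as one. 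Writing $N_T \triangleq \sum_{i=1}^T \xi_i$, the configuration of the $u$-lazy walk after $T$ steps is exactly $Y_{N_T}$: this is a faithful realization of Definition~\ref{def: lazy-coalescing-random-walk}, since on a ``move'' step each current particle jumps according to the ordinary-walk kernel, and particles that have coalesced stay coalesced, exactly as both definitions demand. In particular, all particles of the lazy walk have merged into one precisely when $N_T \ge \tau$, where $\tau$ is the coalescence time of $\pth{Y_k}$.

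Consequently $C_G = \inf\{T\ge 1 : N_T \ge \tau\}$. Since $\tau$ is measurable with respect to $\sigma(Y_0,Y_1,\dots)$, it is independent of the clock ticks $\pth{\xi_i}$; and conditioned on $\tau=k$, $C_G$ is simply the number of $\mathrm{Bernoulli}(u)$ trials needed to accumulate $k$ successes, a negative binomial variable with mean $k/u$. Hence, by the tower property,
\begin{align*}
\expect{C_G} \;=\; \expect{\,\expect{C_G \mid \tau}\,} \;=\; \expect{\tau/u} \;=\; \frac{\expect{\tau}}{u}.
\end{align*}
Because $G$ is the complete graph on $n_g\ge 2$ vertices with self-loops, Theorem~\ref{thm: coalsecing rw complete} — together with the footnote to Definition~\ref{def:coalescing-random-walks}, noting that self-loops only alter the one-step transition kernel — gives $\expect{\tau} = O(n_g)$. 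Substituting yields $\expect{C_G} = O(n_g/u)$, as claimed.

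The argument is short, and the only points requiring care are bookkeeping rather than conceptual: one must verify that the two-ingredient construction genuinely couples to Definition~\ref{def: lazy-coalescing-random-walk} (in particular that ``each particle moves to a uniformly random adjacent vertex'' is applied per \emph{merged} particle, not per original source, so the process never un-merges), and that $\tau \perp (\xi_i)$, which is immediate from the product construction. I do not expect a serious obstacle here; if anything, the subtle spot is purely notational — keeping the ``active-time'' index $N_T$ of $\pth{Y_k}$ consistent with the ``wall-clock'' index $T$ of the lazy walk. For $u$ bounded away from $0$ the bound is just $O(n_g)$, matching the non-lazy case, while the regime of interest is $u=o(1)$, where the $1/u$ factor is exactly the expected number of wall-clock steps per active step.
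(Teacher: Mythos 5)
Your proof is correct, but it takes a genuinely different and considerably more economical route than the paper's. The paper essentially re-runs the hitting-time machinery behind Theorem~\ref{thm: coalsecing rw complete} with the lazy kernel: it lifts $k$ particles to the product graph $Q_k=(V^k,E_Q)$, bounds the stationary mass of the diagonal set $S_k$, computes $\mathbb{E}_{\bm{\pi}^Q}[H_{S_k}]$ for the lazy chain via the return-probability formula, derives a recursion for the first-meeting time $M_k$ by conditioning on whether the particles move, sums over $k$, and separately controls mixing to reduce from $n_g$ particles to $k^*$. You instead black-box Theorem~\ref{thm: coalsecing rw complete}: realizing the lazy process as $Y_{N_T}$ for an ordinary coalescing walk $Y$ and an independent Bernoulli$(u)$ clock gives $C_G=\inf\{T: N_T\ge\tau\}$, whence $\expect{C_G}=\expect{\tau}/u=O(n_g/u)$ by the negative-binomial/Wald computation. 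This is shorter, yields the clean identity $\expect{C_G}=\expect{\tau}/u$ rather than just the order, and would generalize immediately to any graph for which the non-lazy coalescence time is known; the paper's approach has the (here unexploited) advantage of producing the intermediate per-$k$ bounds $\expect{M_k}=O(n_g/(uk^2))$. The one point you rightly flag as needing care is the only real hypothesis of your reduction: the laziness must be driven by a \emph{single shared coin} per step, so that all surviving particles move or stay \emph{concurrently} — otherwise the process is not a deterministic time change of one ordinary coalescing walk and the coupling fails. Definition~\ref{def: lazy-coalescing-random-walk} is worded ambiguously on this, but both the paper's own proof (which conditions once per step on ``whether the particles stay'') and the way the lemma is invoked in the proof of Theorem~\ref{lm: general p inde random} (where all walks ``stay at their own locations concurrently'' according to the globally observed block count) confirm the shared-coin reading, so your argument applies to the intended object.
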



\noindent{\bf Proof Sketch of Theorem \ref{lm: general p inde random}. }
When $p< \frac{4\ln 2}{n}$, we can use Poisson approximation to approximate the distribution of number of blocks in each round. A straightforward calculation shows that the probability of having exactly one block in a round is $np \exp\pth{-np}$. Thus, in expectation, the maximal inconsistency is $O\pth{\frac{1}{np \exp\pth{-np}}}$. 
Henceforth, we restrict our attention to the setting where $p\ge \frac{4\ln 2}{n}$ and quantify the expected maximal inconsistency among the longest chains of round $t$.  
It is attempting to apply arguments similar to that in the proof of Theorem \ref{lm: p=1 inde random} and derive a bound on the maximal inconsistency via stochastic dominance. However, the obtained bound on the maximal inconsistency is $O(n)$ which could be extremely loose for a wide range of $p$. 
Nevertheless, based on the insights obtained in this coarse analysis, we can come up with a much finer-grained analysis and obtain the bound in Theorem \ref{lm: general p inde random}. 
Similar to the proof of the special case when $p=1$, in our fine-grained analysis for general $p\in (0, 1)$, we couple the growth of the common prefix in Nakamoto protocols with the coalescing time random walks on complete graphs. The major differences from the proof of $p=1$ are: (1) instead of the standard coalescing random walks, we need to work with a lazy version of it, formally defined in Definition \ref{def: lazy-coalescing-random-walk}; (2) there is no fixed correspondence between a color and a node -- in our proof of general $p$, the correspondence is round-specific rather than fixed throughout the entire dynamics; (3) there is no bijection between a sample path of the Nakamoto dynamics and that of the backwards coalescing random walks, thus, we need to rely on stochastic dominance to build up the connection of these two dynamics. 

\subsection{General p: Adversary-Prone}
\label{subsec: general p and Adversary-prone}
Throughout this section, we assume $p<1$. 
In this subsection, we consider adversary-prone systems, i.e., $b>0$. 
Simple concentration arguments show that 
when $bp\ge (1+2c)$ for any given $c\in (0,1)$, using vanilla Nakamoto consensus the chain quality could be near zero.  
To make larger $p$ feasible, we introduce a new assumption---Assumption~\ref{prop:adversary-advantage}---which we  then remove  in~\cref{sec:vdf-protocol} by providing a construction that ensures Assumption~\ref{prop:adversary-advantage} with all but negligible
probability.
Specifically,
we use a cryptographic tool called a VDF to ensure that over a sufficiently long time window, the corrupt nodes can only collectively extend a chain by more than one block in a round with negligible probability.

\begin{assumption}\label{prop:adversary-advantage}
In each round, a chain can be extended by at most 1 block. 
\end{assumption}

%
To strengthen the  protocol robustness,  
we make the additional minor modification requiring each honest node to selectively relay chains at the \emph{beginning} of a round.  
\vskip 0.2\baselineskip
\noindent{\bf Selective relay rule: }
At each honest node $i$, for each iteration $t\ge 1$: Node $i$ looks at the chains it received in the previous round $t-1$, and if any of them are longer than its own local longest chain, it not only chooses one of the longest chains to replace its local one, it also broadcasts it to other nodes before it begins mining in round $t$.

%

\vskip 0.5\baselineskip

\noindent  As implied by our proof, this modification can reduce the maximal difference  between the lengths of the longest chains kept by the honest nodes and by the corrupt nodes. Intuitively, if the adversary sends two chains of different lengths to two different groups of honest nodes, with the selective relay rule, only the longer chain would survive in this round. Notably, it is possible that none of them survive in this round. 
Even with the assurance guaranteed by Assumption~\ref{prop:adversary-advantage}, compared with the adversary-free settings, the analysis for the adversary-prone setting is challenging. 
This is because the corrupt nodes could deviate from the specified symmetry breaking rule. For example, a corrupt node can choose not to extend its longest chain, or can choose from its set of longest chains in any way that provides advantage. In addition, a corrupt node can hide blocks it has mined from the honest nodes for as long as it wants, or from some subset of the honest nodes during a round. 

For simplicity and for technical convenience, we assume that a corrupt node randomly chooses among longest chains that end with an honest block. 
This assumption is only imposed in the rare event when simultaneously both the adversary has no adversary advantage (see Definition \ref{def: adversarial ad}) and only honest nodes mine blocks in the most recent nonempty round.

In contrast to the adversary-free setting where the lengths of honest nodes' local chains differ by at most 1, in the presence of an adversary, such difference could be large. 
To precisely bound this difference, we introduce a random process we call {\em adversary advantage}:
\begin{definition}[Adversary advantage]
\label{def: adversarial ad}
Let $\sth{\calN(t)}_{t=0}^{\infty}$ be the random process defined as 
\begin{itemize}
\item $\calN(0)=0$, and 
\item for $t\ge 1$, 
\begin{align*}
\calN(t) =   
\begin{cases}
\calN(t-1) + 1, & ~ \text{if only corrupt nodes found blocks in round $t$;} \\
\max\{\calN(t-1) -1, ~ 0\}, & ~  \text{if only honest nodes found blocks in round $t$;} \\
\calN(t-1), & ~ \text{otherwise.} \\
\end{cases}
\end{align*}
\end{itemize}
\end{definition}
Note that the random process $\sth{\calN(t)}_{t=0}^{\infty}$ is independent of the adversarial behaviors of the corrupt nodes. 
To make the discussion concrete, we introduce the following definition. 
\begin{definition}
The length of the longest chains kept by the honest nodes {\bf at round $t$} 
is defined as the length of the longest local chains kept by  honest nodes {\em at the end} of round $t$. 
\end{definition}
\begin{lemma}
\label{lm: new proof}
For any $t\ge 1$, at the end of round $t$, the length of the longest chains kept by the adversary -- henceforth referred to as an adversarial longest chain of round $t$ -- is at most $\calN(t)$ longer than the length of a chain kept by an honest node. 
\end{lemma}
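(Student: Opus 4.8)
The plan is to prove the statement by induction on $t$, tracking the quantity $\Delta(t) := L_{\text{adv}}(t) - L_{\text{hon}}(t)$, where $L_{\text{adv}}(t)$ is the length of an adversarial longest chain at the end of round $t$ and $L_{\text{hon}}(t)$ is the length of a longest chain kept by an honest node at the end of round $t$, and showing $\Delta(t) \le \calN(t)$ for all $t$. The base case $t=0$ is immediate: all chains equal the genesis block, so $\Delta(0) = 0 = \calN(0)$. For the inductive step, assume $\Delta(t-1) \le \calN(t-1)$; I would analyze how both $L_{\text{adv}}$ and $L_{\text{hon}}$ can change during round $t$, using Assumption~\ref{prop:adversary-advantage} (a chain grows by at most one block per round) to control the increments.

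The key case analysis mirrors the three cases in Definition~\ref{def: adversarial ad}. First, I would argue that in any round $L_{\text{hon}}$ is nondecreasing and $L_{\text{adv}}$ increases by at most $1$ (by Assumption~\ref{prop:adversary-advantage}, since any adversarial chain is extended by at most one block, and the adversary cannot shorten honest chains). The crucial observations are: (i) if an honest node mines in round $t$, then at the start of round $t$ it adopted a longest chain among those it had received --- in particular, thanks to the \emph{selective relay rule}, if the adversary had revealed a long chain in round $t-1$ it would have been relayed and adopted, so the honest longest chain ``catches up'' to any chain the adversary made visible, and then the honest node extends it by one; (ii) if only corrupt nodes mine in round $t$, then $L_{\text{hon}}(t) = L_{\text{hon}}(t-1)$ while $L_{\text{adv}}(t) \le L_{\text{adv}}(t-1)+1$, so $\Delta(t) \le \Delta(t-1) + 1 \le \calN(t-1)+1 = \calN(t)$; (iii) if only honest nodes mine in round $t$, then $L_{\text{hon}}(t) \ge L_{\text{hon}}(t-1)+1$ while $L_{\text{adv}}(t) = L_{\text{adv}}(t-1)$ (the adversary mines nothing), giving $\Delta(t) \le \Delta(t-1) - 1$; combined with $\Delta(t) \ge 0$ this yields $\Delta(t) \le \max\{\Delta(t-1)-1, 0\} \le \calN(t)$; (iv) in the remaining case (both or neither mine) $L_{\text{hon}}$ and $L_{\text{adv}}$ can each rise by at most one, but I must show the \emph{difference} does not grow, i.e.\ whenever the adversary extends its longest chain by one, an honest chain of length $\ge L_{\text{adv}}(t-1)$ also grows by one; here I would use that in such rounds at least one honest node mines, it adopted a longest visible chain at round start, and the selective-relay rule plus the one-block-per-round bound prevent the adversary from having privately pulled ahead.

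The main obstacle --- and the part requiring the most care --- is case (iv) together with the bookkeeping of \emph{hidden} adversarial blocks: the adversary may withhold a privately mined chain for many rounds, so $L_{\text{adv}}(t)$ may reference a chain honest nodes have never seen. I need to show that $\calN(t)$, which only counts rounds in which \emph{exclusively} corrupt nodes found blocks (net of rounds in which exclusively honest nodes did), still dominates the true private lead. The resolution is that the private lead can only \emph{increase} in a round where the adversary mines but no honest node does --- precisely a round counted by $\calN(t-1)+1$ --- because Assumption~\ref{prop:adversary-advantage} caps the adversarial gain at one block per round, and in any round where an honest node also mines, that honest node extends a chain at least as long as the longest chain visible to it, which via the selective relay rule matches any length the adversary has ever exposed; and in rounds where an honest node mines but no corrupt node does, the honest longest chain grows while the best adversarial chain (which the adversary would have revealed the moment it was advantageous, or which remains bounded in length) does not, shrinking the lead. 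I would make this precise by maintaining the stronger inductive hypothesis that $\Delta(t) \le \calN(t)$ holds simultaneously with the fact that $\calN$ never decreases below $0$, and checking each of the four transition types above drives $\Delta$ exactly as $\calN$ is driven (or better). Finally, since the process $\{\calN(t)\}$ is, by the remark following Definition~\ref{def: adversarial ad}, independent of the adversary's strategy, the bound holds uniformly over all PPT adversaries, completing the proof.
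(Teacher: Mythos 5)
Your proposal is correct and follows essentially the same route as the paper's proof: the same four-way case analysis on which parties mine in a round, the same use of the selective relay rule and Assumption~\ref{prop:adversary-advantage} to show that the honest longest chain gains at least one block whenever an honest node mines while any adversarial chain gains at most one, and the same conclusion that the gap is dominated by $\calN(t)$. The only difference is presentational: you run a direct induction on $t$ with the invariant $\Delta(t)\le\calN(t)$, whereas the paper counts only-corrupt versus only-honest rounds since the last round at which $\calN$ was zero; the underlying argument is identical.
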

Proof of Lemma \ref{lm: new proof} can be found in 
\iflong
Appendix \ref{app: general p and Adversary-prone}. 
\else
\fl.
\fi
From its proof, we can deduce an attacking strategy of the adversary that meets the upper bound in Lemma \ref{lm: new proof}. The following lazy random walk, referred to as {\em coalescing opportunities}, is important in our analysis. It can also be used to quantify the chain quality. 
\begin{definition}
\label{def: Coalescing opportunities couting}
Let $t_1, t_2, \cdots$ be the rounds in which at least one node mines a block with the understanding that $t_0 =0$. 
Let $\calJ(m)$ be a random walk defined as 
\begin{align*}
\calJ(m) = 
\begin{cases}
0, ~ ~ &\text{if } m=0; \\
\calJ(m-1)+1, ~~ &\text{if only honest nodes mine a block during round } t_k ; \\
\calJ(m-1)-1, ~ ~ &\text{if only corrupt nodes mine a block during round } t_k ; \\
\calJ(m-1), ~ ~ &\text{otherwise.}
\end{cases}
\end{align*}
\end{definition}
\begin{remark}
A couple of interesting facts on the coalescing opportunities dynamics are: 
Among the most recent $m$ blocks in a longest chain, there are at least $\calJ(m)$ blocks mined by the honest nodes.  
In addition, regardless of the behaviors of the adversary, for any two longest chains, there are at least $\calJ(m)$ block positions each of which has non-zero probability of being in the common prefix of these two chains. 
\end{remark}

Let $p_{+1} = \prob{\calJ(m) = \calJ(m-1)+1}$ and $p_{-1} = \prob{\calJ(m) = \calJ(m-1)-1}$, i.e., $p_{+1}$ (resp. $p_{-1}$) is the probability for $\calJ(m)$ to move up (resp. down) by 1. 
We have 
\begin{align}
\label{eq: move up and down nonempty}
p_{+1} = \frac{(1-p)^b\pth{1-\pth{1-p}^{n-b}}}{1 - (1-p)^{n}} ~ ~ \text{and} ~ ~ p_{-1} = \frac{\pth{1-(1-p)^b}\pth{1-p}^{n-b}}{1 - (1-p)^{n}}. 
\end{align}
%
It is easy to see that when $b>\frac{1}{2}n$, it holds that $p_{+1} > p_{-1}$. For ease of exposition, let $p^* = \prob{\calJ(t) \not= \calJ(t-1)} = p_{+1} + p_{-1}. $ 
\begin{lemma}
\label{lm: the number of coalescing opportunities}
With probability at least $\pth{1- \exp\pth{-\frac{(p_{+1} - p_{-1})^2M}{16p^*}} - \exp\pth{-\frac{(p^*)^2 M}{2}}}$, it holds that $\calJ(M) \ge \frac{(p_{+1} - p_{-1})M}{4}$. 
\end{lemma}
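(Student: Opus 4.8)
The plan is to decouple the two sources of randomness driving $\calJ$: first \emph{which} of the nonempty rounds are ``non-stay'' rounds (exactly one of the two populations mines), and then, conditioned on that, \emph{whether} each non-stay round is an up-step or a down-step. Throughout I work in the regime $p_{+1}>p_{-1}$, which is the case of interest (otherwise the target threshold is non-positive and the statement is not meaningful). Fix the window of $M$ nonempty rounds $t_1,\dots,t_M$, and let $N=\sum_{k=1}^M Z_k$ where $Z_k=\indc{\text{round }t_k\text{ is an up- or down-step}}$. The $Z_k$ are independent Bernoulli$(p^*)$, so $\eexpect{N}=p^*M$, and Hoeffding's inequality gives $\prob{N<p^*M/2}\le \exp\pth{-(p^*)^2M/2}$; this is exactly the second term in the claimed bound, and from now on I condition on the complementary event $N\ge p^*M/2$ and on the realized value of $N$.

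Next I would record the conditional structure of the up-steps. Since mining is independent across rounds, conditioning on the set of non-stay rounds makes each of them independently an up-step with probability $p_{+1}/p^*$ and a down-step with probability $p_{-1}/p^*$. Hence, writing $U$ for the number of up-steps, $U\mid N\sim\Binom\pth{N,\,p_{+1}/p^*}$, and since $\calJ(0)=0$ we have $\calJ(M)=2U-N$, so $\eexpect{\calJ(M)\mid N}=\tfrac{p_{+1}-p_{-1}}{p^*}N$. On the conditioning event $N\ge p^*M/2$ this gives $\eexpect{\calJ(M)\mid N}\ge \tfrac{(p_{+1}-p_{-1})M}{2}$, so the target $\tfrac{(p_{+1}-p_{-1})M}{4}$ is at most half the conditional mean and $\{\calJ(M)<\tfrac{(p_{+1}-p_{-1})M}{4}\}$ is a genuine conditional lower-tail event.

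Then I would make the reduction to $U$ precise: a short computation using $M\le 2N/p^*$ and the identity $\tfrac{p_{+1}}{p^*}=\tfrac12+\tfrac{p_{+1}-p_{-1}}{2p^*}$ shows that on the event $N\ge p^*M/2$,
\[
\Big\{\calJ(M)<\tfrac{(p_{+1}-p_{-1})M}{4}\Big\}\subseteq\Big\{U<\eexpect{U\mid N}-\tfrac{(p_{+1}-p_{-1})N}{4p^*}\Big\}.
\]
Applying Hoeffding's inequality to $U$ (a conditional sum of $N$ independent $\{0,1\}$ variables) with deviation $s=\tfrac{(p_{+1}-p_{-1})N}{4p^*}$ bounds the conditional probability of this event by $\exp\pth{-\tfrac{(p_{+1}-p_{-1})^2N}{8(p^*)^2}}$, and substituting $N\ge p^*M/2$ bounds it by $\exp\pth{-\tfrac{(p_{+1}-p_{-1})^2M}{16p^*}}$. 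Since this holds for every admissible value of $N$, it survives averaging, and a union bound with the estimate $\prob{N<p^*M/2}\le\exp\pth{-(p^*)^2M/2}$ from the first paragraph yields the claimed bound.

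The step I expect to require the most care is the choice of the deviation parameter. The naive approach — applying Hoeffding directly to $\calJ(M)=\sum_{i=1}^M A_i$ with increments $A_i\in\{-1,0,1\}$ — only produces an exponent of order $(p_{+1}-p_{-1})^2M$, which is strictly weaker than the claimed $(p_{+1}-p_{-1})^2M/(16p^*)$ whenever $p^*$ is small. The fix is to notice that the real variability lives on the $\approx p^*M$ non-stay rounds, so one must condition on $N$, bound $U$ (not $\calJ(M)$ directly) with a deviation proportional to $N/p^*$ so that the Hoeffding exponent scales like $N/(p^*)^2\ge M/(2p^*)$, and only then substitute the lower bound on $N$. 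Getting this order of operations right, and tracking the numerical constants ($2,4,8,16$) cleanly through the chain, is the only delicate bookkeeping.
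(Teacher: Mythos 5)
Your proof is correct and follows essentially the same route as the paper's: both decouple the $M$ nonempty rounds into the count $N$ of non-stay rounds (Hoeffding, giving the $\exp(-(p^*)^2M/2)$ term) and the directions of those steps (Hoeffding on the embedded $\pm1$ walk with success probability $p_{+1}/p^*$, giving the $\exp(-(p_{+1}-p_{-1})^2M/(16p^*))$ term), with identical constants. If anything, your handling of the last step is slightly more careful: by bounding $U$ conditionally on the realized $N$ with a deviation proportional to $N$ and then substituting $N\ge p^*M/2$, you avoid the paper's implicit claim that $\calJ(M)\ge\tilde{\calJ}(\tfrac12 p^*M)$, which as stated would require a monotonicity the walk does not have.
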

Lemma \ref{lm: the number of coalescing opportunities} gives a high probability lower bound on the number of coalescing opportunities during $M$ nonempty rounds. Its proof can be found in Appendix \ref{app: general p and Adversary-prone}. 

\begin{theorem}
\label{thm: general p adversarial}
For any given $T\ge 1$ and $M\ge \frac{4}{\beta(p_{+1} - p_{-1})}$ where $\beta = \frac{(n-b)p}{ 2\pth{3np}^2}$, 
at the end of round $T$, 
with probability at least 
\[
1- \exp\pth{-\frac{(p^*)^2M}{2}} - \exp\pth{-\frac{(p_{+1} - p_{-1})^2M}{16p^*}} - \frac{2}{\beta}\exp\pth{-\frac{1}{2}(n-b)}  
\]
over the randomness in the block mining, the expected maximal inconsistency among a given pair of honest nodes is less than $M$, where the expectation is taken over the randomness in the symmetry breaking. 
\end{theorem}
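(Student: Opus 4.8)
\noindent\textbf{Proof plan for Theorem~\ref{thm: general p adversarial}.} Fix a pair of honest nodes $i,j$. Tracing backwards from the tips of $C_i(T)$ and $C_j(T)$ down to the genesis block produces two block sequences that meet at their last common block; the maximal inconsistency of the pair is the number of blocks lying after that meeting point on the longer sequence. By Assumption~\ref{prop:adversary-advantage} every chain gains at most one block per round, so a window of the $M$ most recent \emph{nonempty} rounds (rounds in which at least one block is mined) contributes at most $M$ blocks. Hence it suffices to fix a ``good'' block-mining pattern and show that, in expectation over the symmetry-breaking coins only, the two backwards sequences coalesce within that window; all probabilities quantified over the block-mining randomness will refer to the event that this pattern is good.

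First I would isolate the rounds where the honest symmetry-breaking rule actually has a chance to resolve a fork. By Definition~\ref{def: Coalescing opportunities couting} and the remark following it, among the $m$ most recent nonempty rounds the number of ``useful'' block positions---positions carrying a fresh honest block at the tracked height, where the two backwards sequences each pick their predecessor uniformly at random among the longest chains of that round---is at least $\calJ(m)$. Applying Lemma~\ref{lm: the number of coalescing opportunities} (with $M$, and more generally with each $m\ge M$), with probability at least $1-\exp\pth{-\frac{(p^*)^2 m}{2}}-\exp\pth{-\frac{(p_{+1}-p_{-1})^2 m}{16 p^*}}$ there are at least $\frac{(p_{+1}-p_{-1})m}{4}$ useful positions in the last $m$ nonempty rounds; at $m=M$ this yields the first two terms of the stated failure probability.

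The core step is to couple what happens at the useful positions with a lazy coalescing random walk, as in the proof of Theorem~\ref{lm: general p inde random} but now facing the adversary. At a useful position the two tracked blocks were produced by honest miners who each picked one of their received longest chains uniformly at random at the start of that round; viewing the color/node correspondence as round-specific (as in the general-$p$ adversary-free proof), such a position behaves like one step of a $u$-lazy coalescing random walk on a complete graph whose vertices are the distinct honest longest chains of that round (at most $n-b$ of them), the laziness absorbing the rounds in which the required chain diversity fails to appear. Since there is no bijection between the Nakamoto dynamics and the walk---and a corrupt node may deviate from the uniform rule, withhold blocks, or deliver chains of different lengths to different honest groups (this last being exactly what the selective-relay rule neutralizes)---I would argue by stochastic dominance: the worst adversary is the one realizing Lemma~\ref{lm: new proof} (equivalently, driving the process $\calN$ of Definition~\ref{def: adversarial ad} as high as possible), and even against it the down-moves of $\calJ$ already account for every position the adversary can ``spend,'' so the $\ge\frac{(p_{+1}-p_{-1})m}{4}$ surviving useful positions are genuine lazy-walk steps. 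By Lemma~\ref{lm: lazy-coalescence-time} the number of useful positions needed for the two particles to coalesce is $O(1/\beta)$ in expectation, the identification of the vertex count and laziness with $n-b$ and $p$ producing exactly $\beta=\frac{(n-b)p}{2\pth{3np}^2}$; a union bound over these $O(1/\beta)$ rounds of the ``no diversity'' bad event (all $n-b$ honest miners of a round landing on a common chain, probability at most $\exp\pth{-\frac{1}{2}(n-b)}$) contributes the last term $\frac{2}{\beta}\exp\pth{-\frac{1}{2}(n-b)}$.

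Finally I would assemble the pieces. On the intersection of these good events---which holds with the probability stated in the theorem over the mining randomness---the probability over the symmetry-breaking coins that the two sequences have not coalesced within the last $m$ blocks is at most $(1-\beta)^{(p_{+1}-p_{-1})m/4}$, since the coupling makes each of the $\ge\frac{(p_{+1}-p_{-1})m}{4}$ useful positions in that window fail to coalesce the pair with conditional probability at most $1-\beta$. Summing this geometric tail, $\expect{\text{maximal inconsistency}}=\sum_{m\ge 0}\prob{\text{not coalesced within last }m\text{ blocks}}=O\pth{\frac{1}{\beta(p_{+1}-p_{-1})}}$ (the extra tail contributions from the two Lemma~\ref{lm: the number of coalescing opportunities} terms are $O(1)$ under $np=\Omega(1)$ and are lower order), which the hypothesis $M\ge\frac{4}{\beta(p_{+1}-p_{-1})}$ renders strictly less than $M$. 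The main obstacle is the coupling of the third paragraph: one must verify that every mode of adversarial deviation can only speed up coalescence in the stochastic-dominance order, and pin down the effective vertex count and laziness so that the lazy coalescence time is $O(1/\beta)$ rather than the crude $O(n)$ that a direct imitation of the Theorem~\ref{lm: p=1 inde random} argument would give---which is precisely why one must track, via $\calN$ and Lemma~\ref{lm: new proof}, exactly which longest chains each honest miner chooses among in each round.
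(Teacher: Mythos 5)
Your skeleton matches the paper's: you window to the $M$ most recent nonempty rounds, invoke Lemma~\ref{lm: the number of coalescing opportunities} to guarantee at least $\frac{(p_{+1}-p_{-1})M}{4}$ coalescing opportunities (producing the first two failure terms), argue that each opportunity coalesces the pair with probability at least $\beta$, and finish with a geometric argument requiring $\frac{(p_{+1}-p_{-1})M}{4}\ge \frac{1}{\beta}$. The gap is exactly where you flag it: the core step establishing the per-opportunity coalescence probability $\beta$. Your route — a coupling with a $u$-lazy coalescing random walk on a complete graph with roughly $n-b$ vertices and laziness tied to $p$, then Lemma~\ref{lm: lazy-coalescence-time} — cannot produce $\beta=\frac{(n-b)p}{2\pth{3np}^2}$. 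Lemma~\ref{lm: lazy-coalescence-time} gives coalescence in $O(n_g/u)$ steps, i.e.\ $O((n-b)/p)$ under your identification, whereas $1/\beta=\frac{18n^2p}{n-b}$; these disagree in essentially every parameter regime (already at $b=0$ they are $n/p$ versus $18n^2p$). The lazy-walk machinery is the engine of the adversary-free Theorem~\ref{lm: general p inde random}, not of this theorem.

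What the paper actually does at a coalescing opportunity $i_{h_r}$ is elementary and adversary-specific: the two honest nodes choose uniformly from their received sets $\calC_1^r,\calC_2^r$ of longest chains, so they extend the same chain with probability $\frac{\abth{\calC_1^r\cap\calC_2^r}}{\abth{\calC_1^r}\abth{\calC_2^r}}$. The numerator is at least $\text{NB}(i_{h_r-1})$ because every longest chain ending in an honest block is multicast and hence received by \emph{both} nodes — this overlap lemma (Lemma~\ref{lm: common longest chains}, resting on the selective relay rule and on $\calN(i_{h_r-1})=0$ via Lemma~\ref{lm: new proof}) is precisely what neutralizes the adversary's ability to show different chains to different honest nodes, and it is the ingredient your "stochastic dominance against the worst adversary" sketch does not supply. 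The denominator is at most $\pth{\text{NB}+\text{AB}+\text{AB}}^2$ by the count of all longest chains (Lemma~\ref{prop: zero advantage}). Hoeffding bounds on the per-round block counts $X_k,Y_k$ then yield the ratio $\ge\beta$, and the third failure term $\frac{2}{\beta}\exp\pth{-\frac{1}{2}(n-b)}$ is the union bound of \emph{these concentration failures} over the $\le 1/\beta$ opportunities consumed — not, as you describe, of an "all honest miners land on a common chain" event. So you need to replace the lazy-walk coupling by this direct overlap-and-counting computation; the rest of your assembly (the explicit $(1-\beta)^{(p_{+1}-p_{-1})m/4}$ tail sum) is a fine, if slightly more detailed, rendering of the paper's concluding step.
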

\begin{remark}
It is worth noting that $\beta = \frac{(n-b)p}{ 2\pth{3np}^2} = \frac{1}{18} \frac{(n-b)}{n}\frac{1}{np}$, i.e., $\beta$ is a function of the fraction of honest nodes and the total mining power of the nodes in the system. 

Suppose that $n\ge 2\log \frac{4}{\epsilon \beta}$ for any given $\epsilon\in (0, 1)$. 
Let 
\[
M^* = \max \sth{\frac{4\log 1/\epsilon}{(p^*)^2},   \frac{4}{\beta(p_{+1} - p_{-1})}, \frac{16p^*}{\pth{p_{+1} - p_{-1}}^2}\log\frac{4}{\epsilon}}. 
\]
From Theorem \ref{thm: general p adversarial}, we know that with probability at least $1-\epsilon$, the maximal inconsistency is less than $M^*$. Roughly speaking, when $b$ gets smaller, $M^*$ mainly gets smaller.  
\end{remark}

\begin{proof}[\bf Proof of Theorem \ref{thm: general p adversarial}]
%
%
We use $N_t$ to denote the number of blocks generated during round $t$ and associate each node with a distinct color in $\{c_1, \cdots, c_n\}$. 
If node $i$ mines a block during round $t$, we use $\pth{c_i, t}$ to denote this block. The genesis block is denoted as $\pth{c_1, 0}$. 
Recall that the blocks mined during round $t$ are collectively referred to as the block layer $t$.  
As the randomness in the block generation (i.e., puzzle solving of individual nodes) is independent of the adversarial behaviors of the corrupt nodes and is independent of which chain an honest node chooses to extend, we consider the auxiliary process wherein the nodes mine blocks for the first $T$ rounds, and then the corrupt nodes and honest nodes sequentially decide on block attachments. 
Let $\{i_1, \cdots, i_K\}$ be the set of rounds such that $N_{i_k} \not=0$ for each $i_k\in \{i_1, \cdots, i_K\}$.  
Let $j_1$ and $j_2$ be any two honest nodes whose chains at the end of round $T$ are denoted by $C_1(T)$ and $C_2(T)$, respectively. 
For each of these chains, we can read off a sequence of colors
\begin{align*}
\text{for Chain } C_1(T): ~ ~ & c_1 c(1,2) c(1,3)\cdots c(1,\ell_1), ~ ~ \text{and} \\
\text{for Chain } C_2(T): ~ ~ & c_1 c(2,2) c(2,3)\cdots c(2,\ell_2),  
\end{align*}
where $\ell_1$ and $\ell_2$, respectively, are the lengths of chains $C_1(T)$ and $C_2(T)$, $c_1$ is the color of the genesis block, $c(1, k)$ for $k\in \{2, \cdots, \ell_1\}$ is the color of the $k$--th block in $C_1(T)$ and $c(2, k)$ for $k\in \{2, \cdots, \ell_2\}$ is the color of the $k$--th block in $C_2(T)$. 
If $\ell_1\not=\ell_2$, without loss of generality, we consider the case that $\ell_1<\ell_2$; the other case can be handled similarly. We augment the color sequence $c_1 c(1,2) c(1,3)\cdots c(1,\ell_1)$ to the length $\ell_2$ sequence as 
\[
c_1 c(1,2) c(1,3)\cdots c(1,\ell_1) c(1, \ell_1+1) \cdots c(1, \ell_2), 
\]
by setting $c(1, k) = c_0$ for $k=\ell_1+1, \cdots, \ell_2$ where $c_0\notin \{c_1, \cdots, c_n\}$ is a special color that never shows up in a real block. 
It is easy to see that $C_1(T)$ and $C_2(T)$ {\em  start} to be inconsistent at their $k$-th block if and only if $c(1, k^{\prime})\not= c(2, k^{\prime})$ for each $k^{\prime} \in \{k, \cdots, \ell_2\}$. 
%
Let $\{i_{h_1}, \cdots, i_{h_R}\} \subseteq \{i_1, \cdots, i_K\}$ such that for each $i_{h_r}\in \{i_{h_1}, \cdots, i_{h_R}\}$ it holds that 
\begin{itemize}
    \item Only honest nodes successfully mined blocks; 
    \item $\calN(i_{h_r - 1}) =0$. 
\end{itemize}
For ease of exposition, we refer to each of $i_{h_r}$ as {\em a coalescing opportunity}. 
Recall that each of the honest nodes extends one of the longest chains it receives. By Lemma \ref{lm: new proof}, we know that each of $C_1(T)$ and $C_2(T)$ contains a block generated during round $i_{h_r}$. 
Let $\pth{c_1^{\prime}, i_{h_r}}$ and $\pth{c_2^{\prime}, i_{h_r}}$ be the blocks included in $C_1(T)$ and $C_2(T)$, respectively. 
If $\pth{c_1^{\prime}, i_{h_r}}$ is in the $k$-th position in $C_1(T)$, then $\pth{c_2^{\prime}, i_{h_r}}$ is also in the $k$-th position in $C_2(T)$. 
For each $i_{h_r}$, we denote the set of chains (including the forwarded chains) received by $j_1$ and $j_2$ at round $i_{h_r}$, denoted by $\calC_1^r$ and $\calC_2^r$. Since the adversary can hide chains to a selective group of honest nodes, $\calC_1^r$ and $\calC_2^r$ could be different.  
The probability of $j_1$ and $j_2$ extending the same chain at round $i_{h_r}$ is  
\begin{align}
\label{eq: meet prob}
 \frac{\abth{\calC_1^r \cap \calC_2^r}}{\abth{\calC_1^r}\abth{\calC_2^r}} \ge \frac{\text{NB}(i_{h_r-1})}{\pth{\text{NB}(i_{h_r-1}) + \text{AB}(i_{h_r-1}) + \text{AB}(\widetilde{i_{h_r-1}})}^2} 
\end{align}

\noindent where the inequality follows from 
\iflong
Lemma \ref{lm: common longest chains}.
\else
Lemma 33 of the \fl.
\fi
By Lemma \ref{lm: the number of coalescing opportunities}, we know that in the $M$ non-empty block layers that are most recent to round $T$, 
\begin{align*}
R\ge \calJ(M) \ge \frac{(p_{+1} - p_{-1})M}{4}
\end{align*}
holds with probability at least $\pth{1- \exp\pth{-\frac{(p^*)^2M}{2}} - \exp\pth{-\frac{(p_{+1} - p_{-1})^2M}{16p^*}}}$. 
In addition, it can be shown that for each of the $r$ ensured by Lemma \ref{lm: the number of coalescing opportunities} we have 
\begin{align*}
\max\{\abth{\calC_1^r},  \abth{\calC_2^r}\} \le \text{NB}(i_{h_r-1}) + \text{AB}(i_{h_r-1}) + \text{AB}(\widetilde{i_{h_r-1}}) \indc{\text{AB}(i_{h_r-1})=0}. 
\end{align*}
For any $i_k$, let $X_k$ be the number of blocks mined by the honest nodes during round $i_k$ such that $X_k\not=0$. 
Using conditioning and Hoeffding's inequality, the following holds with probability at least $\pth{1-2\exp\pth{-\frac{1}{2}(n-b)}}$, 
\[
X_k\ge \frac{1}{2}(n-b)p ~  ~ \text{and } ~  X_k+ Y_k + Y_{k-1}\indc{Y_k=0} \le 3np, 
\]
which implies that $\frac{X_k}{X_k+Y_k+Y_{k-1}\indc{Y_k=0}} \ge \frac{(n-b)p}{ 2\pth{3np}^2} \triangleq \beta$.  
On average over the random symmetry breaking, it takes at most $1/\beta$ coalescing opportunities backwards for chains $C_1(T)$ and $C_2(T)$ to coalesce into one.  
Thus, we need $\frac{(p_{+1} - p_{-1})M}{4} \ge \frac{1}{\beta}$.

\end{proof}

\section{VDF-Based Scheme}
\label{sec:vdf-protocol}
In this section, we present a scheme to ensure Assumption~\ref{prop:adversary-advantage}.  
The key cryptographic tool we use in the following scheme is the 
construction of the \emph{verifiable delay function}, $\func(x)$, which we define informally below. 
Please refer to~\cite{vdf} for the formal definition (also defined formally in the full version of our paper).

\begin{definition}[Verifiable Delay Function (informal)]\label{def:vdfs-short}
Let $\secp$ be our security parameter. 
There exists a function $\func$ with 
difficulty $\dif = O(\poly(\secp))$ 
where the output $y \leftarrow
\func(x)$ (where $x \in \{0, 1\}^\secp$) 
cannot be computed
in less than $\dif$ sequential computation steps,
even provided $\poly(\secp)$
parallel processors, with probability at least $1-\negl(\secp)$. The VDF output can be verified, quickly, in $O(\log(\dif))$ time.
\end{definition}

We set the difficulty of the VDF to the duration of a round;
in other words, the difficulty is set such that the 
VDF produces exactly one output at the end of each round.
We amend default Nakamoto consensus by adding the following procedure.
We believe this could be added in a backwards-compatible way to existing Nakamoto implementations, like Bitcoin. Backwards-compatibility is desirable in decentralized networks because it means that a majority of the network
can upgrade to the new protocol and non-upgraded nodes can still verify blocks and execute transactions. Below we describe a scheme that, when added to Nakomoto consensus, assures Assumption~\ref{prop:adversary-advantage}. The proof of the following theorem is in the full version of our paper.

\begin{theorem}\label{thm:property-vdf}
Assumption~\ref{prop:adversary-advantage} is satisfied
by our VDF-based scheme.
\end{theorem}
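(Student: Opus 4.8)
The plan is to prove \cref{thm:property-vdf} by establishing a single structural invariant of the amended protocol: along any valid chain each block carries a VDF output that is a $\func$-evaluation on an input completely determined by the block's predecessor, so that the VDF computations of two consecutive blocks are sequentially dependent and cannot both finish within one round. Recall that in the scheme a block has the form $B_\ell=\langle s_\ell,x_\ell,\nce_\ell,\psi_\ell\rangle$ with $s_\ell=\orcl(B_{\ell-1})$ linking to the predecessor and the proof-of-work check of \cref{def:validity} on $\langle s_\ell,x_\ell,\nce_\ell\rangle$, and validity now additionally requires $\psi_\ell=\func(z_\ell)$ where the input $z_\ell$ is a deterministic function of the full string $B_{\ell-1}$ (so that $z_{\ell+1}$ reads $\psi_\ell$); the VDF difficulty $\dif$ is calibrated so that one evaluation consumes exactly one round.

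First I would attach to every block $B$ that ever appears in an honest node's view a \emph{completion round} $\tau(B)$: the earliest round by the end of which all of $B$'s fields --- in particular $\psi$ --- have been computed by whoever produced it, with $\tau(B_0)=0$ for genesis. The heart of the argument is the claim $\tau(B_\ell)\ge \tau(B_{\ell-1})+1$ for every valid $B_\ell$ with $\ell\ge 1$. Indeed, by \cref{def:rom} the input $z_\ell$, being a function of $B_{\ell-1}$ (which contains $\psi_{\ell-1}$), cannot be produced before $B_{\ell-1}$ is itself complete: obtaining a correct $z_\ell$ otherwise would require either querying $\orcl$ on $B_{\ell-1}$ before $\psi_{\ell-1}$ exists, or guessing an oracle value, which succeeds with probability at most $2^{-\secp}$ per attempt. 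Hence no party can begin evaluating $\func(z_\ell)$ before round $\tau(B_{\ell-1})+1$, and by the sequentiality guarantee of \cref{def:vdfs-short} that evaluation takes at least $\dif$ sequential steps --- one full round --- even with the $\poly(\secp)$ parallel processors available to the adversary controlling up to $\adverse$ nodes, except with probability $\negl(\secp)$. So $\psi_\ell$, and therefore $B_\ell$, cannot be complete before the end of round $\tau(B_{\ell-1})+1$.

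Given the claim, completion rounds along any valid chain $B_0B_1\cdots B_\ell$ are strictly increasing, so no two blocks of the chain share a completion round; equivalently, in any single round at most one block of any fixed chain becomes complete, which is precisely \cref{prop:adversary-advantage}. To pass from these per-step facts to a statement about the entire run I would take a union bound over the $\poly(\secp)$ VDF evaluations and oracle queries a PPT adversary can perform in a $\poly(\secp)$-round execution, so that the invariant holds with all but negligible probability in $\secp$; I would also remark, for sanity, that an honest node computing one VDF per round can still extend by one block per round, so the scheme does not stall legitimate growth (though \cref{thm:property-vdf} asserts only the upper bound). The main obstacle is the middle step: ruling out overlap and precomputation attacks in which the adversary starts $\func(z_\ell)$ before $B_{\ell-1}$ is finalized by guessing or pre-committing to some fields of $B_{\ell-1}$ (its nonce, or its own VDF output), or amortizes VDF work across many candidate continuations. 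Handling this rigorously requires pinning down exactly which fields of $B_{\ell-1}$ the input $z_\ell$ reads and arguing, from the random-oracle model and the unpredictability implicit in \cref{def:vdfs-short}, that every such partial guess succeeds only with negligible probability, then folding all these events into the union bound; the remainder is bookkeeping.
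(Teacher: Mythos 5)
Your proof targets a different scheme from the one the theorem is about. In the paper's construction the VDF is a \emph{global, round-indexed beacon}: every node computes the round-$j$ output from the \emph{previous round's VDF output} concatenated with the round number, independently of which blocks are mined or what they contain; a block mined in round $j$ merely embeds the previous round's output, and verification requires the embedded outputs along a chain to come from strictly increasing rounds, with the $k$-th block using an output from round $\ge k-1$. You instead chain the VDF along the blockchain itself, taking each block's VDF input $z_\ell$ to be a deterministic function of its predecessor block $B_{\ell-1}$ (including that block's own VDF output and nonce). That is a different protocol, so your completion-round invariant, even if fully carried out, does not establish \cref{thm:property-vdf} for ``our VDF-based scheme'' as specified in Section~\ref{sec:vdf-protocol}.

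The distinction matters because the paper's design makes the step you yourself flag as ``the main obstacle'' unnecessary, whereas in your design it is the crux and is left open. With a round-indexed beacon there is exactly one VDF output per round, its input is the previous round's output (which, by sequentiality, no party can know before that round ends), and there is nothing block-specific to guess or amortize over; an induction (Lemma~\ref{lem:future}) shows that at round $t$ only the outputs of rounds $0,\dots,t-1$ exist, and the verification rules (Lemma~\ref{lem:vdf-seq}, Lemma~\ref{lem:adversary-two-blocks}) force the outputs along any accepted chain to be distinct and increasing, so any chain an honest node accepts at round $t$ has length at most $t$ --- exactly the one-block-per-round bound, and one that also covers block-withholding. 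In your per-block chaining, by contrast, you must rule out the adversary starting $\func(z_\ell)$ early by pre-committing or grinding on fields of $B_{\ell-1}$ that it controls, and the sequentiality property of \cref{def:vdfs-short} (a lower bound for evaluating $\func$ on a \emph{randomly sampled} input) does not by itself yield unpredictability of $z_\ell$ when the adversary chooses most of $B_{\ell-1}$; deferring this to ``bookkeeping'' leaves the proof incomplete even for your modified scheme.
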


\myparagraph{VDF-Scheme Overview}
The VDF-scheme works intuitively as follows. 
We number the rounds beginning with round $0$. 
All nodes have the genesis block $B_0$ in their local
chains in round $0$ and starting mining blocks in round
$1$. In round $0$, the VDF output is computed using $0$
as the input.
During each round $j > 0$,
each node computes a 
VDF output, $y_j$, (using $\func$) 
for the current round $j$ where the input to $\func$
is the output
of the VDF, $y_{j - 1}$, 
from the previous round concatenated with
the round number, $j$. Both inputs are necessary;
the output of the VDF from the previous round 
ensures that we cannot compute the VDF output for this 
round until we have obtained the output for the previous
round, and the round number is necessary to ensure that 
the output is \emph{not} used for a future round. Once
the VDF output is computed, each honest node attempts
to mine a block using the VDF output as part of the 
input to the mining attempt. This also ensures
that the block generation rate of honest
nodes is upper bounded by $np$.
Then, each node which successfully mines a block
sends the new chain to all other nodes. 

All honest nodes
verify that each chain satisfies two conditions: 
\begin{enumerate}
    \item Let $o_1, \dots, o_\ell$ be the VDF outputs
    contained in blocks $B_1, \dots, B_\ell$, respectively, of a chain $C$ (the genesis block does not contain a VDF output). Let $r_1, \dots, r_\ell$ be the rounds where $o_1, \dots,
    o_{\ell}$ were computed, respectively. Then, 
    $r_1 < \dots < r_{\ell - 1} < r_{\ell}$.
    \item $o_i$ is the VDF output computed 
    from round $r_i \geq i - 1$.
\end{enumerate}
The honest nodes also check all proofs included in the chains, confirming
that the VDF outputs are correctly computed and the blocks are correctly mined using the VDF outputs. 
An honest node discards any chain which does not pass
verification.

\vspace{1em}
\myparagraph{Pseudocode}
The precise pseudocode of our VDF-based scheme is given
below.
Using $\func$, each honest 
node $i$ performs the following:

\begin{enumerate}
    \item Initially, all honest nodes use 
    input $0$ at the start of the protocol to obtain
    output $y_0 = \func(0)$ for round $0$.
    \item Let $d_j = \func(y_{j-1})$ be the output of the VDF for round $j$ and $y_j = d_{j - 1}|j$.\footnote{Here, $a|b$ is the commonly used notation indicating
    concatenation between $a$ and $b$.} $i$ stores $y_{j}$.
    \item When $i$ mines a block $B_j$, $i$ includes 
    the output $y_{j-1} = d_{j - 1} | j$ from the previous round
    in $B_j$, ie.\ $B_j$
    is mined with $y_{j-1}$ as part of the input.
    \item Each node which successfully mines a block adds
    the mined block to its local chain. Then, it broadcasts
    its local chain to all other nodes.
    \item For each longest chain received, each node verifies
    the following:
    \begin{enumerate}
        \item Let $o_1, \dots, o_\ell$ be the VDF outputs
        stored in each block
        in order starting with the first block and 
        ending with the $\ell$-th block. Let $r_1, \dots, r_\ell$
        be the rounds associated with the VDF output.
        Then, $r_\ell > r_{\ell - 1} > \cdots > r_1$. \label{step:greater}
        \item The $k$-th block in the chain (starting from
        the genesis block) is mined using $y_{k'}$ from round
        $k' \geq k - 1$. \label{step:correct-vdf-mined}
        \item The proofs of the VDF output 
        and the mining output
        are correct, i.e.\ the block is 
        correctly mined using the 
        corresponding VDF output.
    \end{enumerate}
    \item If $i$ receives a chain where 
    more than one block in the chain is mined with the 
    same $y_j$ (for any $j$ smaller than the current round), the node discards the chain. 
    \item At the end of round $j$, $i$ sets $y_{j + 1} \leftarrow \func(y_{j}) | j + 1$ and
    begins computing the next value $\func(y_{j+1})$
    using $y_{j+1}$ 
    as input.  
\end{enumerate}

Due to space constraints, we do not include the proof of~\cref{thm:property-vdf}; 
please find the full proofs in the full version of our paper. However, 
the intuition for our proof is straightforward. Items~\ref{step:greater}
and~\ref{step:correct-vdf-mined} ensure that no chain accepted by an
honest node contains more than one block per VDF output. Setting
the difficulty of the VDF to the duration of the round ensures that
at most one VDF output is produced during a round. Together, these
two observations 
prove~\cref{thm:property-vdf},
namely, that any chain held by an honest node can 
be extended by at most one block each round.

\section{Discussion}
\label{sec:discussion}

\myparagraph{Validation and Communication Costs}
A higher $p$ means a faster block rate and thus more blocks. 
The validation and bandwidth complexity of Nakamoto protocols are proportional to block size and the number of blocks that are mined, since each miner validates and then communicates every mined block to all other miners (in practice, nodes do not necessarily gossip shorter chains, and taking advantage of nodes' memory overlap can help reduce block transfer size~\cite{compactblocks}). One needs to determine the optimal value of $p$ that trades off validation and bandwidth complexity and chain growth. This work expands the space of $p$ to consider.

\medskip

\myparagraph{Other Symmetry-Breaking Strategies}
Here we consider three other symmetry-breaking strategies with high $p$.
\emph{First-seen} is where
all honest nodes take the first chain out of the longest-length chains they see, and \emph{lexicographically-first} is where honest nodes take the lexicographically-first chain of the set of longest chains according to some predetermined ordering, for example alphabetically.
Intuitively, the adversary can control the network and thus cause different honest nodes to see different chains of the same length first for first-seen, impacting common-prefix, or grind on blocks to always produce the lowest lexicographically-ordered chain for lexicographically-first, impacting chain-quality.
A third strategy is to use a \emph{global-random-coin}: Suppose that all nodes have access to a permutation oracle $\perm$ that returns a permutation sampled uniformly at random of a number of elements passed into it \emph{where any subset of elements obey the same partial ordering.} With $\perm$ symmetry-breaking is trivial since all honest nodes will agree on the result of the coin flip.
Furthermore, if the coin is fair, then the number
of honest blocks added to the chain is proportional
to the fraction of honest nodes. However, in reality, it is difficult and oftentimes infeasible to ensure such a strong guarantee.

\myparagraph{Conclusion} In this work we show that unlike previously thought, convergence opportunities are not necessary to make chain progress. We use \emph{coalescing random walks} to analyze the correctness of Nakamoto consensus under a regime of puzzle difficulty previously thought to be untenable, expanding the space of $p$ for protocol designers.

\bibliographystyle{abbrv}
\bibliography{SM}

\begin{thebibliography}{10}

\bibitem{aldous2002reversible}
D.~Aldous and J.~Fill.
\newblock Reversible markov chains and random walks on graphs, 2002.

\bibitem{bagaria2019prism}
V.~Bagaria, S.~Kannan, D.~Tse, G.~Fanti, and P.~Viswanath.
\newblock Prism: Deconstructing the blockchain to approach physical limits.
\newblock In {\em Proceedings of the 2019 ACM SIGSAC Conference on Computer and
  Communications Security}, pages 585--602, 2019.

\bibitem{BKMQR20}
E.~Blum, A.~Kiayias, C.~Moore, S.~Quader, and A.~Russell.
\newblock {\em The Combinatorics of the Longest-Chain Rule: Linear Consistency
  for Proof-of-Stake Blockchains}, pages 1135--1154.

\bibitem{vdf}
D.~Boneh, J.~Bonneau, B.~B{\"{u}}nz, and B.~Fisch.
\newblock Verifiable delay functions.
\newblock In {\em Advances in Cryptology - {CRYPTO} 2018 - 38th Annual
  International Cryptology Conference, Santa Barbara, CA, USA, August 19-23,
  2018, Proceedings, Part {I}}, pages 757--788, 2018.

\bibitem{BBBF18}
D.~Boneh, J.~Bonneau, B.~Bünz, and B.~Fisch.
\newblock Verifiable delay functions.
\newblock Cryptology ePrint Archive, Report 2018/601, 2018.
\newblock \url{https://eprint.iacr.org/2018/601}.

\bibitem{cooper2013coalescing}
C.~Cooper, R.~Elsasser, H.~Ono, and T.~Radzik.
\newblock Coalescing random walks and voting on connected graphs.
\newblock {\em SIAM Journal on Discrete Mathematics}, 27(4):1748--1758, 2013.

\bibitem{cooper2010multiple}
C.~Cooper, A.~Frieze, and T.~Radzik.
\newblock Multiple random walks in random regular graphs.
\newblock {\em SIAM Journal on Discrete Mathematics}, 23(4):1738--1761, 2010.

\bibitem{compactblocks}
M.~Corallo.
\newblock Compact block relay, 2016.
\newblock \url{https://github.com/bitcoin/bips/blob/master/bip-0152.mediawiki}.

\bibitem{dwork}
C.~Dwork and M.~Naor.
\newblock Pricing via processing or combatting junk mail.
\newblock In {\em Annual international cryptology conference}, pages 139--147.
  Springer, 1992.

\bibitem{eos-block-time}
{EOS}.
\newblock v2.0 consensus protocol, 2021.
\newblock
  \url{https://developers.eos.io/welcome/v2.0/protocol/consensus_protocol}.

\bibitem{eyal2014majority}
I.~Eyal and E.~G. Sirer.
\newblock Majority is not enough: Bitcoin mining is vulnerable.
\newblock In {\em International conference on financial cryptography and data
  security}, pages 436--454. Springer, 2014.

\bibitem{garay15}
J.~Garay, A.~Kiayias, and N.~Leonardos.
\newblock The bitcoin backbone protocol: Analysis and applications.
\newblock In E.~Oswald and M.~Fischlin, editors, {\em Advances in Cryptology -
  EUROCRYPT 2015}, pages 281--310, Berlin, Heidelberg, 2015. Springer Berlin
  Heidelberg.

\bibitem{garay2015bitcoin}
J.~Garay, A.~Kiayias, and N.~Leonardos.
\newblock The bitcoin backbone protocol: Analysis and applications.
\newblock In {\em Annual International Conference on the Theory and
  Applications of Cryptographic Techniques}, pages 281--310. Springer, 2015.

\bibitem{GKL20}
J.~Garay, A.~Kiayias, and N.~Leonardos.
\newblock Full analysis of nakamoto consensus in bounded-delay networks.
\newblock Cryptology ePrint Archive, Report 2020/277, 2020.
\newblock \url{https://eprint.iacr.org/2020/277}.

\bibitem{GKL17}
J.~A. Garay, A.~Kiayias, and N.~Leonardos.
\newblock The bitcoin backbone protocol with chains of variable difficulty.
\newblock In J.~Katz and H.~Shacham, editors, {\em Advances in Cryptology -
  {CRYPTO} 2017 - 37th Annual International Cryptology Conference, Santa
  Barbara, CA, USA, August 20-24, 2017, Proceedings, Part {I}}, volume 10401 of
  {\em Lecture Notes in Computer Science}, pages 291--323. Springer, 2017.

\bibitem{hajek2015random}
B.~Hajek.
\newblock {\em Random processes for engineers}.
\newblock Cambridge university press, 2015.

\bibitem{KP15}
A.~Kiayias and G.~Panagiotakos.
\newblock Speed-security tradeoffs in blockchain protocols.
\newblock {\em {IACR} Cryptol. ePrint Arch.}, 2015:1019, 2015.

\bibitem{KRS18}
L.~Kiffer, R.~Rajaraman, and a.~shelat.
\newblock A better method to analyze blockchain consistency.
\newblock In {\em Proceedings of the 2018 ACM SIGSAC Conference on Computer and
  Communications Security}, CCS '18, page 729–744, New York, NY, USA, 2018.
  Association for Computing Machinery.

\bibitem{algorand-finalize}
S.~Micali.
\newblock Algorand 2021 performance, 2020.
\newblock
  \url{https://www.algorand.com/resources/blog/algorand-2021-performance}.

\bibitem{bitcoin}
S.~Nakamoto.
\newblock Bitcoin: A peer-to-peer electronic cash system, 2009.

\bibitem{nakamoto2008bitcoin}
S.~Nakamoto et~al.
\newblock Bitcoin: A peer-to-peer electronic cash system.(2008), 2008.

\bibitem{PSS17}
R.~Pass, L.~Seeman, and A.~Shelat.
\newblock Analysis of the blockchain protocol in asynchronous networks.
\newblock In J.-S. Coron and J.~B. Nielsen, editors, {\em Advances in
  Cryptology -- EUROCRYPT 2017}, pages 643--673, Cham, 2017. Springer
  International Publishing.

\bibitem{pass2017analysis}
R.~Pass, L.~Seeman, and A.~Shelat.
\newblock Analysis of the blockchain protocol in asynchronous networks.
\newblock In {\em Annual International Conference on the Theory and
  Applications of Cryptographic Techniques}, pages 643--673. Springer, 2017.

\bibitem{PS17}
R.~Pass and E.~Shi.
\newblock Fruitchains: A fair blockchain.
\newblock In {\em Proceedings of the ACM Symposium on Principles of Distributed
  Computing}, PODC '17, page 315–324, New York, NY, USA, 2017. Association
  for Computing Machinery.

\bibitem{P18}
K.~Pietrzak.
\newblock {Simple Verifiable Delay Functions}.
\newblock In A.~Blum, editor, {\em 10th Innovations in Theoretical Computer
  Science Conference (ITCS 2019)}, volume 124 of {\em Leibniz International
  Proceedings in Informatics (LIPIcs)}, pages 60:1--60:15, Dagstuhl, Germany,
  2018. Schloss Dagstuhl--Leibniz-Zentrum fuer Informatik.

\bibitem{Ren19}
L.~Ren.
\newblock Analysis of nakamoto consensus.
\newblock {\em {IACR} Cryptol. ePrint Arch.}, 2019:943, 2019.

\bibitem{sapirshtein2016optimal}
A.~Sapirshtein, Y.~Sompolinsky, and A.~Zohar.
\newblock Optimal selfish mining strategies in bitcoin.
\newblock In {\em International Conference on Financial Cryptography and Data
  Security}, pages 515--532. Springer, 2016.

\bibitem{W20}
B.~Wesolowski.
\newblock {Efficient Verifiable Delay Functions (extended version)}.
\newblock {\em {Journal of Cryptology}}, Sept. 2020.

\bibitem{ZP19}
R.~{Zhang} and B.~{Preneel}.
\newblock Lay down the common metrics: Evaluating proof-of-work consensus
  protocols' security.
\newblock In {\em 2019 IEEE Symposium on Security and Privacy (SP)}, pages
  175--192, 2019.

\bibitem{ZTLWLX20}
J.~Zhao, J.~Tang, Z.~Li, H.~Wang, K.~Lam, and K.~Xue.
\newblock An analysis of blockchain consistency in asynchronous networks:
  Deriving a neat bound.
\newblock In {\em 40th {IEEE} International Conference on Distributed Computing
  Systems, {ICDCS} 2020, Singapore, November 29 - December 1, 2020}, pages
  179--189. {IEEE}, 2020.

\end{thebibliography}

\appendix
\begin{center}
\bf \Large Appendices      
\end{center}

\iflong
\section{Additional Definitions}
\label{app: additional definitions}

\subsection{VDFs}
The formal definition of VDFs is presented below.

\begin{definition}[Verifiable Delay Functions~\cite{vdf}]\label{def:vdfs}
    A VDF $V = \left(\setup, \eval, \veri\right)$ is a triple of algorithms that perform the following:
    \begin{enumerate}
        \item $\setup(\secp, \dif) \rightarrow \textbf{pp} = (\ek{}, \vk{})$: The $\setup$ algorithm takes
            as input a security parameter $\secp$ and a desired difficulty level $\dif$ and produces
            public parameters consisting of an evaluation key $\ek{}$ and a verification key $\vk{}$.
            $\setup$ is polynomial time with respect to $\secp$ and $\dif$ is subexponentially-sized
            in terms of $\secp$. The public parameters specify an input space $\sX$ and an output space $\sY$.
            $\sX$ is efficiently sampleable. If secret randomness is used in $\setup$, a trusted setup might
            be necessary.
        \item $\eval(\ek{}, x) \rightarrow (y, \pi)$: $\eval$ takes an input $x \in \mathcal{X}$ (in the
            sample space of inputs) and the evaluation key and produces an output $y \in \mathcal{Y}$ (in the
            sample space of outputs) and a (possibly empty) proof $\pi$. $\eval$ may use random
            bits to generate $\pi$ but not to compute $y$. $\eval$ runs in parallel time $\dif$
            even when given $poly(\log(\dif), \secp)$ processors for all $\textbf{pp}$ generated
            by $\setup(\secp, \dif)$ and $x \in \sX$. \label{item:difficulty}
            \item $\veri(\vk{}, x, y, \pi) \rightarrow \{Yes, No\}$: $\veri$ is a deterministic algorithm
                that takes the verification key $\vk{}$, an input $x$, the output $y$, and proof $\pi$ and outputs
                $Yes$ or $No$ depending on whether $y$ was correctly computed from via $\eval$. $\eval$ runs in
                time $O(\log(\dif))$. 
    \end{enumerate}
    Furthermore, $V$ must satisfy the following properties:
    \begin{enumerate}
        \item \textbf{Correctness} A VDF $V$ is correct if for all $\secp, \dif$,
            parameters $(\ek{}, \vk{}) \xleftarrow{R} \setup(\secp, \dif)$,
            and all $x \in \sX$, if $(y, \pi) \xleftarrow{R} \eval(\ek{}, x)$,
            then $\veri(\vk{}, x, y, \pi) \rightarrow Yes$.
        \item \textbf{Soundness} A VDF is sound if for all algorithms $\alg$ that run in time
            $O\left(poly(\dif, \secp)\right)$
            \begin{align*}
                \Prob\left[
                    \begin{array}{l}
                        \veri(\vk{}, x, y, \pi) = Yes \\
                        y' \neq y
                    \end{array}
                \middle\vert
                    \begin{array}{l}
                        \mathbf{pp} = (\ek{}, \vk{}) \xleftarrow{R} \setup(\secp, \dif)\\
                        (x, y', \pi') \xleftarrow{R} \alg\left(\secp, \mathbf{pp}, \dif\right)\\
                        (y, \pi) \xleftarrow{R} \eval(\ek{}, x)
                    \end{array}
                \right] \leq negl(\secp).
            \end{align*}
        \item \textbf{Sequentiality} A VDF is $(p, \sigma)$-sequential if no adversary $\alg = (\alg_0, \alg_1)$
            with a pair of randomized algorithms $\alg_0$,
            which runs in total time $O(poly(\dif, \secp))$, and $\alg_1$, which runs in parallel time $\sigma(t)$
            on at most $p(t)$ processors, can win the following game with probability greater than $negl(\secp)$:
            \begin{flalign*}
                &\textbf{pp} \xleftarrow{R} \setup(\secp, \dif) \\
                &L \xleftarrow{R} \alg_0(\secp, \textbf{pp}, \dif) \\
                &x \xleftarrow{R} \mathcal{X} \\
                &y_{\alg} \xleftarrow{R} \alg_1(L, \textbf{pp}, x).
            \end{flalign*}
            $\alg = (\alg_0, \alg_1)$ wins the game if $(y, \pi) \xleftarrow{R} \eval(\ek{}, x)$ and $y_{\alg} = y$.
    \end{enumerate}
\end{definition}

There are many implementations in the literature of VDFs (e.g.\ \cite{P18,W20}). We do not provide these implementations here 
as it is out-of-scope for our paper, but please refer
to these papers for contructions of VDFs that satisfy
the above properties.

\subsection{Tail Bounds}
We use the following variant of Hoeffding's inequality.  

\begin{theorem}[Hoeffding's inequality]\label{def:hoeffding}
Let $Y_1, \dots, Y_n$ be $n$ independent, identically
distributed random variables drawn from a Bernoulli distribution with parameter $p$, $Y_i \stackrel{\text{i.i.d.}}{\sim} \text{Ber}(p)$. If $S_n = \sum_{i = 1}^n Y_i$, then $S_n \sim 
\text{Binom(n, p)}$, $\Expect(S_n) = np$, and 
\begin{align*}
    \prob{\frac{1}{n}\left|\sum_{i = 1}^n Y_i - pn \right|\geq \eps} \leq 2e^{-2n\eps^2}.
\end{align*}
\end{theorem}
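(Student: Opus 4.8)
The plan is to prove the tail bound by the standard Chernoff--Cram\'er exponential-moment method; the distributional assertions ($S_n\sim\Binom(n,p)$, $\Expect(S_n)=np$) are immediate from the definition and independence and need no argument. Fix $s>0$. Since $\{S_n-np\ge n\eps\}=\{e^{s(S_n-np)}\ge e^{sn\eps}\}$, Markov's inequality together with independence of the $Y_i$ gives
\[
\prob{S_n-np\ge n\eps}\le e^{-sn\eps}\,\Expect\!\bigl[e^{s(S_n-np)}\bigr]=e^{-sn\eps}\prod_{i=1}^{n}\Expect\!\bigl[e^{s(Y_i-p)}\bigr],
\]
so the whole problem reduces to controlling the centered moment generating function of a single $\Bern(p)$ variable.

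The key step is Hoeffding's lemma: if $X$ has mean zero and takes values in $[a,b]$, then $\Expect[e^{sX}]\le e^{s^2(b-a)^2/8}$. I would prove it by bounding $e^{sX}\le\frac{b-X}{b-a}e^{sa}+\frac{X-a}{b-a}e^{sb}$ pointwise (convexity of $t\mapsto e^{st}$ on $[a,b]$), taking expectations and using $\Expect X=0$ to obtain $\Expect[e^{sX}]\le e^{\phi(u)}$ with $u=s(b-a)$, $\theta=-a/(b-a)$, and $\phi(u)=-\theta u+\ln(1-\theta+\theta e^{u})$. A short computation shows $\phi(0)=\phi'(0)=0$ and $\phi''(u)=q(1-q)$ where $q=\theta e^{u}/(1-\theta+\theta e^{u})\in(0,1)$, hence $\phi''\le\tfrac14$; Taylor's theorem with Lagrange remainder then yields $\phi(u)\le u^2/8$. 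Applying this with $X=Y_i-p$, so $a=-p$, $b=1-p$, $b-a=1$, gives $\Expect[e^{s(Y_i-p)}]\le e^{s^2/8}$.

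Substituting back, $\prob{S_n-np\ge n\eps}\le e^{n(-s\eps+s^2/8)}$ for every $s>0$; the exponent is minimized at $s=4\eps$, giving the one-sided bound $e^{-2n\eps^2}$. For the lower tail I would run the identical argument on the variables $1-Y_i\sim\Bern(1-p)$, which turns $\prob{S_n-np\le -n\eps}$ into an upper-tail event with the same bound $e^{-2n\eps^2}$. A union bound over the two one-sided events then gives $\prob{\tfrac1n\lvert S_n-np\rvert\ge\eps}\le 2e^{-2n\eps^2}$, as claimed.

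I expect the only genuine obstacle to be the estimate $\phi''\le\tfrac14$ underlying Hoeffding's lemma; the rest --- Markov's inequality, factorization over independent coordinates, the one-line optimization in $s$, and the symmetrization for the lower tail --- is routine. If one prefers to avoid the general bounded-variable version, an equivalent route for the Bernoulli case is to compute $\Expect[e^{s(Y_i-p)}]=e^{-sp}(1-p+pe^{s})$ exactly and verify $-sp+\ln(1-p+pe^{s})\le s^2/8$ directly, which is precisely the $\phi''\le\tfrac14$ argument specialized to $\theta=p$.
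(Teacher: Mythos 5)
Your proof is correct. The paper does not actually prove this statement --- it is quoted in the appendix as a standard tail bound (``We use the following variant of Hoeffding's inequality'') --- so there is no in-paper argument to compare against; your derivation via Markov's inequality, factorization over the independent coordinates, Hoeffding's lemma with $b-a=1$ for $\Bern(p)$, optimization at $s=4\eps$, and a union bound over the two tails is the standard and complete proof of exactly the form stated, including the correct constant $2$ in the exponent and the factor $2$ in front.
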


\subsection{The Bitcoin Blockchain System}\label{sec:bitcoin-overview}
In this section, for completeness, we provide a high-level overview of the Bitcoin Blockchain
System. The below is mainly to serve as a reminder of the Bitcoin 
protocol for those unfamiliar with it. 

\myparagraph{High-Level Description}
\label{subsec: high-level description}

The \emph{nodes} in the system represent miners in the Bitcoin cryptosystem 
who mine blocks filled with requests from \emph{clients}. Clients
represent payers who would like to fulfill some \emph{transactions}. The
client issues a \emph{write-request} whenever it wants to send a transaction
to a miner. The miner then attempts to \emph{mine} a block containing
the value of the transaction. Specifically, the following set of steps occur:

\begin{itemize}
\item The payer submits a write-request to the system with a valid transaction as the write ``value'' they want to add to the public ledger. 
\item Every honest miner $i$:
\begin{itemize}
\item has a $\mempool$ which contains a collection of multi-cast transactions received by this miner. Notably, due to issues such as network failures and messages delay, the $\mempool$ kept by different miners might not be identical, and
\item keeps a local valid blockchain $\calC_i$. 
\end{itemize}
\item In each round, each of the miners: 
\begin{enumerate}
\item Blockify its local $\mempool$ (i.e., creates a block of appropriate size that contains a sub-set of the transactions in $\mempool$) and removes those blockified transactions from $\mempool$.
\item Try to add this new block to its local chain $\calC_i$. 
\item If the miner successfully extends its local chain, it multi-casts the updated chain to other miners. 
\item Wait to receive multi-casted chains from others and update its local chain to be the chain that is the longest among the received chains and its current local chain. If there are multiple longest chains, use a {\bf symmetry breaking} mechanism to choose one of them as its new local chain.   
\end{enumerate}
\end{itemize}
In the Bitcoin system, oftentimes, the symmetry is broken in an arbitrary manner, i.e., if there is a tie, an honest node chooses an arbitrary longest chain (e.g.\ the chain it received first). In an adversarial setting, 
this symmetry-breaking strategy could potentially lead to honest nodes
choosing \emph{different} chains frequently. It turns out that this 
symmetry-breaking
rule, with high probability, can guarantee safety as long as it is sufficiently hard to successfully mine a block. However, this is not the
case when the probability of successfully mining a block is large. In 
fact, for such instances, it is important to consider specific 
symmetry-breaking strategies and how they affect the system.

\section{Honest Majority Assumption}\label{sec: honest maj}
The honest majority assumption in the seminal \cite{garay2015bitcoin} is presented below for completeness. 
For ease of comparison, we use the same notation as that in \cite{garay2015bitcoin}
Let $f_0$ be the probability at least one honest node succeeds in finding a proof-of-work (pow) in a round. 
In \cite{garay2015bitcoin}, the notion of the {\em advantage of honest participants} is used, denoted by $\delta$. It is used to bound $\frac{b}{n-b}$. In particular, $\delta$ is chosen so that $\frac{b}{n-b} \le 1-\delta$ always holds. 
\begin{assumption}[Honest Majority Assumption \cite{garay2015bitcoin}]
\label{ass: honest majority}
Given an $\epsilon\in (0,1)$, $n$, and $p$, the maximal number of corrupted nodes $b$ satisfies: 
\begin{itemize}
\item $3f_0+3\epsilon < \delta \le 1$, and 
\item $b \le (1 - \delta)(n-b)$. 
\end{itemize}
\end{assumption}
Notably, by definition of $\delta$, the second bullet in Assumption \ref{ass: honest majority} always holds. Hence, for fixed $\epsilon\in (0,1)$, $n$, and $p$, the real constraint on $b$ is the relation assumed in the first bullet of Assumption \ref{ass: honest majority}. 

\begin{proof}[\bf Proof of Proposition \ref{prop: honest majory necessity}] 
By Assumption \ref{ass: honest majority}, it holds that 
\begin{align}
\label{eq: honest majority}
3f+3\epsilon<\delta \le 1.    
\end{align}
As $\epsilon>0$, \eqref{eq: honest majority} implies that $3f<\delta$. 
Let $f$ denote the probability at least one honest node succeeds in finding a pow in a round.
We have $f\ge 1- \pth{1-p}^{(n-b)}$. So 
\begin{align*}
 3\pth{1- \pth{1-p}^{(n-b)}} \le 3f<\delta. 
\end{align*}
Equivalently, 
\begin{align*}
\log (1-\delta/3) < (n-b)\log(1-p),  
\end{align*}
for arbitrary base of log as long as the base is $\ge 1$. By Taylor expansion, we have 
\[
(n-b)p < \frac{\delta/3}{1-\delta/3} \le \frac{\delta}{2},  
\]
where the last inequality follows from the fact that $\delta\in (0,1)$. 

\end{proof}

\noindent{\em On Remark \ref{rmk: existing slow}: } 
%
To see the claim in Remark \ref{rmk: existing slow}, consider the boundary case where $n=2b+1$ -- the honest nodes barely make it to be the majority of the system. In this case, the upper bound of $p$ in Proposition \ref{prop: honest majory necessity} is 
\begin{align*}
\frac{n-2b}{2(n-b)^2} = \frac{1}{2(n- \frac{n-1}{2})^2} = \frac{2}{(n+1)^2}.      
\end{align*}
Thus, 
in expectation, it takes at least $n+1$ rounds for the honest nodes to mine a block collectively. 
Such a low block generating speed makes it unlikely to have multiple longest chains unless the network delay is very serious. This observation also justifies why the choice of symmetry breaking rules does not matter much in \cite{garay2015bitcoin,pass2017analysis}. 

This observation holds not only for the boundary case when $n=2b+1$ but also for more general $b$. For ease of illustration, let's consider the sequence of $b_k$ for $k=1, \cdots, \lfloor\frac{n-1}{2}\rfloor$ with $b_k:= \lfloor\frac{k}{2k+1}n \rfloor$. Without loss of generality, assume that $\frac{k}{2k+1}n$ is an integer for all $k$ under consideration. For a system with up to $b_k$ corrupted nodes, the upper bound in Proposition \ref{prop: honest majory necessity} lies in between $\pth{\frac{1}{2(k+1)n}, ~ ~ \frac{1}{(k+1)n}}$. 
In a sense, the Honest Majority Assumption (formally stated in \ref{ass: honest majority}) requires the mining puzzle becomes harder as $b \to \frac{1}{2}n$. That is, Assumption \ref{ass: honest majority} requires the system to trade off liveness for tolerating more corrupt nodes. 
\fi

\section{Proof of Theorem \ref{lm: p=1 inde random}}
\label{app: proof of p =1}

\begin{proof}[\bf Proof of Theorem \ref{lm: p=1 inde random}]
We formalize the arguments of 
the main proof ideas in Section \ref{sec: warmup}.  
Let $\{c_1, \cdots, c_n\}$ be a set of $n$ different colors. We associate each node in the system with a color. We use $(c_i, t)$ to denote the block generated by honest node $i$ during round $t$ and $\pth{c_0, 0}$ to denote the genesis block. 
We use $(c_i, t)\to (c_{i^{\prime}}, t-1)$ to denote the event that block $(c_i, t)$ is attached to block $(c_{i^{\prime}}, t-1)$, which occurs with probability $\frac{1}{n}$ under our symmetry-breaking rule. 
To quantify the maximal inconsistency of the longest chains of round $T$, we consider the following auxiliary random process. It  can be easily shown that there is a bijection between the sample paths of the Bitcoin blockchain protocol and the sample paths of this auxiliary process, and that the auxiliary process and 
the original blockchain protocol with random symmetry breaking have the same probability distribution. 

\vskip 0.2\baselineskip
\noindent {\em \underline{Auxiliary random procedure}:}  For any given $T\ge 1$, do the following:  \\
     (i) Let each color generate a block for each of the rounds in $\sth{1, 2, \cdots, T}$; \\ 
    (ii) Attach each of the block $\pth{c_i, 1}$ for $i=1, \cdots, n$ to the genesis block $\pth{c_0, 0}$; \\  
    (iii) For each $t\ge 2$ and each $\pth{c_i, t}$, attach it to one of the blocks $\sth{\pth{c_i, t-1}, i=1, \cdots, n}$ uniformly at random  (i.e., with probability $1/n$).  
%
%

%
%
\vskip 0.2\baselineskip
\noindent {\em \underline{Connecting to coalescing random walks}:} Here, we formally quantify the connection between the maximal inconsistency among the longest chains of round $T$ with the coalescing time of $n$ random walks on an $n$-complete graph. 
Since $p=1$ and there is no adversary, the number of longest chains received by each honest node at each round is $n$. 
Let $C(T, c_1), \cdots, C(T, c_n)$ be the $n$ longest chains of round $T$ ending with blocks $\pth{c_1, T}, \cdots, \pth{c_n, T}$, respectively. 
We first show that each of these $n$ chains can be coupled with a random walk on the $n$-complete graph. 
Without loss of generality, let's consider $C(T, c_1)$ which can be expanded as 
\begin{align}
\label{eq: chain expansion}
\mbox{$\squeezespaces{0.3}
C(T, c_1) : = \pth{c_0, 0} \gets \pth{c_{i_1}, 1} \gets \cdots \gets \pth{c_{i_{t-1}}, t-1} \gets \pth{c_{i_{t}}, t} \gets \cdots \gets   \pth{c_{i_{T-1}}, T-1} \gets \pth{c_1, T},$}  
\end{align}
where $c_{t}$ is the color of the $\pth{t+1}$-th block in the chain. Note that the chain $C(T, c_1)$ is random because the sequence of block colors 
$c_0c_{i_1}\cdots c_{i_{t-1}} c_{i_t} \cdots c_{i_{T-1}}c_1$ is random. Moreover, the randomness in $C(T, c_1)$ is fully captured in the randomness of the block colors. We have 

\begin{align*}
& \mbox{$\squeezespaces{0.4}\prob{C(T, c_1)  =\pth{c_0, 0} \gets \pth{c_{i_1}, 1} \gets \cdots \gets \pth{c_{i_{t-1}}, t-1} \gets \pth{c_{i_{t}}, t} \gets \cdots \gets   \pth{c_{i_{T-1}}, T-1} \gets \pth{c_1, T}}$} \\
& \overset{(a)}{=} \prob{\pth{c_0, 0} \gets \pth{c_{i_1}, 1}} \prod_{t=2}^{T}\prob{\pth{c_{i_{t-1}}, t-1} \gets \pth{c_{i_{t}}, t}}  \\
& = \prod_{t=2}^{T}\prob{\pth{c_{i_{t-1}}, t-1} \gets \pth{c_{i_{t}}, t}},
\end{align*}

where the last equality is true as $\prob{\pth{c_0, 0} \gets \pth{c_{i_1}, 1}} =1$, and the equality (a) holds because
under our symmetry-breaking rule,
neither the previous history up to round $t$ nor the future block attachment choices after round $t$ affects the choice of the chain extension in round $t$. 
Moreover, the probability of any realization of the color sequence $c_0c_{i_1}\cdots c_{i_{t-1}} c_{i_t} \cdots c_{i_{T-1}}c_1$ (i.e., a sample path on the block colors in Bitcoin) is $\pth{\frac{1}{n}}^{T-1}$.
Let's consider the complete graph with vertex set $\sth{c_1, c_2, \cdots, c_n}$. 
Under our symmetry breaking rule, the backwards color sequence $c_1c_{i_{T-1}}\cdots c_{i_t}c_{i_{t-1}} \cdots c_{i_1}$ (without considering the genesis block) is a random walk on the $n$-complete graph starting at vertex $c_1$. Similarly, we can argue that $C(T, c_2), \cdots, C(T, c_n)$ correspond to $n-1$ random walks on the $n$-complete graphs starting at vertices $c_2, \cdots, c_n$, respectively. As argued in the {\bf main proof ideas} paragraph, these $n$ random walks are not fully independent. In fact, they are coalescing random walks, and their coalescence is exactly the maximal inconsistency among the longest chains of round $T$. 

With the above connection of 
the longest chain protocol augmented by uniformly-at-random symmetry breaking with coalescing random walks.
We conclude by applying Theorem \ref{thm: coalsecing rw complete}. 
\end{proof}


\iflong
\section{Missing proofs and auxiliary results for Section \ref{subsec: general p and Adversary-free}}
\label{app subsec: general p and Adversary-free}
%
%

%

%
%

\begin{proof}[Proof of Lemma \ref{lm: lazy-coalescence-time}]
To characterize the coalescence time, similar to the analysis in \cite{cooper2013coalescing}, for any given $k\in \{ 1, \cdots, n_g\}$, we construct a larger graph $Q=Q_k=(V_Q, E_Q)$, where $V_Q=V^k$ and two vertices $\bm{v}, \bm{w}\in V^k$ if $\sth{v_1, w_1}, \cdots, \sth{v_k, w_k}$ are edges of $G$. Let $M_k$ be the time until the first meeting in the original graph $G$. 
Let $S\subseteq V_Q$ denote the set of all possible configurations of the locations of the $n_g$ random walks at the first meeting, 
\begin{align}
\label{def: Qk-first-meeting}
S_k = \sth{(v_1, \cdots, v_k): v_i = v_j ~~~ \text{for some } 1\le i < j \le k}. 
\end{align}
It is easy to see that there is a direct equivalence between the $u$-lazy random walks on $G$ and the single $u$-lazy random walk on $Q$. 
Since $Q$ is a complete graph with self-loops, the limiting distribution of lazy random walk on $Q$ is the same as the standard random walk on $Q$. 
Let $\bm{\pi}^{Q} \in \reals^{|V^k|}$ be the stationary distribution of a standard random walk on $Q$ and let $\pi^Q_{S_k} = \sum_{\bm{v} \in S_k} \pi^Q_{\bm{v}}$. 
By \cite[Lemma 4]{cooper2013coalescing}, we know that for any $1\le k \le k^*$ where $k^* \triangleq \max\{2, \log n_g\}$, it holds that 
\begin{align*}
\pi^Q_{S_k}  \ge \frac{k^2}{8n_g}.     
\end{align*}
Let $H_{\bm{v}, S_k}$ denote the hitting time of vertex set $S_k$ starting from vertex $v$ and let 
$$H_{\bm{\pi}}^Q(H_{S_k}) = \sum_{\bm{v}\in V^k} \bm{\pi}^Q_{\bm{v}} H_{\bm{v}, S_k}$$ 
denote the expected hitting time of $S_k$ from the stationary distribution $\bm{\pi}^Q$. 
From \cite[Lemma 2.1]{aldous2002reversible} and the fact we can contract the vertex set $S_k$ into one pseudo vertex, similar to \cite[proof of Theorem 2]{cooper2013coalescing}, 
we have that 

\begin{align*}
\mathbb{E}_{\bm{\pi}^Q}[H_{S_k}]  &= \frac{\sum_{t=0}^{\infty} \pth{P_{S_k}^t(S_k) -\bm{\pi}^Q_{S_t}}}{\bm{\pi}^Q_{S_k}}  = \frac{\sum_{t=0}^{\infty} \pth{(1-u)^t + \pth{1 - (1-u)^t}\bm{\pi}^Q_{S_k} - \bm{\pi}^Q_{S_k} }}{\bm{\pi}^Q_{S_k}} \\
& \le \frac{8n_g}{k^2}  \frac{1}{u}\pth{1- \bm{\pi}^Q_{S_k}} \le \frac{8n_g}{uk^2}. 
\end{align*}
In addition, by conditioning on whether the particles stay at their initial locations or not, we have 
\begin{align*}
\expect{M_k} = \pth{1-u}\pth{1 + \expect{M_k}} + u \pth{1+\mathbb{E}_{\bm{\pi}^Q}\pth{H_{S_k}}},      
\end{align*}
which implies that 
\begin{align*}
\expect{M_k}  \le \frac{1}{u}\pth{1+ \frac{8n_g}{k^2}} = O\pth{\frac{n_g}{uk^2}}. 
\end{align*}
Thus, for any $k$ such that $1\le k \le k^* = \sth{2, \log n_g}$,  we have 
\begin{align*}
\expect{C_k} \le \sum_{s=2}^k \expect{M_s} \le O(n_g/u).     
\end{align*}

Let $\calW_u$ be a lazy random walk on the complete graph $G$ with initial location $u$.  In each round, with probability $(1-u)$, $\calW_u$ stays at its current location and with probability $u$ it  moves to one of the current neighbors (including self-loops) uniformly at random. Let $\bm{\pi}^G$ the limiting distribution of the location vertex of $\calW_u$. 
By \cite[Eq.(8)]{cooper2013coalescing}, its mixing time is $t_{mix} = \frac{3\log n_g}{\log(1/(1-u))}$, i.e., for any given $u\in V$, when $t\ge \lceil  \frac{3\log n_g}{\log(1/(1-u))} \rceil$, 
\begin{align*}
\|P_u^t - \bm{\pi}^G\|_1 &= \sum_{v\in V}\abth{P_u^t(v) - \bm{\pi}^G_v} \\
& = \abth{1-\pi^G_u}(1-u)^t + \sum_{v: v\in V, v\not=u}\abth{\pth{1-(1-u)^t}\bm{\pi}^G_v - \bm{\pi}^G_v} \\
& \le 2(1-u)^t 
\le \frac{2}{n_g^3} \le \frac{1}{n_g^2}.   
\end{align*} 
Here, with a little abuse of notation, we use $P_u^t$ to denote the distribution of the state of $\calW_u$ at round $t$. 
Let $t^* = k^* \log n_g \pth{k^* t_{mix} + 3 \mathbb{E}_{\bm{\pi}^Q}\pth{H_{S_{k^*}}}}$. Following the arguments in \cite[Section 5]{cooper2013coalescing}, we have 
\begin{align*}
C(n_g) &\le 4t^* + \expect{C_{k^*}} \\
& \le 4 \log n_g \pth{k^* t_{mix} + 3 \mathbb{E}_{\bm{\pi}^Q}\pth{H_{S_{k^*}}}} + O(n_g/u) \\
& \le \frac{4\log^4n_g}{\log \frac{1}{1-u}} + 12 \log^2n_g \frac{8n_g}{u\log^2 n_g} + O(n_g/u)\\
& \le \frac{4\log^4n_g}{u} + \frac{96n_g}{u} + O(n_g/u) \\
& = O(n_g/u),
\end{align*}
where the last inequality follows from $\log {1/(1-u)} \ge u$.

\end{proof}

The following lemma will be used in the proof of Theorem \ref{lm: general p inde random}

\begin{lemma}
\label{lm: balls and bins}
Suppose that there $k$ balls. 
Let $X$ be the number of non-empty bins if we throw each of the $k$ balls into $b$ bins, where $k\le b$, uniformly at random. 
Let $\tilde{X}$ be the number of non-empty bins if we throw each of the $k$ balls into $b+\Delta$ bins, where $\Delta \in \naturals$, uniformly at random. 
Then $\tilde{X}$ first-order stochastically dominates $X$. 
\begin{align*}
\prob{X \le l} \ge \prob{\tilde{X} \le l} ~~ \forall l.  
\end{align*}

\end{lemma}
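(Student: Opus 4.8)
The plan is to prove the first-order stochastic dominance by exhibiting a coupling of the two experiments under which the number of occupied bins in the $(b+\Delta)$-bin experiment is pathwise at least that in the $b$-bin experiment. The key device is to replace the ``one-shot'' throw of all $k$ balls by an equivalent \emph{sequential} description. Fixing $m$ bins, throw the balls one at a time; let $j^{(m)}_t$ be the number of distinct occupied bins after the first $t$ balls. Then $j^{(m)}_0=0$, the final count $j^{(m)}_k$ has exactly the occupancy distribution we care about, and $(j^{(m)}_t)_{t\ge 0}$ is a Markov chain: conditioned on the past, ball $t$ opens a fresh bin with probability $(m-j^{(m)}_{t-1})/m$, a quantity depending on the history only through $j^{(m)}_{t-1}$. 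So $X$ is the value at time $k$ of this chain with $m=b$, and $\tilde X$ the value at time $k$ with $m=b+\Delta$.

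Next I would couple the two chains by driving them with a common sequence of i.i.d.\ uniforms $U_1,\dots,U_k$ on $[0,1]$: set $j^{(b)}_t = j^{(b)}_{t-1}+\indc{U_t\le (b-j^{(b)}_{t-1})/b}$ and likewise $j^{(b+\Delta)}_t = j^{(b+\Delta)}_{t-1}+\indc{U_t\le (b+\Delta-j^{(b+\Delta)}_{t-1})/(b+\Delta)}$. Each chain then keeps its correct marginal law, so it remains only to establish the invariant $j^{(b+\Delta)}_t\ge j^{(b)}_t$ for all $t$, by induction on $t$ (the base case $t=0$ being trivial). Two cases arise at a step $t$, given $j^{(b+\Delta)}_{t-1}\ge j^{(b)}_{t-1}$. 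If the two counts are strictly ordered, the gap cannot close, because each chain moves up by at most one per step, so $j^{(b+\Delta)}_t\ge j^{(b+\Delta)}_{t-1}\ge j^{(b)}_{t-1}+1\ge j^{(b)}_t$. If instead they are equal to a common value $j$, the relevant comparison is between the two ``open a new bin'' thresholds, and
\begin{align*}
\frac{b+\Delta-j}{b+\Delta}-\frac{b-j}{b}=\frac{j\Delta}{b(b+\Delta)}\ge 0 ,
\end{align*}
so the $(b+\Delta)$-chain's threshold dominates, giving $\indc{U_t\le(b+\Delta-j)/(b+\Delta)}\ge\indc{U_t\le(b-j)/b}$ and hence $j^{(b+\Delta)}_t\ge j^{(b)}_t$. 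This yields $\tilde X = j^{(b+\Delta)}_k\ge j^{(b)}_k = X$ almost surely under the coupling, which is exactly $\prob{X\le l}\ge\prob{\tilde X\le l}$ for every $l$; the hypothesis $k\le b$ plays no role.

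The main obstacle is choosing the right coupling, not executing the computations. The tempting approaches that work directly with bin labels --- throwing into $b+\Delta$ bins and then merging a random bin into another, or keeping each ball with probability $b/(b+\Delta)$ and re-randomizing the rest over the $\Delta$ extra bins --- either fail to produce i.i.d.\ uniform placements on $b$ bins, or fail to be monotone, since a single occupied bin can get ``split'' across several bins and push the count in the wrong direction. Passing to the sequential occupancy chain dissolves this difficulty: the per-step dynamics depend on the configuration only through its \emph{size}, so the coupling has to track a single integer rather than a set partition, and the whole argument reduces to the two-case induction above, whose one nontrivial ingredient is the elementary monotonicity of $j\mapsto(m-j)/m$'s complement $ (m-j)/m$ in $m$.
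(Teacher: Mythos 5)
Your proof is correct and follows essentially the same route as the paper's: both sequentialize the throws, induct on the number of balls, and in the case where the two occupancy counts coincide construct a monotone coupling of the one-step transitions (the paper does this by explicitly specifying conditional placement probabilities, you do it by driving both chains with a common uniform and comparing the thresholds $(m-j)/m$). The threshold computation $\frac{b+\Delta-j}{b+\Delta}-\frac{b-j}{b}=\frac{j\Delta}{b(b+\Delta)}\ge 0$ is the same monotonicity fact underlying the paper's coupling, so no substantive difference.
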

\begin{proof} 

Intuitively speaking, since $b <b+\Delta$, collisions are more likely to occur when fewer bins are available. Hence, $X_2$ first-order stochastically dominates $X_1$. For the sake of peace of mind, a formal proof is given below. 

Let's consider the mental process wherein we throw the balls into bins one by one. 
Let $Y_t$ after we throw $t$ balls into $b$ bins. Similarly, $Z_t$ be the number of nonempty bins we throw $t$ balls into $b+\Delta$ bins.  
We show Lemma \ref{lm: balls and bins} by induction on $t$. \\
Clearly, $Y_1 = 1= Z_1$. \\ 
Induction hypothesis: Suppose for $t\le k-1$, there exists a coupling between the marginal probabilities of the above two ball throwing processes such that under this coupling 
\begin{align}
\label{eq: induction hypothesis}
Y_t \le Z_t. 
\end{align}

When $Y_t \le Z_t-1$, by Eq.\eqref{eq: induction hypothesis} and the monotonicity of $Y$ and $Z$, it holds that 
$Y_{t+1} \le Y_t +1 \le Z_t \le Z_{t+1}$. 
It remains to consider the case where $Y_t = Z_t.$ It is easy to see that $Y_{t+1}=Y_t$ if the $t+1$--th was thrown into the existing non-empty bins, which occurs with probability $\prob{Y_{t+1}=Y_t} = \frac{Y_t}{b}$. 
Similarly, $\prob{Z_{t+1}=Z_t} = \frac{Z_t}{b+\Delta}.$ For ease of exposition, let $Z_{t+1}=Z_t = \gamma$. 
Consider the following coupling: 
\begin{itemize}
    \item If the $(t+1)$--th ball of the second bins-and-balls process is thrown into the $Z_t$ existing nonempty bins, then put the $(t+1)$--th ball of the first bins-and-balls process uniformly at random into its $Y_t$ existing nonempty bins. 
    \item If the $(t+1)$--th ball of the second bins-and-balls process is thrown into an empty bin, then with probability $\frac{\Delta \gamma}{b(b+\Delta-\gamma)}$ put the $(t+1)$--th ball of the first bins-and-balls process uniformly at random into one existing nonempty bin. With probability $1-\frac{\Delta \gamma}{b(b+\Delta-\gamma)}$, put the $(t+1)$--th ball of the first bins-and-balls process
    into one empty bin uniformly at random. 
\end{itemize}
It is easy to see that in the above coupling, the $(t+1)$--th ball of the first bins-and-balls process is throw into a bin (regardless whether it is empty or not) with probability $\frac{1}{b}$. 
Moreover, with this coupling and the induction hypothesis, we know that 
\begin{align*}
Y_{t+1} \le Z_{t+1},    
\end{align*}
completing the induction proof. Hence, $X=Y_k \le Z_k=\tilde{X}$. Therefore, 
\begin{align*}
\prob{\tilde{X}\le l} \le \prob{X\le l}, \, \, \forall \, l,  
\end{align*}
i.e., $\tilde{X}$ first-order stochastically dominates $X$. 

\end{proof}
\else
\fi

\begin{proof}[\bf Proof of Theorem \ref{lm: general p inde random}]
For any $t$, the expected length of a longest chain is 
$1 + \pth{1-\pth{1-p}^n}t.$ 
%
When $p< \frac{4\ln 2}{n}$, we can use Poisson approximation to approximate the distribution of number of blocks in each round. A straightforward calculation shows that the probability of having exactly one block in a round is $np \exp\pth{-np}. $
Thus, in expectation, the maximal inconsistency is at most $\frac{1}{np \exp\pth{-np}}$. 
Henceforth, we restrict our attention to the setting where $p\ge \frac{4\ln 2}{n}$ and quantify the expected maximal inconsistency among the longest chains of round $t$.  
We first consider a coarse analysis whose arguments are similar to the proof of Theorem \ref{lm: p=1 inde random} and derive a bound on the maximal inconsistency via stochastic dominance. 
Though the obtained bound could be very loose, based on the insights obtained in this coarse analysis, we can come up with a much fine-grained analysis, which significantly improves the bound on maximal inconsistency. 

\vskip 0.3\baselineskip 
\noindent{\bf A coarse analysis:} 
Let $\{c_1, \cdots, c_n\}$ be a set of $n$ different colors. We temporarily associate each node in the system with a color. \footnote{In our fine-grained analysis, the color of a block will be re-assigned. } 
If node $i$ mines a block during round $t$, we denote this block by $(c_i, t)$. In addition, we use $\pth{c_0, 0}$ to denote the genesis block. 
We use $(c_i, t)\to (c_{i^{\prime}}, t-1)$ to denote the event that both blocks $(c_i, t)$ and  $(c_{i^{\prime}}, t-1)$ exist and that block $(c_i, t)$ is attached to block $(c_{i^{\prime}}, t-1)$, which, under our symmetry-breaking rule,  occurs with probability 
\begin{align*}
\frac{\indc{\text{node $c_i$ mines a block during round $t$}}\indc{\text{node $c_{i^{\prime}}$ mines a block during round $t-1$}}}{\sum_{i^{\prime}=1}^n \indc{\text{node $c_{i^{\prime}}$ mines a block during round $t-1$}}}.    
\end{align*}
Notably, in the Bitcoin protocol, there are two sources of randomness: (1) the randomness in generating blocks and (2) the randomness in the block attachments. 
To quantify the maximal inconsistency of the longest chains of round $T$, we consider the following auxiliary random process. It can be easily shown that there is a bijection between the sample paths of the Bitcoin blockchain protocol and the sample paths of this auxiliary process, and that the auxiliary process and 
the original blockchain protocol with random symmetry breaking have the same probability distribution. 

\vskip 0.2\baselineskip
\noindent {\em \underline{Auxiliary random procedure}:}  For any given $T\ge 1$, do the following:  \\
     (i) For each of the rounds in $\sth{1, 2, \cdots, T}$, let each node/color generate a block with probability $p$ independently of other nodes and independently across rounds. For ease of exposition, we refer to the blocks mined in round $t$ as the blocks in layer $t$.  
     \\ 
    (ii) Attach each of the block $\pth{c_i, 1}$, if exists, for $i=1, \cdots, n$ to the genesis block $\pth{c_0, 0}$; \\  
    (iii) For each $t\ge 2$ and each $\pth{c_i, t}$ that exits, attach it to one of the blocks in layer $(t-1)$. If layer $(t-1)$ is empty, let 
    \begin{align*}
        t^{\prime} \triangleq \max\sth{r:  \text{layer $r$ is nonempty } \text{and} ~ r\le t}, 
    \end{align*}
    and let each existing $\pth{c_i, t}$ uniformly at random chooses one ancestor block in block layer $t^{\prime}$. 

\vskip 0.2\baselineskip
\noindent {\em \underline{Connecting to coalescing random walks}:} We first build a coarse connection between the maximal inconsistency among the longest chains of round $T$ with the coalescing time of $n$ random walks on an $n$-complete graph. A much fine-grained connection to coalescing random walks on $2np$-complete graph in given in fine-grained analysis part of this proof. 
It is easy to see that the number of blocks mined in each round $t$, denoted by $N_t$, follows the $\Binom(n,p)$ distribution. 
Without loss of generality, we assume that $N_T\not=0$. If this does not hold, then we can replace $T$ by the most recent round $T^{\prime}$ such that $N_{T^{\prime}}\not=0$ and the remaining proof goes through. 
Since there is no adversary, the number of longest chains at the end of round $T$ is $N_T$, each of which ends with a block in block layer $T$. 
Let $C(T, c_{1}^{\prime}), \cdots, C(T, c_{N_T}^{\prime})$ be the $N_T$ longest chains of round $T$ ending with blocks $\pth{c_{1}^{\prime}, T}, \cdots, \pth{c_{N_T}^{\prime}, T}$, respectively. 
We first show that each of these $N_T$ chains can be coupled with a process that is a variant of a random walk on the $n$-complete graph. 
Without loss of generality, let's consider $C(T, c_1^{\prime})$ which can be expanded as 
\begin{align}
\label{eq: chain expansion general p}
C(T, c_1^{\prime}) : = \pth{c_0, 0} \gets \pth{c_{i_1}, 1} \gets \cdots \gets \pth{c_{i_{k-1}}, k-1} \gets \pth{c_{i_{k}}, k} \gets \cdots \gets   \pth{c_{i_{K-1}}, K-1} \gets \pth{c_{1}^{\prime}, T},
\end{align}
where $c_{k}^{\prime}$ is the color of the $\pth{k+1}$-th block in the chain and $K$ is the number of non-empty block layers under event $E$ -- in the realization of block mining. Recall that for general $p\in (0, 1)$ there are two sources of randomness (1) the randomness in block generating and (2) the randomness in block attachment. 
Consequently, the sequence of block colors $c_0c_{i_1}\cdots c_{i_{k-1}} c_{i_k} \cdots c_{i_{K-1}}c_1^{\prime}$ is random in that, roughly speaking, the ``feasibility'' of $c_{i_k}$ is determined by whether node $i_k$ mines a block during round $k$ or not, and the ordering of the ``feasible" colors is determined by the attachment choices. 
Let $E$ be any realization of the block mining for the first $T$ rounds, which corresponds to any realization of step (i) of the auxiliary process.
We have 
\begin{align*}
& \prob{C(T, c_1^{\prime}) \mid E} \\ 
& = \prob{\pth{c_0, 0} \gets \pth{c_{i_1}, 1} \gets \cdots \gets \pth{c_{i_{k-1}}, k-1} \gets \pth{c_{i_{k}}, k} \gets \cdots \gets   \pth{c_{i_{K-1}}, K-1} \gets \pth{c_1^{\prime}, T} ~ \mid ~ E} \\
& \overset{(a)}{=} \prob{\pth{c_0, 0} \gets \pth{c_{i_1}, 1} \mid E} \pth{\prod_{k=2}^{K-1}\prob{\pth{c_{i_{k-1}}, k-1} \gets \pth{c_{i_{k}}, k} \mid E}} \prob{\pth{c_{i_{K-1}}, K-1} \gets \pth{c_1^{\prime}, T} ~ \mid ~ E }  \\
& = \pth{\prod_{k=2}^{K-1}\prob{\pth{c_{i_{k-1}}, k-1} \gets \pth{c_{i_{k}}, k} \mid E}} \prob{\pth{c_{i_{K-1}}, K-1} \gets \pth{c_1^{\prime}, T} \mid  E }, 
\end{align*}
where the last equality is true as $\prob{\pth{c_0, 0} \gets \pth{c_{i_1}, 1} \mid E} =1$, and the equality (a) holds because
under our symmetry-breaking rule,
neither the previous history up to round $t$ nor the future block attachment choices after round $t$ affects the choice of the chain extension in round $t$. 
Moreover, the conditional probability of any realization of the color sequence $c_0c_{i_1}\cdots c_{i_{k-1}} c_{i_k} \cdots c_{i_{K-1}}c_1^{\prime}$ conditioning on $E$ (i.e., a sample path on the block colors in Bitcoin) is $\prod_{t: 2\le t \le T, \& n_t\not=0} \frac{1}{n_t}$ where $N_t = n_t$ for all $t\in \{2, \cdots, T\}$. 
Let's consider the complete graph with vertex set $\sth{c_1, c_2, \cdots, c_n}$. 
Under our symmetry breaking rule, conditioning on $E$, the backwards color sequence $c_1^{\prime}c_{i_{K-1}}\cdots c_{i_k}c_{i_{k-1}} \cdots c_{i_1}$ (without considering the genesis block) is 
a walk, though not the standard random walk, of length $T$ on the $n$-complete graph with initial location $c_1^{\prime}$. 
Similarly, we can argue that $C(T, c_2^{\prime}), \cdots, C(T, c_{N_T}^{\prime})$ correspond to $\pth{N_T-1}$ walks on the $n$-complete graphs starting at vertices $c_2^{\prime}, \cdots, c_{N_T}^{\prime}$, respectively. Similar to the argument in the proof of Theorem \ref{lm: p=1 inde random}, conditioning on $E$, there is an one-to-one correspondence between the event of the forking of the chains $C(T, c_1^{\prime}), \cdots, C(T, c_{N_T}^{\prime})$ and the event of coalescence of the backwards walks; that is, the maximal inconsistency of the longest chains of round $T$ is the same as the coalescence time of the $N_T$ walks on the $n$-complete graphs. 
These $N_T$ walks are more likely to coalesce than the standard random walks whose transition probability is $1/n$, whereas under any $E$, $N_t = n_t \le n$ for $t=1, \cdots, T$; this fact can be formally shown via 
\iflong
Lemma \ref{lm: balls and bins}. 
\else
Lemma 30 of \fl.
\fi
Hence, 
the conditioning on $E$, the maximal inconsistency is upper bounded by the coalescence time of the standard random walks on $n$-complete graph with $n$ particles, which is $O(n)$.
Since this is true for all possible block mining realization $E$, we conclude that the maximal inconsistency is upper bounded by $O(n)$.

\vskip 0.3\baselineskip 
\noindent{\bf A fine-grained analysis:} 
Let $E$ any realization of the block mining for the first $T$ rounds. 
If not explicitly mentioned, the following arguments are stated conditioning on $E$. 
To conclude the proof, towards the end of this proof, we take average over all possible events $E$. 

The bound on the maximal inconsistency is $O(n)$ which could be loose for a wide value range of $p$.  
This is because the upper bound on $N_t$ is loose. 
Recall that $N_t$ is a $\Binom(n,p)$. Thus, $\expect{N_t} = np \ll n$ as long as $p=o(1)$. 
Observing this, in this {\em fine-grained analysis}, we first construct a lazy version of the $N_T$ backwards walks whose expected coalescence time is at least the coalescence time of the original $N_T$ backwards walks. Then re-color the mined blocks so that the re-colored lazy version of the $N_T$ original walks are walks on at most $2np$ colors only. 
Then, we connect these lazy walks with the a lazy version of random walks each of which, if not stay at their current locations concurrently, moves to one of the neighboring colors (including its current color) with probability $\frac{1}{2np}$.   
Finally, by changing of the order of taking expectation, we show that the maximal inconsistency is upper bounded by the expected coalescence time of the $(1-2 \exp\pth{-\frac{1}{3}np})$-lazy random walks. 
We conclude the proof of this theorem by applying Lemma \ref{lm: lazy-coalescence-time}. 


\vskip 0.2\baselineskip
\noindent{\em \underline{Lazy walks construction:} } 
Consider the following $n_T$ lazy version of the backwards coalescing walks. 
For each $s=1, \cdots, K-1$, if $n_{K-s} \le 2np$, each of the remaining walks moves to one of the block color in layer $(K-s)$ uniformly at random. If two or more walks visit the same color, then these walks coalesce into one. 
If $n_{K-s} > 2np$, we let the remaining walks stay at their current color vertices. 
Clearly, this lazy version of the $N_T$ backwards walks are more likely to coalesce than the original $N_T$ walks. Let $C_{l}(E)$ denote the expected number of backwards steps until all the lazy $N_T$ walks coalesce.

\vskip 0.2\baselineskip
\noindent{\em \underline{Color re-assignment:} } 
Next we show that, under any $E$, $C_{l}(E)$ is upper bounded by the expected coalescence time of $\max\{2np, n_T\}$ lazy random walks on the $2np$-complete graph. Towards this, we first do color-reassignment, detailed as follows.  
Let $\{\tilde{c}_1, \cdots, \tilde{c}_n\}$ be a set of $n$ different colors.  
We (re-)assign a color to each of the mined block in different block layers as follows:  
Assign color $\tilde{c}_1$ to the genesis block. For each $t=1, \cdots, T$ such that $N_t\not=0$, let $i_1^t < \cdots <i_{N_t}^t$ be the indices of the nodes/original colors each of which successfully mines a block during round $t$.   
%
Re-assign colors $\tilde{c}_1, \cdots, \tilde{c}_{N_t}$ to blocks $\pth{c_{i_1^t}, t}, \cdots, \pth{c_{i_{N_t}^t}, t}$. 
The re-colored blocks are denoted as $\pth{\tilde{c}_1, t}, \cdots, \pth{\tilde{c}_{N_t}, t}$, respectively. 

Notably, different from the color assignment we used in the proof for the case when $p=1$, under the above color re-assignment rule, the blocks mined by the same node at different rounds could be assigned different colors. 
Fortunately, it is easy to see that the blocks attachments are independent of the color assignments. In particular, it is still true that the maximal inconsistency among the longest chains of round $T$ is the same as the coalescence time of the corresponding backwards walks on those colors. Moreover, it is still true that the expected coalescence time of the re-colored $N_T$ walks is upper bounded by $C_l(E)$ the lazy version of the re-colored $N_T$ walks. More importantly, the re-colored lazy $N_T$ walks, expect for their initial colors, are the walks on at most $2np$ colors in each round. 

Consider the following $\max\sth{2np, N_T}$ lazy coalescing random walks on the $2np$-complete graph with arbitrary but distinct initial locations. 
For each $s=1, ... T$, if the above $N_T$ lazy version of the walks on colors stay at their own locations concurrently (i.e., $N_{T-s}\ge 2np$) or $N_{T-s} =0$, then each of the remaining random walks on the $2np$-complete graph also stay at their current locations. If otherwise (i.e., each of the lazy walks on colors moves to one of the colors assigned to the blocks in the proceeding non-empty layer uniformly at random), we let each of the remaining walks on the $2np$-complete graph moves to one of the $2np$ vertices uniformly at random.
\iflong
By Lemma \ref{lm: balls and bins},
\else
By Lemma 30 of \fl,
\fi
we know the expected coalescence time of the $N_T$ lazy walks on colors is upper bounded by that of the $\max\sth{2np, N_T}$ walks on the $2np$-complete graph which is again upper bounded by the expected coalescence time of $2np$ lazy walks on the $2np$-complete graph, denoted by $C_{l,np}(E)$. 

Next we consider averaging over all the realizations of $E$. For any given $T$, with the above arguments, we know that  the expected maximal inconsistency is upper bounded by 
\begin{align*}
\sum_{E}C_{l,np}(E)\prob{E}.     
\end{align*}
Note that by construction, $C_{l,np}(E)$ depends on $E$ only through the number of blocks mined in each round.  In particular, it only depends on each $N_t$ in whether $1\le N_t<2np$ holds or not. So we have 
\begin{align*}
&\sum_{E}C_{l,np}(E)\prob{E}  = \sum_{n_1, \cdots, n_{T-1}}C_{l,np}(n_1, \cdots, n_{T-1})\prob{N_t=n_t, \forall ~ t\le T-1}\\
& = \sum_{n_1, \cdots, n_{T-2}}\pth{\sum_{n_{T-1}} C_{l,np}(n_1, \cdots, n_{T-1}) \prob{N_{T-1} = n_{T-1}}}   \prob{N_t=n_t, \forall ~ t\le T-2} \\
& = \sum_{n_1, \cdots, n_{T-2}} \left( C_{l,np}(n_1, \cdots, n_{T-1}\in \{2np, \cdots, n\}\cup \{0\}) \prob{N_{T-1} \ge  2np  \text{ or } N_{T-1} =0}    \right. \\
&  +\left.    C_{l,np}(n_1, \cdots, n_{T-1}\in \{1, \cdots, 2np-1\}\ ) \prob{1\le N_{T-1} <  2np }     \right)   \prob{N_t=n_t, \forall ~ t\le T-2}.
\end{align*}
Note that when $n_{T-1}\in \{2np, \cdots, n\}\cup \{0\}$, each of the $2np$ random walks stay at their initial locations concurrently with occurs with probability $\prob{N_{T-1} \ge  2np  \text{ or } N_{T-1} =0}$, and when $n_{T-1}\in \{1, \cdots, 2np-1\}$, each of the $2np$ random walks take one step standard coalescing random walks. That is, the $2np$ random walks are performing one step $\pth{1 - \prob{N_{T-1} \ge  2np  \text{ or } N_{T-1} =0}}$-lazy random walk on the $2np$-complete graph. Since $N_t$ is $\iid$ across $t$, we can repeat this argument for $T-1$ times. In fact, we can exchange the order taking expectation over $E$ and taking expectation over the realization of the walks. Hence, $\sum_{E}C_{l,np}(E)\prob{E}$ equals the expected coalescence time of $2np$ $\pth{1 - \prob{N_{T-1} \ge  2np  \text{ or } N_{T-1} =0}}$-lazy random walks. By Lemma \ref{lm: lazy-coalescence-time}, we know that 
\begin{align*}
    \sum_{E}C_{l,np}(E)\prob{E} = O\pth{\frac{2np}{\pth{1 - \prob{N_{T-1} \ge  2np  \text{ or } N_{T-1} =0}}}}. 
\end{align*}

In addition, we have 
\begin{align*}
\prob{N_t\ge 2np, \, \text{or}\, N_t=0} & =  \prob{N_t\ge 2np} + \prob{N_t=0} \\
& = \prob{N_t\ge 2np} + \pth{1-p}^n  = \prob{N_t\ge 2np} + \exp\pth{-n\log\frac{1}{1-p}} \\
& \le \exp\pth{-\frac{1}{3}np} + \exp\pth{-n\log\frac{1}{1-p}} \le 2 \exp\pth{-\frac{1}{3}np},
\end{align*}
where the last inequality holds because $p\le \log \frac{1}{1-p}$ when $p\in [0, 1)$. 
So, it holds that 
\begin{align*}
    \sum_{E}C_{l,np}(E)\prob{E} = O\pth{\frac{2np}{\pth{1 - 2 \exp\pth{-\frac{1}{3}np}}}},  
\end{align*}
proving the theorem.

\end{proof}

\iflong
\section{Missing proofs and auxiliary results for Section \ref{subsec: general p and Adversary-prone}}
\label{app: general p and Adversary-prone}
%
%



\begin{proof}[\bf Proof of Lemma \ref{lm: new proof}]
For each round $\tau$, 
the following holds: 
\begin{itemize}
\item If no blocks are mined, then the lengths of the adversarial longest chains and the honest longest chains (longest chains kept by an honest node) are not changed. 
\item If both the honest and the corrupt nodes mine a block, then the lengths of the adversarial longest chains increases by 1, and length of the honest longest chains (longest chains kept by an honest node) increases by {\em at least} 1. 
To see the later, let's denote the length of the longest chains at the honest nodes at round $(\tau-1)$ by $\ell_{(\tau-1)}$ and the length of the longest chains at the honest nodes at round $\tau$ by $\ell_{\tau}$. By the selective relay rule 
specified right before Definition \ref{def: adversarial ad}, by the beginning of round $\tau$, every honest node has received a chain that is at least $\ell_{(\tau-1)}$. 
If the adversary does not release a prefix of an adversarial chain of length $>\ell_{(\tau-1)}$, the length of the longest chains kept by the honest nodes at round $\tau$ is $\ell_{\tau} = \ell_{(\tau-1)}+1$. Otherwise, due to the longest chain policy, it holds that $\ell_{\tau}>\ell_{(\tau-1)}+1$. 
\item If only corrupt nodes mine a block, then the adversary can grow the length of the adversarial longest chains by 1. 
The length of the longest chains at the honest nodes is unchanged. 
\item If only honest nodes mine a block, then the length of the honest longest chains increase by at least 1. The formal argument follows the same as the proof of the later part of the second bullet. The length of the longest chains at the corrupt nodes (adversary) is unchanged. 
\end{itemize}
Let $t^{\prime}\triangleq \max\{t^{\prime}: \calN(t^{\prime})=0~ \text{and } t^{\prime}\le t\}$. 
Let $t^{\prime} = t_0  < t_1 < \cdots < t_k$ be the round indices of the jumps 
of the random process ${\calN(\tau)}_{\tau=0}^{\infty}$, i.e., 
\begin{align*}
\calN(t_0)\not=\calN(t_0 + 1), \cdots, \calN(t_k-1)\not=\calN(t_k).      
\end{align*}
By the construction of ${\calN(\tau)}_{\tau=0}^{\infty}$, we know that from round $t^{\prime}+1$ to round $t$, the number of rounds in which only corrupt nodes mine a block is $\calN(t)$ larger than the number of rounds in which only honest nodes mine a block. 

Therefore, we know that as long as at round $t^{\prime}$, the length of the adversarial longest chain is no longer than the length of the honest longest chains, we can conclude that at the end of round $t$, the length of the adversarial longest chains is at most $\calN(t)$ blocks longer than the length of the honest longest chains. 
%
It remains to show the following is true: 
{\em ``At round $t^{\prime}$, the length of the adversarial longest chain is no longer than the length of the honest longest chain. ''}

Let $\calN(t^{\prime})$ be the $k$--th time starting from round 0 such that 
\begin{align*}
\calN(t') =0 ~ ~ \text{and} ~ ~ \calN(t'+1) =1. 
\end{align*}
If $k=1$, by the arguments in the above four bullets, we know  the claim holds. Let's assume this claim holds for general $r$. We next prove it holds for $r+1$. Let $t^{\prime\prime}$ be the $r$--th time starting from round 0 such that
\begin{align*}
\calN(t^{\prime\prime}) =0 ~ ~ \text{and} ~ ~ \calN(t^{\prime\prime}+1) =1. 
\end{align*}
By induction hypothesis, we know that at round $t^{\prime\prime}$, the length of the adversarial longest chain is no longer than the length of the honest longest chain. By the first part of the proof of Lemma \ref{lm: new proof}, we know at round $t^{\prime}$, the length of the adversarial longest chain is no longer than the length of the honest longest chain. Thus, the proof of the claim is complete. 
\end{proof}

The following lemma follows from Hoeffding's inequality. 
\begin{lemma}
\label{lm: quantify J} 
With probability at least $\pth{1- \exp\pth{-\frac{(p^*)^2M}{2}}}$, it holds that 
\begin{align}
\sum_{i=1}^M \indc{\calJ(m) \not =\calJ(m-1)} \ge \frac{1}{2}p^*M.      
\end{align}
\end{lemma}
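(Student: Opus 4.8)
The plan is to recognize $\sum_{m=1}^{M}\indc{\calJ(m)\neq\calJ(m-1)}$ as a sum of \iid Bernoulli random variables and then invoke the lower-tail (Chernoff/Hoeffding) bound. Write $X_m \triangleq \indc{\calJ(m)\neq\calJ(m-1)}$. By Definition~\ref{def: Coalescing opportunities couting}, $X_m=1$ exactly when the $m$-th nonempty round is ``pure'' in the sense that either only honest nodes or only corrupt nodes mined a block in that round; by the definitions of $p_{+1}$ and $p_{-1}$ this event has probability $p^{*} = p_{+1}+p_{-1}$.

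Next I would argue that $\sth{X_m}_{m\ge 1}$ is \iid $\Bern(p^{*})$. Block mining is \iid across rounds and independent of the adversary's strategy, so one can assign to each round $r$ a label in $\{\text{empty},\text{pure},\text{mixed}\}$ (``mixed'' meaning at least one honest and at least one corrupt node mines a block in round $r$), and these labels are \iid over $r$. Restricting this label sequence to its nonempty entries gives, by the standard fact that a sub-sequence of \iid variables indexed by the nonempty positions is again \iid with the conditional law, an \iid sequence in which ``pure'' appears with probability $p^{*}$. Since $X_m$ is precisely the indicator that the $m$-th nonempty round is labelled ``pure,'' we obtain $\sum_{m=1}^{M}X_m \sim \Binom(M,p^{*})$ with $\expect{\sum_{m=1}^{M}X_m}=p^{*}M$.

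Finally I would apply the lower-tail form of Hoeffding's inequality (the one-sided version of Theorem~\ref{def:hoeffding}) with deviation $\eps = p^{*}/2$, giving
\begin{align*}
\prob{\sum_{m=1}^{M}X_m < \tfrac{1}{2}p^{*}M} \;\le\; \prob{\sum_{m=1}^{M}X_m \le p^{*}M - \tfrac{1}{2}p^{*}M} \;\le\; \exp\pth{-2M\pth{\tfrac{p^{*}}{2}}^{2}} \;=\; \exp\pth{-\tfrac{1}{2}(p^{*})^{2}M},
\end{align*}
and taking complements yields the stated bound. The only non-routine point is the \iid assertion for $\sth{X_m}$ — once that is in hand the rest is a textbook concentration estimate — so I expect making precise the ``restrict to nonempty rounds'' step (and checking it is unaffected by the adversary's adaptivity, which is legitimate since the per-round mining labels are independent of the adversary's choices) to be the main thing requiring care.
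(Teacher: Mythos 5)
Your proof is correct and matches the paper's approach: the paper simply states that the lemma ``follows from Hoeffding's inequality,'' and your argument supplies exactly the details that are implicit there — identifying the summands as \iid $\Bern(p^{*})$ indicators over nonempty rounds (using that per-round mining outcomes are \iid and independent of the adversary) and applying the one-sided Hoeffding bound with deviation $p^{*}/2$.
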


%
%
Next we prove Lemma \ref{lm: the number of coalescing opportunities}. 
\begin{proof}[\bf Proof of Lemma \ref{lm: the number of coalescing opportunities}]
From \cite[Chapter 4.10]{hajek2015random} we know that $\sth{\calJ(t)}_{t=0}^{\infty}$ has a corresponding jump process (also referred to as the embedded chain) that describes, conditioning on state changes, how $\calJ(t)$ jumps among different states. We also know that this jump process is a simple random walk. 
%
Concretely, if $\calJ(t-1)\not=\calJ(t)$, we say a jump occurs at $t$. 
Let $T_r$ denote the number of rounds that elapses between the $r$-th and $r+1$-th jumps of $\calJ$.
Let $\tilde{\calJ}(r)$ denote the state after $r$ jumps. 
By definition, $\sth{\tilde{\calJ}(r)}_{r=0}^{\infty}$ is the jump process of $\sth{\calJ(t)}_{t=0}^{\infty}$. 
It is easy to see that $T_r$ is a geometric random variable with parameter $p^*$. 
Also, $T_1, T_2, \cdots$ are $\iid$ distributed. 
By \cite[Proposition 4.9]{hajek2015random}, we know 
\begin{align*}
\tilde{\calJ}(r) = 
\begin{cases}
0, ~ ~ &\text{if }r=0; \\
\tilde{\calJ}(r-1)+1, ~~ &\text{with probability } p_{+1}/p^*; \\
\tilde{\calJ}(r-1)-1, ~ ~ &\text{with probability } p_{-1}/p^*.  \\
\end{cases}
\end{align*}
Let $\delta_r  = \tilde{\calJ}(r) - \tilde{\calJ}(r-1)$ for any $r\ge 1$. It is easy to see that $\delta_r$ is a Bernoulli random variable supported on $\{-1, +1\}$ and $\prob{\delta_r=1}= p_{+1}/p^*$.  
For a given $K\ge 0$ jumps, by Hoeffding's inequality, we know that with probability at least $\pth{1-\exp\pth{-\frac{K(p_{+1}/p^*-p_{-1}/p^*)^2}{8}}}$, 
the following is true   
\begin{align*}
\sum_{r=1}^K \delta_r \ge  \frac{(p_{+1}/p^*-p_{-1}/p^*)}{2} K.     
\end{align*}
Setting $K = \frac{1}{2}p^*M$, we have 
\[
\sum_{r=1}^K \delta_r  =  \sum_{r=1}^{\frac{1}{2}p^*M}\pth{\tilde{\calJ}(r) - \tilde{\calJ}(r-1)} = \tilde{\calJ}(\frac{1}{2}p^*M) - \tilde{\calJ}(0) = \tilde{\calJ}\pth{\frac{1}{2}p^*M}. 
\]
In addition, from Lemma \ref{lm: quantify J}, we know that with probability at least $(1-\exp\pth{-((p^*)^2M)/2})$, it holds that 
\begin{align}
\sum_{i=1}^M \indc{\calJ(m) \not =\calJ(m-1)} \ge \frac{1}{2}p^*M.      
\end{align}
Thus, $\calJ(M) \ge \tilde{\calJ}\pth{\frac{1}{2}p^*M}$, proving the lemma.

\end{proof}

\begin{lemma}
\label{prop: zero advantage}
For any $t> 1$ round  such that $\calN(t) = 0$ and at least one block is mined, let $\text{NB}(t)$ be the number of blocks mined by the honest nodes in round $t$ and $\text{AB}(t)$ be the number of blocks mined by the corrupt nodes in round $t$. 
For $\tilde{t} \triangleq \min\{t^{\prime}:\,\, t^{\prime} \le t-1\}$ such that at least one block is mined, let $\text{AB}(\tilde{t})$ be the number of blocks mined by corrupt nodes in round $\tilde{t}$.
Then it is true that the number of longest chains at the end of round $t$ is at most $\pth{\text{NB}(t) +  \text{AB}(t)+\text{AB}(\tilde{t})}$. In particular, if $\calN(t-1)=0$, then the number of longest chains at the end of round $t$ is at most $\pth{\text{NB}(t) + \text{AB}(t)} $. Moreover, all of these longest chains end with blocks generated in round $t$. 
\end{lemma}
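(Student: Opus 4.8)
The plan is to combine Lemma~\ref{lm: new proof} with a short case analysis on the mining pattern of round $t$ (read off from the recursion defining $\calN$), followed by an ``each round-$t$ block sits atop a unique chain'' counting argument.

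First I would record what $\calN(t)=0$ buys us. Since $\calN(\cdot)\ge 0$, round $t$ cannot be a round in which only corrupt nodes mine, for that would force $\calN(t)=\calN(t-1)+1\ge 1$; hence at least one honest node mines in round $t$. From the recursion: if both honest and corrupt nodes mine in round $t$ then $\calN(t-1)=\calN(t)=0$; if only honest nodes mine then $\text{AB}(t)=0$ and $\calN(t-1)\in\{0,1\}$. Write $\ell$ for the length of the honest longest chains at the end of round $t$; by Lemma~\ref{lm: new proof} (applied at round $t$, using $\calN(t)=0$) every chain held by the adversary at the end of round $t$ has length at most $\ell$, so the ``longest chains at the end of round $t$'' all have length exactly $\ell$ and it suffices to count chains of length $\ell$ present at the end of round $t$.

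Next I would treat the case $\calN(t-1)=0$, which includes the case $\text{AB}(t)>0$. Applying Lemma~\ref{lm: new proof} one round earlier, no chain in existence at the start of round $t$ has length exceeding $\ell_{t-1}$, the honest longest-chain length at the end of round $t-1$. Moreover, the selective-relay rule guarantees that every honest node has received, by the start of round $t$, a chain of length $\ell_{t-1}$: a length-$\ell_{t-1}$ chain, whether honestly mined or revealed by the adversary, is present no later than round $t-1$ and is then relayed to every honest node within one further round. Hence every honest node that mines in round $t$ extends a length-$\ell_{t-1}$ chain, so $\ell=\ell_{t-1}+1$, and since no chain had length $\ell$ before round $t$, every length-$\ell$ chain at the end of round $t$ ends with a block mined in round $t$ --- this is the ``moreover'' clause. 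Because a block fixes its predecessor uniquely, distinct length-$\ell$ chains have distinct round-$t$ tip blocks, and there are only $\text{NB}(t)+\text{AB}(t)$ such blocks, which gives the sharper bound.

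Finally, for the case $\calN(t-1)=1$ (so only honest nodes mine in round $t$ and $\text{AB}(t)=0$), the adversary may be one block ahead at the start of round $t$, and this is where I expect the real work to be. The key claim to establish is that the chains strictly longer than the honest longest chain at the start of round $t$ are all adversary-held, all tipped by corrupt blocks mined in the single most recent nonempty round $\tilde t$ preceding $t$, and hence number at most $\text{AB}(\tilde t)$ --- i.e.\ the adversary cannot stockpile many over-length chains. Proving this rests on: Assumption~\ref{prop:adversary-advantage} (at most one block per chain per round, so the adversary's lead of one block must have been created within a single round); the fact that $\calN(t-1)=1$ together with $\calN(t)=0$ tightly constrains the nonempty rounds between the creation of that lead and round $t$; and, once more, the selective-relay rule, which rules out stray over-length honest chains and pins $\ell_{t-1}$ down. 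Granting this claim, split on whether some honest node that mines in round $t$ extends one of these length-$(\ell_{t-1}+1)$ chains: if so, $\ell=\ell_{t-1}+2$, every length-$\ell$ chain ends with a round-$t$ block, and there are at most $\text{NB}(t)$ of them; if not, $\ell=\ell_{t-1}+1$ and the length-$\ell$ chains are the $\le\text{NB}(t)$ honest round-$t$ extensions together with the $\le\text{AB}(\tilde t)$ adversary-held chains. In either sub-case the count is at most $\text{NB}(t)+\text{AB}(t)+\text{AB}(\tilde t)$, as claimed.
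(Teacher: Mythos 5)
Your proposal is correct and follows essentially the same route as the paper's proof: a case split on whether $\calN(t-1)=0$, with Lemma~\ref{lm: new proof} and the selective relay rule forcing all honest nodes to extend equal-length chains in the first case, and the observation that the adversary's one-block lead can be carried by at most $\text{AB}(\tilde t)$ chains (tipped by corrupt blocks of round $\tilde t$) in the second. Your treatment is if anything slightly more explicit than the paper's, e.g.\ in deriving $\calN(t-1)\in\{0,1\}$ from $\calN(t)=0$ and in the sub-case split showing when the ``moreover'' clause holds.
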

\begin{proof}
One of the following two cases hold.  \\
\noindent{Case 1:} Suppose that $\calN(t-1) = 0$.  
By Lemma \ref{lm: new proof}, at the beginning of round $t$, the longest chains received (including their own local chains) by each of the honest node are of the same length. Thus, the number of longest chains at the end of round $t$ is $\pth{\text{NB}(t)+\text{AB}(t)}$ -- the number of blocks generated during round $t$. 

\noindent{Case 2:} Suppose that $\calN(t-1) \not= 0$.
Recall that  $\tilde{t}\triangleq \max\{t^{\prime}: t^{\prime}\le t-1\}$ such that at least one block is generated during round $\tilde{t}$. By Definition \ref{def: adversarial ad} and the fact that $\calN(t)=0$, we know that $\calN(\tilde{t}) = 1$. From Lemma \ref{lm: new proof}, we know that, at the end of round $\tilde{t}$, the length of the adversarial longest chains of round $\tilde{t}$ is at most one block longer than the local chains at the honest nodes. The adversary can choose either to hide these longest adversarial chains to some/all honest parities or to release those chains to all of the honest nodes. The number of such adversarial longest chains is at most $\text{AB}(\tilde{t})$ -- the number of blocks mined by the corrupt nodes in round $\tilde{t}$. Hence the number of longest chains 
at the beginning of round $t$ is at most $\text{AN}(\tilde{t}) + \text{AB}(t) + \text{NB}(t)$. 
\end{proof}

\begin{lemma}
\label{lm: common longest chains}
Let $i_1$ and $i_2$, where $i_1\not=i_2$, be two arbitrary honest nodes. For any $(t-1)\ge 1$ such that at least one block is mined during round $t-1$, let $\calC_1(t)$ and $\calC_2(t)$ be the sets of longest chains received by honest nodes $i_1$ and $i_2$, respectively, at round $t$ before mining (including the chains sent by others at the end of round $t-1$ and forwarded at the beginning of round $t$). If $\calN(t-1) = 0$, it holds that 
\begin{align}
\label{eq: overlap longest chains}
\abth{\calC_1(t)\cup \calC_2(t)} \ge \text{NB}(t-1), 
\end{align}
and 
\begin{align}
\label{eq: upper bounds}
\max\{\abth{\calC_1(t)},\, \abth{\calC_2(t)}\}  ~ \le ~ \text{NB}(t-1) + \text{AB}(t-1) + \text{AB}(\widetilde{t-1}),  
\end{align}
where $\widetilde{t-1}: \max\{t^{\prime}: ~ t^{\prime} \le t-2\}$. 
Moreover, if $\calN(t-2) = 0$, then 
\begin{align}
\label{eq: upper bounds ideal}
\max\{\abth{\calC_1(t)},\, \abth{\calC_2(t)}\}  ~ \le ~ \text{NB}(t-1) + \text{AB}(t-1). 
\end{align}
\end{lemma}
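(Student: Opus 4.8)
The plan is to reduce both claims to the structure of the network at the end of round $t-1$ and then feed that structure into Lemma~\ref{lm: new proof} and Lemma~\ref{prop: zero advantage}. First I would record three facts about round $t-1$ under the hypothesis $\calN(t-1)=0$. (a) Since at least one block is mined in round $t-1$, it cannot be that \emph{only} corrupt nodes mined in that round, for that would force $\calN(t-1)=\calN(t-2)+1\ge 1$; hence $\text{NB}(t-1)\ge 1$. (b) By the selective relay rule, every honest node has, at the start of round $t-1$, adopted a longest received chain, and all these adopted chains have a common length; consequently every honest block mined in round $t-1$ sits at the common height $\ell_{t-1}$, where $\ell_{t-1}$ is the (common) length of the longest chains held by honest nodes at the end of round $t-1$. (c) By Lemma~\ref{lm: new proof} applied at round $t-1$, an adversarial longest chain at the end of round $t-1$ has length at most $\ell_{t-1}+\calN(t-1)=\ell_{t-1}$; hence $\ell_{t-1}$ is the global maximum chain length at the end of round $t-1$, every globally longest chain at that moment ends in a block mined during round $t-1$, and exactly $\text{NB}(t-1)$ of these globally longest chains end in an honest round-$(t-1)$ block.

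Next I would prove~\eqref{eq: overlap longest chains}. By (b), each of the $\text{NB}(t-1)\ge 1$ chains ending in an honest round-$(t-1)$ block has length $\ell_{t-1}$, and it is broadcast by its honest miner at the end of round $t-1$; since the adversary can neither delete nor modify honest broadcasts, both $i_1$ and $i_2$ receive all of them before mining in round $t$ (indeed already from the end-of-round-$(t-1)$ broadcasts, before the relay step). By (c), $\ell_{t-1}$ is also the maximum length of any chain received by an honest node, so each of these chains is a longest chain in the view of both $i_1$ and $i_2$, i.e.\ it lies in $\calC_1(t)\cap\calC_2(t)$. Distinct last blocks give distinct chains, so $\abth{\calC_1(t)\cap\calC_2(t)}\ge\text{NB}(t-1)$, which in particular yields the stated bound~\eqref{eq: overlap longest chains}; it is this stronger intersection form that is used in the proof of Theorem~\ref{thm: general p adversarial}.

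For~\eqref{eq: upper bounds} and~\eqref{eq: upper bounds ideal} I would argue that the relay step at the start of round $t$ creates no new chains (a relaying node only re-broadcasts a chain it already received), so $\calC_1(t)\cup\calC_2(t)$ is contained in the set of chains present at the end of round $t-1$; and by (c) every chain in $\calC_1(t)$ or $\calC_2(t)$ has length exactly $\ell_{t-1}$, hence is a globally longest chain at the end of round $t-1$. Therefore $\max\{\abth{\calC_1(t)},\abth{\calC_2(t)}\}$ is at most the number of globally longest chains at the end of round $t-1$. Now apply Lemma~\ref{prop: zero advantage} with its ``$t$'' instantiated as our round $t-1$ (legitimate, since $\calN(t-1)=0$ and round $t-1$ contains a mined block): this bounds that number by $\text{NB}(t-1)+\text{AB}(t-1)+\text{AB}(\widetilde{t-1})$, giving~\eqref{eq: upper bounds}; and its ``in particular'' clause, which applies when $\calN(t-2)=0$, improves the bound to $\text{NB}(t-1)+\text{AB}(t-1)$, giving~\eqref{eq: upper bounds ideal}.

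The main obstacle is fact (b): one has to show that the selective relay rule together with $\calN(t-1)=0$ really forces all honest nodes to mine on chains of the same length in round $t-1$, so that the $\text{NB}(t-1)$ honest round-$(t-1)$ blocks all land at the single height $\ell_{t-1}$ rather than being split across two consecutive heights (which would break the membership claim $\calC_1(t)\cap\calC_2(t)\supseteq\{\text{honest round-}(t-1)\text{ chains}\}$). This is exactly the height bookkeeping performed in the ``selective relay'' bullets of the proof of Lemma~\ref{lm: new proof}, and it involves splitting on whether $\calN(t-2)\in\{0,1\}$ and on which chains the adversary may have revealed before round $t-1$; I would import that case analysis rather than redo it. Everything else above is a short deduction from Lemmas~\ref{lm: new proof} and~\ref{prop: zero advantage} plus the un-suppressibility of honest broadcasts.
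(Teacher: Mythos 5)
Your proposal follows the same route as the paper's proof: the upper bounds come from Lemma~\ref{prop: zero advantage} applied at round $t-1$ (with its ``in particular'' clause handling the $\calN(t-2)=0$ case), and the lower bound comes from the fact that every honest round-$(t-1)$ miner broadcasts its chain, which the adversary cannot suppress, so both $i_1$ and $i_2$ receive all $\text{NB}(t-1)$ of them. Your version is if anything more careful than the paper's: you make explicit the height bookkeeping needed for the lower bound (that under $\calN(t-1)=0$ all honest round-$(t-1)$ blocks sit at the common maximal height $\ell_{t-1}$, so their chains really are \emph{longest} chains --- a point the paper leaves implicit), and you correctly observe that what Theorem~\ref{thm: general p adversarial} actually invokes, and what this argument actually delivers, is the stronger intersection bound $\abth{\calC_1(t)\cap\calC_2(t)}\ge \text{NB}(t-1)$ rather than the union bound as stated.
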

\begin{proof}
Since $\calN(t-1)=0$, 
both $\calC_1(t)$ and $\calC_2(t)$ are subsets of the longest chains at the end of round $(t-1)$. 
In addition, from Lemma \ref{prop: zero advantage}, we know that the total number of longest chains at the end of round $t-1$ in the system is at most $\text{NB}(t-1) + \text{AB}(t-1) + \text{AB}(\widetilde{t-1})$, proving Eq.\eqref{eq: upper bounds}. As each of the honest nodes who successfully mines a block during round $(t-1)$ multi-casts its local chain to others, both $i_1$ and $i_2$ will receive all of the longest chains that end with an honest block. That is, 
\begin{align*}
\abth{\calC_1(t)\cup \calC_2(t)} \ge \text{NB}(t-1),     
\end{align*}
proving Eq.\eqref{eq: overlap longest chains}.
Since $\calN(t-1)=0$, it is easy to see that $\text{NB}(t-1)\ge 1$. 

When $\calN(t-2)=0$, it holds that each of the honest node receives at least one longest chain up to the beginning of round $t-1$. Thus, the total number longest chains (including both the ones extended by the honest nodes and the ones extended by the corrupt nodes) is $\text{AB}(t-1) + \text{NB}(t-1)$. Since a corrupt node can choose to not to multi-cast its local chain, it holds that 
\begin{align*}
\max\{\abth{\calC_1(t)},\, \abth{\calC_2(t)}\}  ~ \le ~ \text{NB}(t-1) + \text{AB}(t-1).     
\end{align*}

Note that as a corrupt node can arbitrarily choose, independently of others, which subset of honest nodes to send its local chain, the honest nodes $i_1$ and $i_2$ can be two different subsets of the longest chains at the end of round $t-1$. 

\end{proof}

\section{VDF-Scheme Proofs}
The above VDF-based scheme presented in Section~\ref{sec:vdf-protocol}
gives us 
the following properties which will be
crucial in obtaining our bounds for the case when adversaries are 
present. We present the proofs omitted from 
Section~\ref{sec:vdf-protocol} in this section.

\begin{lemma}\label{lem:vdf-seq}
For every round $t \geq 0$,
the local chain of every honest node can contain at most
one block per VDF output, with all but negligible probability
in $\secp$.
\end{lemma}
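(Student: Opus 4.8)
The plan is to prove the lemma by induction on the round index $t$, with the only probabilistic ingredient being a union bound over the soundness of the VDF.

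For $t=0$ every honest node's local chain is the genesis block, which stores no VDF output, so the claim holds vacuously. Suppose it holds at the end of round $t-1$. In round $t$ an honest node $i$ either keeps its round-$(t-1)$ chain, in which case the inductive hypothesis applies verbatim, or it switches to a strictly longer chain $C$ received over the network and possibly appends one block it mines in round $t$. For the received chain $C$: by the protocol $i$ adopts $C$ only after running the verification procedure of the VDF-based scheme of \cref{sec:vdf-protocol}, and that procedure both forces the VDF outputs stored in $C$ to be tagged with strictly increasing round numbers (verification step~\ref{step:greater}, together with step~\ref{step:correct-vdf-mined}) and explicitly discards $C$ if two of its blocks are mined with the same VDF output. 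Hence, conditioned on the soundness event described below, every chain an honest node adopts contains at most one block per stored VDF output value. If $i$ additionally appends the block it mines in round $t$, that block carries the round-$(t-1)$ VDF output, whose round tag strictly exceeds the tags of all blocks of $C$ (those were broadcast before round $t$, hence mined in an earlier round), so the property is preserved. This is the deterministic core of the argument.

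What remains is that the tagging argument can only be subverted if the adversary exhibits a block whose claimed VDF output $y'$ together with its proof $\pi'$ passes verification although $y'$ is not the value that $\eval$ actually produces on the relevant input; absent such an event, distinct rounds genuinely carry distinct stored VDF outputs, because the round number is bound into the VDF computation for that round, so a repeated stored output forces a repeated round, which verification rejects. By the soundness property in \cref{def:vdfs}, for any single triple $(x,y',\pi')$ produced by a $\poly(\dif,\secp)$-time algorithm this bad event occurs with probability at most $\negl(\secp)$. Since $\calA$ is PPT, the execution lasts $\poly(\secp)$ rounds and $\calA$ produces at most $\poly(\secp)$ distinct blocks; a union bound over all of them, and over the $\poly(\secp)$ rounds of the induction, bounds by $\negl(\secp)$ the probability that the statement fails for any honest node at any round.

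I expect the only real work to be bookkeeping: verifying that verification steps~\ref{step:greater} and~\ref{step:correct-vdf-mined} together with the duplicate-output discard rule leave no loophole by which a verified chain could reuse a VDF output once VDF soundness is granted, and being careful that ``one block per VDF output'' refers to one block per stored output \emph{value}, which coincides with one block per round precisely because the round number is bound into the VDF computation. There is no analytic difficulty here; it is a short induction plus a union bound over polynomially many events.
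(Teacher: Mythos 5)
Your proof is correct and takes essentially the same route as the paper's: the deterministic core is that verification step~\ref{step:greater} forces the stored VDF outputs to come from strictly increasing rounds, so a repeated output would force a repeated round, and the probabilistic ingredient is that distinct rounds yield distinct (unforgeable) VDF outputs except with negligible probability. Your version merely makes explicit the induction over rounds, the soundness reduction, and the union bound over polynomially many blocks that the paper's two-sentence proof leaves implicit.
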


\begin{proof}
\cref{step:greater} ensures that the VDF outputs stored in the blocks
are computed in strictly increasing order. This means that no 
chain verified by an honest node can contain
two blocks which contain the same VDF output. Finally, with all but
negligible probability in $\secp$ (by Definition~\ref{def:vdfs-short}), 
no two VDF outputs will be 
equal. The lemma follows. 
\end{proof}

\begin{lemma}\label{lem:adversary-two-blocks}
The adversary can add at most one block to the same chain (that
will be verified by honest nodes)
during any round $r \geq 0$, with all but negligible probability
in $\secp$. In other words,
let $t$ be the current round, any adversarial
chain verified by honest nodes cannot have length greater than
$t$.
\end{lemma}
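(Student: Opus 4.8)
The plan is to bound the number of \emph{valid} VDF outputs that can possibly exist by the end of round $t$, and then to use the two structural checks (\ref{step:greater}) and (\ref{step:correct-vdf-mined}) performed at verification to turn that bound into a bound on chain length.

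First I would establish, by induction on $j$ together with a union bound over the (polynomially many) rounds, that with all but negligible probability in $\secp$ no PPT algorithm---in particular the adversary, even equipped with $\poly(\secp)$ parallel processors---can hold a VDF output tagged with round $j$ before the end of round $j$; consequently at most one fresh VDF output, namely $y_j$, becomes available each round. The base case is that $y_0=\func(0)$ requires $\dif$ sequential steps and $\dif$ equals one round. For the inductive step, any output of round $j+1$ has the form $\func(y_j)\,|\,(j+1)$, so computing it requires the round-$j$ output $y_j$ as input; by the induction hypothesis $y_j$ is unavailable before the end of round $j$, and then the sequentiality property of the VDF (Definition~\ref{def:vdfs}) gives that the round-$(j+1)$ output cannot be produced before the end of round $j+1$ except with negligible probability, while soundness of the VDF rules out passing verification with anything other than the true $y_{j+1}$. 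The chaining of inputs is what makes this genuinely sequential; the reduction to the VDF sequentiality game has to be set up so that a single per-round failure probability, multiplied by the number of rounds, stays negligible.

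Next I would analyze an arbitrary chain $C=B_0B_1\cdots B_\ell$ that passes an honest node's verification at round $t$. By rule~6 of the scheme the node discards $C$ if two of its blocks carry the same $y_j$, so---using the previous paragraph---the VDF outputs stored in $B_1,\dots,B_\ell$ are $\ell$ distinct genuine outputs; let $r_1,\dots,r_\ell$ be the rounds that produced them. Check~(\ref{step:greater}) forces $r_1<r_2<\cdots<r_\ell$, and check~(\ref{step:correct-vdf-mined}) forces $r_k\ge k-1$, so these are legitimate rounds. Crucially, since block $B_k$ is mined with the round-$r_k$ output and, by the first step, that output does not exist before the end of round $r_k$, any block that an honest node can have seen by round $t$ must satisfy $r_k\le t-1$. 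Hence $r_1,\dots,r_\ell$ are $\ell$ distinct integers in $\{0,1,\dots,t-1\}$, which forces $\ell\le t$; that is, the chain has at most $t$ mined blocks and therefore length at most $t$. (The analogous ``one block per VDF output'' statement for the honest nodes' own local chains is already Lemma~\ref{lem:vdf-seq}, so no separate argument is needed there.)

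The step I expect to be the main obstacle is the first one: carefully ruling out that the adversary shortcuts the chained VDF computation---e.g.\ by precomputing $\func$ on guessed values of earlier outputs, or by exploiting the round tag---and packaging the sequentiality and soundness failures of the VDF into a clean union bound over $\poly(\secp)$ rounds so that the overall ``all but negligible probability'' claim goes through. Once the fact ``at most one usable VDF output per round, all of them from past rounds'' is in hand, the remaining two steps are essentially a counting argument.
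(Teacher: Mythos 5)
Your proposal is correct and follows essentially the same route as the paper: the paper splits the argument into Lemma~\ref{lem:vdf-seq} (each VDF output appears at most once in a verified chain), Lemma~\ref{lem:future} (your inductive sequentiality step showing the round-$j$ output is unavailable before the end of round $j$), and the final counting of at most one usable output per past round, which the paper carries out in the proof of Theorem~\ref{thm:property-vdf}. Your write-up simply folds all three pieces into one self-contained argument — and is in fact slightly more careful than the paper's one-sentence proof of this particular lemma, which silently relies on the "no future outputs" fact proved only in Lemma~\ref{lem:future}.
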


\begin{proof}
By Lemma~\ref{lem:vdf-seq}, no adversary can add 
two blocks with the same VDF output to the same chain. 
Thus, in any adversarial chain verified by honest nodes, 
the adversary can add at most $1$ block per round, and the 
chain will have length at most $t$ (where $t$ is the 
current round). 
\end{proof}

\begin{lemma}\label{lem:future}
Let $t$ be the current round. No node can add a block 
containing a VDF output, $o_{t'}$, 
from a round $t' \geq t$. 
\end{lemma}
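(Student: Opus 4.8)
The plan is to argue that the VDF output $o_{t'}$ \emph{for round $t'$} simply does not exist yet at round $t$ when $t' \ge t$, so no node---honest or corrupt---can place it inside a block that survives verification. First I would make the recursive structure explicit: writing $o_0 = \func(0)$ and, for $j \ge 1$, $o_j = \func(o_{j-1}\mid j)$ (the round index $j$ being part of the input, exactly as in the pseudocode), so that the sequence $o_0, o_1, o_2, \dots$ is a chain in which each term is obtained from the previous one by a single evaluation of $\func$. The honest verification rule (\cref{step:greater} and \cref{step:correct-vdf-mined}) together with the embedded round index means an honest node accepts a block as ``carrying the round-$t'$ output'' only if the value it carries equals this $o_{t'}$; a value that merely claims to be $o_{t'}$ but is not is rejected by the soundness property of the VDF (\cref{def:vdfs}) except with probability $\negl(\secp)$.

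Next I would prove by induction on $j$ the invariant: \emph{conditioned on the sequentiality of the VDF holding, no algorithm running in total time $\poly(\dif,\secp)$ with $\poly(\secp)$ parallel processors can output $o_j$ before $j$ full rounds have elapsed since the protocol start (round $0$), except with probability $\negl(\secp)$.} The base case $j=0$ is immediate. For the inductive step, suppose $o_j$ were produced before the end of round $j$; since $o_j = \func(o_{j-1}\mid j)$ and, by the induction hypothesis, $o_{j-1}$ is not available before the end of round $j-1$, the last evaluation $\func(o_{j-1}\mid j)$ must be completed within the single round $j$, i.e., within $\dif$ sequential steps on $\poly(\secp)$ processors. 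But the input $o_{j-1}\mid j$ is fresh and unpredictable (it contains the previous VDF output, which---modelling $\func$ via its random-oracle/VDF idealization---is essentially uniform and was itself only just revealed), so this contradicts the $(\poly(\secp),\dif)$-sequentiality of the VDF except with negligible probability. A union bound over the polynomially many rounds keeps the total error $\negl(\secp)$. Since the difficulty $\dif$ is calibrated to exactly one round, the invariant says that at round $t$ only $o_0,\dots,o_{t-1}$ (and, at the very end of round $t$, possibly $o_t$) can have been computed; in particular $o_{t'}$ for $t' \ge t$ is unavailable.

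Finally I would combine the two pieces: a block inserted at round $t$ that purports to contain the round-$t'$ VDF output for some $t' \ge t$ either contains the true $o_{t'}$---impossible by the invariant except with negligible probability---or contains a forgery, which is rejected on verification by VDF soundness except with negligible probability. Either way, no node can add such a block with all but negligible probability in $\secp$, which is the claim; this also dovetails with \cref{lem:adversary-two-blocks}, since a chain of length $t$ at round $t$ has already ``used up'' the outputs $o_0,\dots,o_{t-1}$.

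I expect the main obstacle to be the inductive step's reduction to sequentiality: the sequentiality guarantee in \cref{def:vdfs} is phrased for a \emph{uniformly random} challenge input $x$, whereas here the input to the $j$-th VDF evaluation is the deterministic string $o_{j-1}\mid j$. Bridging this gap cleanly requires arguing that the adversary's view just before round $j$ contains essentially no information about $o_{j-1}$ beyond what an honest computation reveals only at the end of round $j-1$---which is where the random-oracle-style idealization of $\func$ (or an explicit assumption that $\func$'s outputs are pseudorandom) has to be invoked---and then building a hybrid so that the reduction can plant the sequentiality challenge at round $j$ while simulating rounds $0,\dots,j-1$ faithfully. Getting the quantifiers right (the adversary is adaptive and chooses when to try to ``jump ahead'') while keeping the union-bounded error negligible is the delicate part; the rest is bookkeeping.
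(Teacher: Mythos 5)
Your proposal is correct and follows essentially the same route as the paper: an induction over rounds in which the sequentiality of the VDF (with difficulty calibrated to one round) shows that the round-$j$ output cannot be available before round $j$ ends, so no output $o_{t'}$ with $t'\ge t$ exists at round $t$. Your version is in fact more careful than the paper's own proof---explicitly adding the soundness argument that forged outputs are rejected, and flagging the genuine subtlety that sequentiality is stated for a uniformly random challenge while the chained inputs here are deterministic, a gap the paper's proof silently elides.
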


\begin{proof}
We prove this via induction. In the base case when the chain only 
contains the genesis block, the adversary has access only to the genesis
block and no other blocks. In this case, the adversary could not have 
mined any other blocks because it did not have time to obtain the VDF output
for the next block, with all but negligible probability in $\secp$.
Let our induction hypothesis be that the 
adversary cannot use $d_{t}$ to mine a block in round $t$.
We assume this is true for the $t$-th round and prove this
true for the $(t+1)$-st round. 

By our induction hypothesis, the adversary could not have computed any blocks
for the $t$-th round using $d_t$. This means that the adversary
obtained $d_t$ at the beginning of the $(t+1)$-st round.
Then, suppose the adversary computes a block 
for the $(t + 1)$-st round. By~\cref{item:difficulty} of
Definition~\ref{def:vdfs}, the adversary could not have computed 
the output in time less than the
duration of a round,\footnote{Recall we set the
difficulty of the VDF to be the duration of a round.}
with all but negligible
probability in $\secp$. Thus, the adversary could not have mined
any blocks using $d_{t+1}$ until round $t+2$. Hence, no adversary
can use $d_{t+1}$ in mining any blocks during any round $t'' \leq
t+1$.
\end{proof}

\begin{proof}[Proof of~\cref{thm:property-vdf}]
Lemma~\ref{lem:future} states that the adversary cannot
mine a block with VDF output, $o_{t'}$,
from a round $t' > t$ where
$t$ is the current round. Thus, during round $t$, an
adversary can only use VDF outputs, $o_1, \dots, o_{t-1}$. Then, Lemma~\ref{lem:vdf-seq} and
Lemma~\ref{lem:adversary-two-blocks} ensure that no 
two blocks in a chain can contain the same VDF output. Then, the length of any chain accepted by an honest
node has length at most $t$ since it can contain at most
one block using each of the VDF outputs, $o_1, \dots, o_{t-1}$ (plus the genesis block).
\end{proof}
\fi

\end{document}